\def\<{\langle}
\def\>{\rangle}
\def\Re{\text{Re}}
\newcommand{\C}{\mathbb{C}}
\newcommand{\R}{\mathbb{R}}
\DeclareMathOperator{\poly}{poly}
\newcommand{\sham}{{\sc $\mathcal{S}$-Hamiltonian}}
\newcommand{\spham}{{\sc $\mathcal{S}^+$-Hamiltonian}}
\newcommand{\lham}{{\sc Local Hamiltonian}}
\newtheorem*{rep@theorem}{\rep@title}
\newcommand{\newreptheorem}[2]{%
\newenvironment{rep#1}[1]{%
 \def\rep@title{#2 \ref{##1} (restated)}%
 \begin{rep@theorem}}%
 {\end{rep@theorem}}}
\newtheorem{theorem}{Theorem}
\newtheorem{lemma}{Lemma}
\newtheorem{definition}{Definition}
\tikzstyle{qubit}=[circle,draw,fill,thick,
\tikzstyle{mqubit}=[circle,draw,fill=white,thick,
\tikzstyle{heavy}=[ultra thick]
\begin{document}
\title{The complexity of antiferromagnetic interactions and 2D lattices}
\author{Stephen Piddock\footnote{\url{stephen.piddock@bristol.ac.uk}}\ \ and Ashley Montanaro\\[8pt]
{\small Department of Computer Science, University of Bristol, UK}
}

\maketitle
\begin{abstract}
Estimation of the minimum eigenvalue of a quantum Hamiltonian can be formalised as the Local Hamiltonian problem. We study the natural special case of the Local Hamiltonian problem where the same 2-local interaction, with differing weights, is applied across each pair of qubits. First we consider antiferromagnetic/ferromagnetic interactions, where the weights of the terms in the Hamiltonian are restricted to all be of the same sign. We show that for symmetric 2-local interactions with no 1-local part, the problem is either QMA-complete or in StoqMA. In particular the antiferromagnetic Heisenberg and antiferromagnetic XY interactions are shown to be QMA-complete. We also prove StoqMA-completeness of the antiferromagnetic transverse field Ising model. Second, we study the Local Hamiltonian problem under the restriction that the interaction terms can only be chosen to lie on a particular graph. We prove that nearly all of the QMA-complete 2-local interactions remain QMA-complete when restricted to a 2D square lattice. Finally we consider both restrictions at the same time and discover that, with the exception of the antiferromagnetic Heisenberg interaction, all of the interactions which are QMA-complete with positive coefficients remain QMA-complete when restricted to a 2D triangular lattice.
\end{abstract}

\section{Introduction}
Calculation of the ground-state energy of quantum Hamiltonians with interactions obeying locality constraints is a fundamental problem in physics. This is encapsulated within quantum information theory as the \lham\ problem. A Hamiltonian $H$ on $n$ qubits (a Hermitian matrix acting on $(\C^2)^{\otimes n}$) is said to be $k$-local if it can be written as $H = \sum_j H_j$, where each interaction term $H_j$ acts non-trivially on at most $k$ qubits. Let $\lambda(H)$ be the ground-state energy (the lowest eigenvalue) of $H$, and assume that $\|H_j\| \leqslant \poly(n)$ for all $j$. Then the \lham\ problem is to determine whether $\lambda(H) \leqslant a$, or $\lambda(H) \geqslant b$, for some $a$, $b$ such that $b-a \geqslant 1/\poly(n)$.

It was shown by Kitaev~\cite{Kitaev2002} that, even for $k=5$, \lham\ is QMA-complete. QMA is a complexity class which is the quantum analogue of the classical complexity class NP~\cite{papadimitriou94}. A problem is said to be ``complete'' for a complexity class if it is contained within that class, and any other problem in that class can efficiently be reduced to it; if a problem is QMA-complete, it is likely that there is no efficient quantum (or classical) algorithm to solve it. Kempe, Kitaev and Regev~\cite{Kempe2004} later showed that the \lham\ problem remains QMA-complete even for $k=2$, whereas the special case where $k=1$ can easily be solved efficiently. The Hamiltonians occurring in this hardness proof are somewhat artificial. A number of subsequent works have therefore attempted to show that {\sc Local Hamiltonian} remains QMA-complete, even with more physically realistic restrictions on the interactions~\cite{Oliveira2005,biamonte08,Schuch2009,schuch11,cubitt14,childs14}. This simultaneously makes the theory more useful for applications and provides insight into the crucial features of physical theories that determine their complexity. For example, it allows one to ask, and hopefully answer, questions of the form ``is the Heisenberg model on a square lattice more complex than the Ising model on a triangular lattice?''.

A particularly natural restriction to consider is to restrict the types of interactions allowed. Let $\mathcal{S}$ be a set of allowed interaction terms on at most 2 qubits. The \sham\ problem is a restriction of \lham\ where $k=2$ and the Hamiltonian $H=\sum \alpha_i H_i$ can be written as a linear combination of terms $H_i \in \mathcal{S}$, and where each $\alpha_i$ is a positive or negative real weight such that $|\alpha_i| \leqslant \poly(n)$. For example, the general Ising model, where $H = \sum_{i,j} \alpha_{ij} Z_i Z_j$, corresponds to $\mathcal{S} = \{ZZ\}$. Similarly, the (general) Heisenberg model is $\{XX+YY+ZZ\}$; the XY model is $\{XX+YY\}$; the Ising model with transverse magnetic field is $\{ZZ,X\}$. Note that $X$, $Y$, $Z$ are the Pauli matrices and we omit tensor product symbols for readability.

A complexity classification of the \sham\ problem was obtained in~\cite{cubitt14}. This can be stated as follows:

\begin{theorem}[Cubitt and Montanaro~\cite{cubitt14}, Bravyi and Hastings~\cite{Bravyi2014a}]
\label{thm:sham}
Let $\mathcal{S}$ be an arbitrary fixed subset of Hermitian matrices on at most 2 qubits. Then:
\begin{itemize}
\item If every matrix in $\mathcal{S}$ is 1-local, \sham\ is in P;
\item Otherwise, if there exists $U \in SU(2)$ such that $U$ diagonalises all 1-qubit matrices in $\mathcal{S}$, and $U^{\otimes 2}$ diagonalises all 2-qubit matrices in $\mathcal{S}$, then \sham\ is NP-complete;
\item Otherwise, if there exists $U \in SU(2)$ such that, for each 2-qubit matrix $H_i \in \mathcal{S}$, $U^{\otimes 2} H_i (U^{\dag})^{\otimes 2} = \alpha_i Z^{\otimes 2} + A_iI + IB_i$, where $\alpha_i \in \R$ and $A_i$, $B_i$ are arbitrary single-qubit Hermitian matrices, then \sham\ is StoqMA-complete;
\item Otherwise, \sham\ is QMA-complete.
\end{itemize}
\end{theorem}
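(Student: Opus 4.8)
The plan is to treat all four cases separately, in each case proving both membership in the stated class and hardness for it; essentially all of the content sits in the fourth case.

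\emph{Containments.} If every matrix in $\mathcal{S}$ is $1$-local, then grouping the terms acting on each qubit expresses $H$ as a direct sum $\bigoplus_j h_j$ of $2\times 2$ Hermitian matrices, so $\lambda(H)=\sum_j\lambda_{\min}(h_j)$ is computed in polynomial time. If $U$ is as in the second case, conjugating $H$ by $U^{\otimes n}$ makes it diagonal in the computational basis, so $\lambda(H)=\min_{z\in\{0,1\}^n}\langle z|H|z\rangle$ and a minimising string $z$ is an NP witness. If $U$ is as in the third case, then after conjugating by $U^{\otimes n}$ every $2$-qubit term reads $\alpha_i Z^{\otimes 2}+A_iI+IB_i$; I would re-group $H$ as one $Z_pZ_q$ term per pair plus one combined single-qubit matrix $C_p$ per qubit, then apply a rotation $e^{i\theta_p Z}$ on each qubit $p$ to kill the $Y$-component of $C_p$ (which leaves every $Z_pZ_q$ term untouched) and finally conjugate by $Z$ on those qubits where the residual $X$-coefficient is positive; the resulting Hamiltonian is term-wise stoquastic, so the problem lies in StoqMA. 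In the fourth case, QMA-containment follows from Kitaev's verification procedure since $H$ is $2$-local.

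\emph{NP- and StoqMA-hardness.} Suppose $\mathcal{S}$ contains a non-$1$-local matrix and falls in the second case. After the basis change it supplies, on every pair and up to a nonzero scalar together with some induced $1$-local terms, a $Z_pZ_q$ coupling of arbitrary real weight; since the weights are unrestricted one can cancel the induced fields and realise an arbitrary quadratic form $\sum_{p<q}J_{pq}Z_pZ_q$, so deciding $\lambda(H)$ encodes an NP-hard boolean quadratic minimisation (e.g.\ \textsc{Max-Clique}), with an integer promise gap. If instead $\mathcal{S}$ falls in the third case but not the second, then in the basis where the terms read $\alpha_iZ^{\otimes 2}+A_iI+IB_i$ some single-qubit part is not proportional to $Z$; after the rotation used above this gives an $X_p$ term together with $Z_pZ_q$, i.e.\ a weighted transverse field Ising model, whose StoqMA-hardness is a theorem of Bravyi and Hastings.

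\emph{QMA-hardness.} This is the main step, which I would handle by Hamiltonian simulation: show that any $\mathcal{S}$ escaping the first three cases is \emph{universal}, i.e.\ that a sequence of perturbative-gadget reductions lets it simulate, in the low-energy sense and with $1/\poly$ error, a family of $2$-local Hamiltonians already known to be QMA-complete (for instance those with arbitrary Pauli interaction terms). Recording the interaction part of each $2$-qubit matrix by a real $3\times 3$ matrix $M$, on which single-qubit unitaries act by $M\mapsto RMR^{T}$ with $R\in SO(3)$, one verifies that cases $1$--$3$ are precisely the situation where the $M_i$ are simultaneously symmetric, of rank at most one, and share an eigenaxis (the finer distinctions among them being fixed by the $1$-local parts). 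So it remains to treat (a)\ a single interaction whose $M$ is not of this form, split according to its canonical form (three equal eigenvalues: Heisenberg; two equal: XXZ-type, including the XY model; all distinct: fully anisotropic; or a nonzero antisymmetric part), and (b)\ several interactions, each of Ising type on its own but without a common axis. For (a), for each canonical form I would build a gadget that puts the interaction with a large weight on ancilla qubits so that second- or third-order perturbation theory yields an effective interaction strictly ``closer'' to a known-universal one, and iterate; for (b) I would combine a $Z_pZ_q$-type and a transverse-axis interaction through a gadget to manufacture a non-Ising interaction, reducing to (a).

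\emph{Main obstacle.} The difficulty is concentrated in the gadget arguments for (a) and (b): the enumeration of canonical forms of $M$ is delicate, and for each one must produce an explicit gadget whose effective Hamiltonian is provably more universal while controlling both the perturbation-theory error and the growth of the weights, so that the $1/\poly$ promise gap survives the composition of polynomially many simulation steps.
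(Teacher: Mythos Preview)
The paper does not contain a proof of this theorem. Theorem~\ref{thm:sham} is stated as prior work, with the classification attributed to Cubitt and Montanaro~\cite{cubitt14} and the sharpening of the third case to StoqMA-completeness attributed to Bravyi and Hastings~\cite{Bravyi2014a}. The paper invokes it as background and then builds on it to prove its own results (Theorems~\ref{thm:positive}, \ref{thm:square}, \ref{thm:triangle}); there is no proof here to compare your proposal against.

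That said, your outline is broadly faithful to the structure of the original Cubitt--Montanaro argument: the containments are handled as you describe, and the QMA-hardness indeed proceeds by a case analysis on the canonical form of the $3\times 3$ Pauli-coefficient matrix $M$, with perturbative gadgets reducing each non-Ising canonical form to a known QMA-complete family. Your identification of the main obstacle is accurate: the bulk of~\cite{cubitt14} is precisely the exhaustive and rather intricate gadget construction for each canonical form, and your sketch correctly flags but does not supply this. One point to be careful about in your StoqMA-containment argument: the per-qubit rotations $e^{i\theta_p Z}$ you propose do leave $Z_pZ_q$ invariant, but you must also check they do not reintroduce non-stoquastic terms elsewhere, and the sign-flip by conjugating with $Z$ on selected qubits will flip the signs of the $Z_pZ_q$ couplings incident to those qubits, which is harmless for stoquasticity but worth stating.
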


The third case in Theorem \ref{thm:sham} was originally shown to be complete for a complexity class corresponding to the transverse Ising model in~\cite{cubitt14}; this was later sharpened to StoqMA-completeness by Bravyi and Hastings~\cite{Bravyi2014a}. The \lham\ problem can be seen as a quantum generalisation of classical constraint satisfaction problems, with each term in the Hamiltonian corresponding to a constraint. From this perspective, Theorem \ref{thm:sham} is a quantum analogue of a classical dichotomy theorem of Schaefer~\cite{schaefer78} classifying the complexity of constraint satisfaction problems in terms of the types of allowed constraints.

Each of the above complexity classes (P, NP, StoqMA, QMA) corresponds to a computational model. P is polynomial-time classical computation (decision problems that can be solved efficiently by a classical computer). NP (``nondeterministic polynomial-time'') corresponds to decision problems whose solutions can be checked efficiently. StoqMA is a complexity class introduced by Bravyi, Bessen and Terhal~\cite{bravyi06a} whose definition as a computational model is somewhat technical, but which corresponds to the special case of the \lham\ problem where the matrices $H_j$ have non-positive off-diagonal elements (known as ``stoquastic''). Finally, QMA is the complexity class corresponding to decision problems whose solutions can be checked in polynomial time by a {\em quantum} computer. We have P $\subseteq$ NP $\subseteq$ StoqMA $\subseteq$ QMA and it is believed that each of these inclusions is strict. Some natural examples of interactions falling into each of these classes: the Ising model is NP-complete~\cite{barahona82}; the transverse Ising model is StoqMA-complete~\cite{Bravyi2014a}; and the Heisenberg and XY models are QMA-complete~\cite{cubitt14}.

Although Theorem~\ref{thm:sham} gives a precise and complete classification of the complexity of 2-qubit interactions, it suffers from several shortcomings which make it less meaningful as a statement about physics:
\begin{enumerate}
\item The interactions do not have any constraints on their spatial locality. They can occur across large distances and each individual qubit can interact with arbitrarily many others;
\item The interaction terms $H_j$ can appear with either positive or negative signs in the final Hamiltonian $H$, corresponding to two quite different physical interpretations;
\item The weights of individual interactions can be as large as $\poly(n)$, and the precision with which we are asked to estimate the ground-state energy scales as $O(1/\poly(n))$. In physical systems the interactions are usually of weight $O(1)$.
\end{enumerate}
In this work we address the first and second of these issues. We do not address the third (but see~\cite{cao14,childs14,Childs2015} for some cases where QMA-completeness has been proven with interactions of strength $O(1)$).

\subsection{Statement of results}

We study the \sham\ problem with physically motivated restrictions. The first restriction is the additional constraint that all weights $\alpha_i$ must satisfy $\alpha_i \geqslant 0$. We will label this problem \spham, although for the majority of this paper we consider sets of only one element. Physically, this restriction allows us to characterise the complexity of models which have only either antiferromagnetic or ferromagnetic interactions. This restriction is also useful when viewing the \sham\ problem as the quantum analogue of a weighted constraint satisfaction problem. Indeed, from this point of view it is difficult to interpret the meaning of negative coefficients.

Our main result in this setting is the following theorem:

\begin{theorem}
\label{thm:positive}
Given a 2-qubit interaction $H = \alpha XX+\beta YY + \gamma ZZ$, the problem $\{H\}^+$\textsc{-Hamiltonian} is either:

\emph{\textbf{i) }} QMA-complete, if $\alpha+\beta>0,$ $\alpha+\gamma>0$ and $\beta+\gamma>0$;

\emph{\textbf{ii)}} in StoqMA, otherwise.
\\Furthermore, if  $\alpha=-\beta\neq 0$, $\alpha+\gamma>0$ and $\beta+\gamma>0$, then $\{H\}^+$\textsc{-Hamiltonian} is StoqMA-complete. If $\alpha=\beta$ and $\gamma \leqslant -|\alpha|$, $\{H\}^+$\textsc{-Hamiltonian} is in P.
\end{theorem}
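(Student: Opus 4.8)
The plan is as follows. Membership in QMA in case (i) is immediate, since \spham\ is a restriction of \lham\ $\in$ QMA \cite{Kitaev2002}; in case (ii) we will in fact prove the stronger statement that \spham\ is in StoqMA. Two elementary symmetries are used throughout: conjugating every qubit by a fixed single-qubit Clifford permutes the roles of $X,Y,Z$ in the interaction (and may flip an even number of the signs of $\alpha,\beta,\gamma$) while leaving the ``one interaction, non-negative weights'' structure intact; and $\{cH\}^+$\textsc{-Hamiltonian} is the same problem as $\{H\}^+$\textsc{-Hamiltonian} for any $c>0$. Using a Clifford permutation we may assume $\alpha\le\beta\le\gamma$.

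\textbf{Case (ii) and the P refinement.} Suppose $\min(\alpha+\beta,\alpha+\gamma,\beta+\gamma)\le0$; with $\alpha\le\beta\le\gamma$ this forces $\alpha+\beta\le0$, hence $\alpha\le-\beta$ and (since also $\alpha\le\beta$) $\alpha\le-|\beta|$. Then the two off-diagonal entries of $\alpha XX+\beta YY+\gamma ZZ$ in the computational basis, namely $\alpha-\beta$ (in the $\{|00\rangle,|11\rangle\}$ block) and $\alpha+\beta$ (in the $\{|01\rangle,|10\rangle\}$ block), are both $\le0$, so $H$ is stoquastic. Every term of an instance of \spham\ is a non-negatively weighted copy of $H$ placed on some pair, so the whole Hamiltonian is stoquastic, and the Local Hamiltonian problem for stoquastic Hamiltonians lies in StoqMA \cite{Bravyi2014a}. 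For the P refinement, suppose $\alpha=\beta$ and $\gamma\le-|\alpha|$; diagonalising the $4\times4$ matrix $\alpha(XX+YY)+\gamma ZZ$ shows that its smallest eigenvalue is $\gamma$, attained on $\mathrm{span}\{|00\rangle,|11\rangle\}$. Hence for any non-negative weights $\alpha_i$ we get $\sum_i\alpha_iH_i\succeq\bigl(\gamma\sum_i\alpha_i\bigr)I$, a bound met by $|0\rangle^{\otimes n}$, so $\lambda\bigl(\sum_i\alpha_iH_i\bigr)=\gamma\sum_i\alpha_i$ exactly and the decision problem is in P.

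\textbf{Case (i): QMA-hardness.} This is the heart of the proof. The plan is to reduce, through a sequence of low-energy-subspace Hamiltonian simulations (which compose), to a short list of ``anchor'' interactions that are proved QMA-hard with non-negative weights directly --- principally the antiferromagnetic XY interaction $XX+YY$ and the antiferromagnetic Heisenberg interaction $XX+YY+ZZ$. For an anchor one starts from the QMA-hardness reduction for its sign-unrestricted version supplied by Theorem~\ref{thm:sham} \cite{cubitt14} and replaces each negatively weighted effective term by a gadget built only from antiferromagnetic (positively weighted) copies of the interaction. Two gadgets carry the argument. First, a logical-qubit gadget: having no 1-local terms available we cannot pin an ancilla with a field, so instead we encode a logical qubit into the degenerate ground space of a small antiferromagnetically coupled ancilla cluster (for instance an odd-length chain, whose ground space carries an effective spin-$\tfrac12$), and couple the data qubits to the cluster. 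Second, a second-order perturbative mediator gadget: two strong $H$-bonds sharing an ancilla induce an effective two-body interaction on the data qubits whose Pauli content differs from that of $H$; this is the mechanism that turns a generic $\alpha XX+\beta YY+\gamma ZZ$ obeying $\alpha+\beta,\alpha+\gamma,\beta+\gamma>0$ into one of the anchors, and one expects the three strict inequalities to be exactly what is needed for the relevant effective couplings to come out with the correct sign and for the mediator's ground space to be non-degenerate and inverse-polynomially gapped. The StoqMA-completeness claim for $\alpha=-\beta\neq0$ (where, after a Clifford and a positive rescaling, $H=XX-YY+\gamma ZZ$ with $\gamma>1$, so that a strong $H$-bond has two-dimensional product ground space $\mathrm{span}\{|01\rangle,|10\rangle\}$) is handled separately, by reducing from a known StoqMA-complete model such as the transverse-field Ising model $\{ZZ,X\}$ \cite{Bravyi2014a}, using a gadget that manufactures effective transverse fields out of that ground space.

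\textbf{Main obstacle.} The delicate work is that every gadget must be assembled from antiferromagnetic two-body terms with no 1-local help. One must verify that the perturbative effective Hamiltonians genuinely have the claimed form with controllably small error, and --- the harder point --- that each logical-qubit cluster has a non-degenerate ground space separated by an inverse-polynomial spectral gap, uniformly as the weights are scaled with $n$. Proving these spectral-gap and perturbation bounds for the antiferromagnetic XY, Heisenberg and $XX-YY$ clusters, and confirming that the triangle conditions $\alpha+\beta,\alpha+\gamma,\beta+\gamma>0$ are precisely what the mediator gadgets require, is where essentially all of the effort lies.
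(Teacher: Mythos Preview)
Your treatment of case~(ii) and the P refinement is correct and essentially identical to the paper's. The StoqMA-completeness sketch for $\alpha=-\beta\neq 0$ also has the right shape (reduce from TIM by encoding logical qubits in the two-dimensional ground space $\mathrm{span}\{|\Psi^+\rangle,|\Psi^-\rangle\}$ of a heavy $H$-bond), though the paper fills in exactly which perturbative terms produce the logical $XX$ and $Z$ operators.

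For case~(i), however, you are working much harder than necessary, and the gadget you describe is not the one that works. The paper does \emph{not} reduce to anchor interactions like XY or Heisenberg, nor does it need odd-length chains or logical-qubit clusters. Instead it uses a single, uniform gadget that handles every $H$ with $\alpha+\beta,\alpha+\gamma,\beta+\gamma>0$ at once. Place a heavy copy of $H$ on a \emph{pair} of ancillas $a,b$; diagonalising $H$ in the Bell basis gives eigenvalue $-(\alpha+\beta+\gamma)$ on $|\Psi^-\rangle$ and the three values $\alpha+\beta-\gamma$, $\alpha-\beta+\gamma$, $-\alpha+\beta+\gamma$ on the other Bell states, so the three strict inequalities are \emph{exactly} the condition that $|\Psi^-\rangle$ is the unique ground state with $O(1)$ gap --- no spectral analysis of clusters is needed. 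Now attach data qubits $i,j$ to the ancillas with weak $H$-bonds: if $i$ is bonded to $a$ and $j$ to $b$, second-order perturbation produces $+\widetilde{H}_{ij}$; if both are bonded to the same ancilla, it produces $-\widetilde{H}_{ij}$, where
\[
\widetilde{H}=\frac{\alpha^2}{\beta+\gamma}XX+\frac{\beta^2}{\alpha+\gamma}YY+\frac{\gamma^2}{\alpha+\beta}ZZ.
\]
Since at least two of $\alpha,\beta,\gamma$ are nonzero, $\widetilde{H}$ has Pauli rank $\geqslant 2$, so $\{\widetilde{H}\}$\textsc{-Hamiltonian} with both signs is already QMA-complete by~\cite{cubitt14}. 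One round of second-order perturbation, with a fresh ancilla pair per target edge, therefore simulates any such Hamiltonian using only positively weighted copies of $H$.

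Your description ``two strong $H$-bonds sharing an ancilla'' is the wrong architecture: that is a single-ancilla mediator, which (as you note) cannot be pinned without 1-local terms. The fix is not to pass to chains or clusters but simply to use \emph{two} ancillas with one strong bond between them --- the interaction $H$ itself supplies the non-degenerate ground space, and the sign of the effective $\widetilde{H}$ comes for free from whether the two data qubits see the same or opposite ends of the singlet.
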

The latter part of the theorem can easily be seen to hold under relabelling $\alpha$, $\beta$, $\gamma$. It was shown in \cite{cubitt14} that if at least two of $\alpha,\beta,\gamma$ are non-zero, then the corresponding problem $\{H\}$\textsc{-hamiltonian} is QMA-complete; and so Theorem \ref{thm:positive} shows that the restriction of having positive coefficients does indeed make a significant difference to the complexity.

Observe that we can transform any 2-qubit interaction $H$ which is symmetric under swapping the qubits, and has no 1-local part, into the form required for Theorem \ref{thm:positive} using local unitaries (see~\cite{cubitt14} and references therein). Theorem \ref{thm:positive} thus classifies the complexity of all symmetric 2-qubit interactions with no 1-local part. We illustrate this classification in Figure \ref{fig:posweights}.

\begin{figure}
\begin{subfigure}[t]{0.3\textwidth}
\begin{center}
\begin{tikzpicture}[scale=0.5,>=stealth]
\fill[yellow!40] (-2,-2) rectangle (4,4);
\fill[red!25] (1,1) rectangle (4,4);
\draw[->] (-2,0) -- (5,0) node[anchor=west] {$\gamma$};
\draw[->] (0,-2) -- (0,5) node[anchor=south] {$\beta$};
\draw (-1,-0.2) -- (-1,0.2);
\draw (-0.2,-1) -- (0.2,-1);
\draw (1,-0.2) -- (1,0.2);
\draw (-0.2,1) -- (0.2,1);
\node at (-1,-0.5) {\footnotesize -1};
\node at (1,-0.5) {\footnotesize 1};
\node at (-0.5,1) {\footnotesize 1};
\node at (-0.5,-1) {\footnotesize -1};
\draw[ultra thick] (1,4) -- (1,1) -- (4,1);
\draw[ultra thick,green!50!black] (1,1) -- (-1,-1) -- (-2,-1);
\draw[ultra thick,green!50!black] (-1,-1) -- (-1,-2);
\node[green!50!black] at (-2.5,-1) {\footnotesize P};
\node[align=center] at (2.5,2.5) {\footnotesize QMA-\\ \footnotesize complete};
\node at (2.5,-1.5) {\footnotesize StoqMA};
\end{tikzpicture}
\end{center}
\caption{$H = -XX+\beta YY + \gamma ZZ$}
\end{subfigure}
\hfill
\begin{subfigure}[t]{0.3\textwidth}
\begin{center}
\begin{tikzpicture}[scale=0.5,>=stealth]
\fill[yellow!40] (-3,-3) rectangle (3,3);
\fill[red!25] (0,0) rectangle (3,3);
\draw[->] (-3,0) -- (4,0) node[anchor=west] {$\gamma$};
\draw[->] (0,-3) -- (0,4) node[anchor=south] {$\beta$};
\draw[ultra thick,blue] (0,3) -- (0,0) -- (3,0);
\draw[ultra thick,green!50!black] (0,-3) -- (0,0) -- (-3,0);
\node[align=center] at (1.5,1.5) {\footnotesize QMA-\\ \footnotesize complete};
\node at (-1.5,-1.5) {\footnotesize StoqMA};
\node[green!50!black] at (-3.5,0) {\footnotesize P};
\node[blue] at (-0.8,3.5) {\footnotesize NP};
\end{tikzpicture}
\end{center}
\caption{$H = \beta YY + \gamma ZZ$}
\end{subfigure}
\hfill
\begin{subfigure}[t]{0.3\textwidth}
\begin{center}
\begin{tikzpicture}[scale=0.5,>=stealth]
\fill[yellow!40] (-2,-2) rectangle (4,4);
\fill[red!25] (-1,4) -- (-1,1) -- (1,-1) -- (4,-1) -- (4,4) -- cycle;
\draw[->] (-2,0) -- (5,0) node[anchor=west] {$\gamma$};
\draw[->] (0,-2) -- (0,5) node[anchor=south] {$\beta$};
\draw[ultra thick] (-1,4) -- (-1,1) -- (1,-1) -- (4,-1);
\draw[ultra thick,green!50!black] (1,-1) -- (1,-2);
\draw[ultra thick,green!50!black] (-1,1) -- (-2,1);
\filldraw[blue] (0,0) circle (1mm);
\node[align=center] at (2,2) {\footnotesize QMA-\\ \footnotesize complete};
\node at (2.5,-1.5) {\footnotesize StoqMA};
\node[green!50!black] at (-2.5,1) {\footnotesize P};
\draw (-1,-0.2) -- (-1,0.2);
\draw (-0.2,-1) -- (0.2,-1);
\draw (1,-0.2) -- (1,0.2);
\draw (-0.2,1) -- (0.2,1);
\node at (-1,-0.5) {\footnotesize -1};
\node at (1,-0.5) {\footnotesize 1};
\node at (-0.5,1) {\footnotesize 1};
\node at (-0.5,-1) {\footnotesize -1};
\end{tikzpicture}
\end{center}
\caption{$H = XX+\beta YY + \gamma ZZ$}
\end{subfigure}
\caption{The complexity of interactions of the form $H = \alpha XX+\beta YY + \gamma ZZ$. The black line indicates StoqMA-complete interactions. The origin is in P in the first two cases, and NP-complete in the third.}
\label{fig:posweights}
\end{figure}
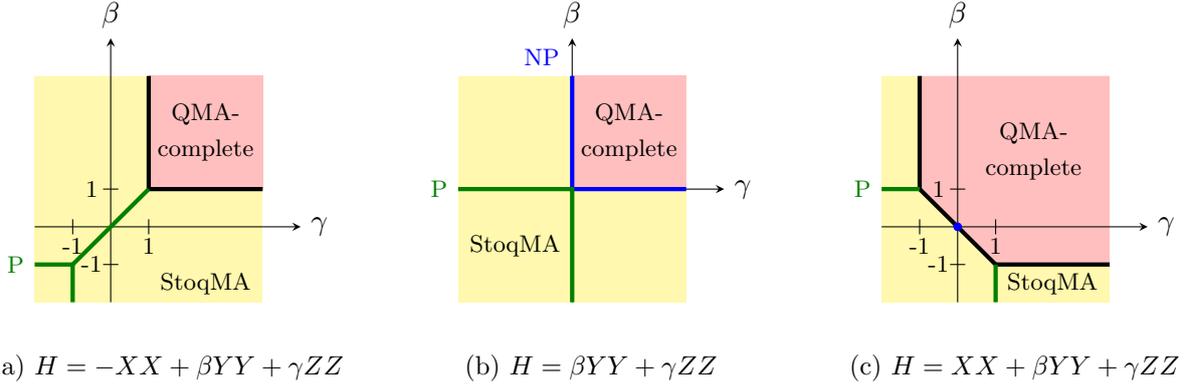

Important special cases of Theorem \ref{thm:positive} are the antiferromagnetic Heisenberg model ($H=XX+YY+ZZ$) and the antiferromagnetic XY model ($H=XX+YY$), which are both shown here to be QMA-complete for the first time. A similar result has been shown very recently for the antiferromagnetic XY interaction by Childs, Gosset and Webb~\cite{Childs2015}, using different techniques. Their result holds even when the $\alpha_i$ coefficients are all either 0 or 1, but only at fixed magnetisation (i.e.\ within a fixed eigenspace of $Z^{\otimes n}$). Another interesting case which does not fit into the framework of Theorem \ref{thm:positive} is the antiferromagnetic transverse Ising model, $\mathcal{S} = \{ZZ,X\}$. We show that this case of \spham\ is also StoqMA-complete.

The second restriction we study is to require that all interactions only occur on the edges of a particular interaction graph, where each vertex corresponds to a qubit. We focus on two natural graphs: 2D square and triangular lattices. Square lattices have been studied by several authors. It was shown by Oliveira and Terhal that general 2-local qubit Hamiltonians on a square lattice are QMA-complete~\cite{Oliveira2005}. Then the Heisenberg interaction with arbitrary 1-local interactions was shown to be QMA-complete on a square lattice by Schuch and Verstraete~\cite{Schuch2009}. It was shown in~\cite{cubitt14} that any QMA-complete 2-qubit interaction remains QMA-complete on the square lattice, if arbitrary 1-local terms are also allowed. However, no interaction has previously been shown to be QMA-complete on a square lattice without 1-local terms.

Our first result in this setting concerns square lattices without sign restrictions. To state the result, it will be helpful to decompose a general 2-qubit interaction $H$ into Pauli matrices: $H=\sum M_{ij} \sigma_i \otimes \sigma_j +$ 1-local terms. Then the Pauli rank of $H$ is the rank of the $3\times 3$ matrix $M$.

\begin{theorem}
\label{thm:square}
Let $H$ be a 2-local qubit interaction with Pauli rank at least 2, such that the 2-local part of $H$ is not proportional to $XX+YY+ZZ$. Then $\{H\}$\textsc{-Hamiltonian} is QMA-complete, even when the interactions are restricted to a 2D square lattice.
\end{theorem}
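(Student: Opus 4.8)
The plan is to establish QMA-hardness by a perturbative ``simulation'' reduction; membership in QMA is immediate, since $\{H\}$\textsc{-Hamiltonian} on a square lattice is a special case of \lham. The first point to record is that the hypothesis already forces $\{H\}$\textsc{-Hamiltonian} with \emph{no} locality restriction to be QMA-complete: a local change of basis $U\otimes V$ acts on the Pauli-coefficient matrix $M$ of $H$ by $M\mapsto R_UMR_V^{\mathsf T}$ with $R_U,R_V\in SO(3)$, so (by singular value decomposition) the 2-local part of $H$ can be brought to $\mu_1XX+\mu_2YY+\mu_3ZZ$ with at least two $\mu_i$ nonzero; such an interaction cannot be placed in any of the diagonalisable or $\alpha Z^{\otimes2}+AI+IB$ forms of Theorem~\ref{thm:sham}, and so that theorem puts $\{H\}$\textsc{-Hamiltonian} in its QMA-complete case. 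The goal is therefore to show, within the framework of perturbative gadgets (constructions that reproduce the low-energy spectrum up to inverse-polynomial error, that compose, and that preserve QMA-completeness), that $H$ acting \emph{only on the edges of a 2D square lattice} can simulate $\{H\}$\textsc{-Hamiltonian} on an arbitrary interaction graph.

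The lattice embedding is carried out, roughly, by three families of gadgets built purely from $H$-interactions: a degree-reduction gadget that replaces a high-degree vertex of the target graph by several bounded-degree vertices; a crossing gadget that eliminates a pair of crossing ``wires'', so that the effective interaction graph may be taken planar; and a subdivision/wire gadget that realises an effective long-range $H$-interaction between two qubits connected by a path of mediator qubits, so that a bounded-degree planar graph can be stretched out onto the lattice. Since we have no 1-local terms, the mediators cannot be pinned by fields; instead they are grouped into small clusters held in place by strong $H$-interactions amongst themselves, and first- and second-order degenerate perturbation theory produces the desired effective interactions on the ``live'' qubits. When $H$ is sufficiently anisotropic that such a cluster induces a nonzero effective 1-local term on an adjacent qubit, one can alternatively simulate $\{H\}\cup\{X,Y,Z\}$ on the lattice and invoke the result of~\cite{cubitt14} that a QMA-complete interaction together with arbitrary 1-local terms is QMA-complete on the square lattice; but this shortcut fails for the most symmetric admissible $H$, so the direct gadget constructions have to be made to work in all cases.

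The hard part is precisely this: making all of the gadget constructions go through with no 1-local terms available. This is exactly why the Heisenberg interaction is excluded --- its full $SU(2)$ symmetry means that every Hamiltonian built from it, hence every effective interaction any gadget can induce, is $SU(2)$-invariant, which forbids the symmetry-breaking the gadgets need. For the remaining $H$ I expect a case analysis organised by the symmetry type of $\mu_1XX+\mu_2YY+\mu_3ZZ$ (together with any 1-local part of $H$): a generic case in which the $\mu_i$ are distinct and effective fields are easy to produce; and the delicate partially symmetric cases --- the $XY$-type interaction $XX+YY$ and the $XXZ$-type interactions $\mu(XX+YY)+\mu_3ZZ$ --- which retain a $U(1)\rtimes\mathbb{Z}_2$ symmetry, so that fields cannot be generated and the mediator and crossing gadgets must instead be analysed inside the symmetric subspace, as well as the regimes (e.g.\ $XXZ$-type with $\mu_3$ very negative) in which every cluster of $H$-interactions has a degenerate ground space, so that a mediator behaves as a logical qubit rather than a pinned spin and longer chains of such logical qubits must be used. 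Once these gadgets and their composable, inverse-polynomial error bounds are in hand, composing with the unrestricted QMA-completeness of $\{H\}$\textsc{-Hamiltonian} from Theorem~\ref{thm:sham} completes the proof.
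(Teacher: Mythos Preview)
Your high-level plan---perturbative gadgets for subdivision, degree-reduction, and crossing, plus a case analysis on the symmetry of the 2-local part---is the right \emph{shape}, but the specific reduction strategy you outline has two genuine gaps relative to what the proof actually requires.

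First, the scheme ``lattice $H$ simulates arbitrary-graph $\{H\}$'' does not close. The second-order mediator-pair gadgets built from $H=\alpha XX+\beta YY+\gamma ZZ$ do not output effective $H$-interactions; they output
\[
\widetilde{H}=\frac{\alpha^2}{\beta+\gamma}XX+\frac{\beta^2}{\alpha+\gamma}YY+\frac{\gamma^2}{\alpha+\beta}ZZ,
\]
and, more to the point, the fork and crossing gadgets need first-order correction terms $V_{\text{extra}}$ that are \emph{physical} $H$-interactions to cancel unwanted \emph{simulated} $\widetilde{H}$-terms. That cancellation only works when $H=\widetilde{H}$, i.e.\ for XY and Heisenberg. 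So for generic $H$ your three gadget families cannot be made to fit together without further ideas. The paper resolves this not by fixing the mismatch but by pivoting: it first proves that the \emph{XY} interaction is QMA-complete on the square lattice, and then, for each admissible $H$, builds a single tessellating local gadget (using two parallel paths and a third-order piece generating a second auxiliary interaction $H'$) that simulates $XX+YY$ from $H$ \emph{edge by edge on the lattice}. Lemma~\ref{lem:mu} is what makes the linear combination work for every $H\neq XX+YY+ZZ$.

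Second, and more seriously, even in the XY case your plan to lift ``unrestricted $\{H\}$ is QMA-complete'' to the lattice via embedding gadgets skips the step where all the difficulty lives. The QMA-hardness proof for XY in~\cite{cubitt14} uses, as an internal component, the XY model on the \emph{complete graph} (to get computable, polynomially-large correlation functions in an auxiliary ``$H_{\text{set}}$'' Hamiltonian). That auxiliary structure is not spatially sparse, so Oliveira--Terhal-style embedding cannot be applied to it. The paper's key technical ingredient is to replace the complete-graph auxiliary by the XY model on a \emph{cyclic chain} overlaid on the lattice (Figure~\ref{fig:XYsquare}), and Lemma~\ref{lem:xy} establishes exactly the facts needed: a unique ground state, an $\Omega(1/N)$ spectral gap, and correlation functions that decay only polynomially, $\Omega(n^{-1/2})$, for a \emph{finite} chain of appropriate length. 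Without something playing the role of this lemma, there is no route from Theorem~\ref{thm:sham} to a spatially-sparse hardness statement for XY, and hence no starting point for the lattice embedding. Your proposal does not identify this obstacle.
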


We believe that the one excluded case in this theorem, the Heisenberg interaction, is in fact QMA-complete; however, we have not been able to prove this. QMA-completeness of the Heisenberg interaction would follow if there existed an exactly solvable special case of the Heisenberg model satisfying certain constraints (see below). If this final case were found to be QMA-complete, we would have shown that all of the QMA-complete 2-qubit interactions remain QMA-complete on a square lattice.

Our second result about interaction geometry concerns 2D triangular lattices with sign restrictions. A square lattice can be viewed as a sublattice of a triangular lattice, and so any hardness results on a square lattice such as Theorem \ref{thm:square} can be trivially extended to the triangular lattice. However, we can obtain stronger results by working with triangular lattices directly.

\begin{theorem}
\label{thm:triangle}
Let $H=\alpha XX+\beta YY + \gamma ZZ$ be a 2-qubit interaction such that $\alpha+\beta>0,$ $\alpha+\gamma>0,\beta+\gamma>0$ and $H$ is not proportional to $XX+YY+ZZ$. Then $\{H\}^+$\textsc{-Hamiltonian} is QMA-complete, even if the interactions are restricted to the edges of a 2D triangular lattice.
\end{theorem}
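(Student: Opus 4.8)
The plan is to reduce from the (sign-unrestricted) problem $\{H\}$\textsc{-Hamiltonian} on the 2D square lattice, which is QMA-complete by Theorem~\ref{thm:square}: the hypotheses $\alpha+\beta>0$, $\alpha+\gamma>0$, $\beta+\gamma>0$ force at least two of $\alpha,\beta,\gamma$ to be nonzero, so $H=\alpha XX+\beta YY+\gamma ZZ$ has Pauli rank at least $2$, and by assumption $H$ is not proportional to $XX+YY+ZZ$. Since the square lattice is a subgraph of the triangular lattice, it suffices to simulate an arbitrary real-weighted sum $\sum_e w_e H_e$ of $H$-interactions on the edges of a square sublattice using only \emph{nonnegatively}-weighted $H$-interactions on the surrounding triangular lattice. (One could instead reduce directly from $\{H\}^+$\textsc{-Hamiltonian}, Theorem~\ref{thm:positive}, and spend all the effort on the spatial embedding.) Writing each $w_e=p_e-n_e$ with $p_e,n_e\geqslant 0$, the parts $p_e H_e$ go directly onto square-sublattice edges, and the difficulty is concentrated in a \emph{sign-flip gadget} that produces an effective interaction $\approx -n_e H_e$ between the endpoints of an edge out of nonnegatively-weighted $H$-terms supported on the triangular faces incident to that edge.

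The gadget exploits frustration. The three positivity conditions are precisely those under which the unique ground state of $H$ on two qubits is the rotation-invariant spin singlet; hence a large positive weight acts as an antiferromagnetic bond, and an odd cycle of such bonds --- present on the triangular lattice but not on the bipartite square lattice --- is frustrated. The plan is to attach a constant number of mediator qubits to the endpoints $u,v$ of a target edge using its incident triangular faces, and to apply heavy nonnegatively-weighted $H$-interactions among $\{u,v\}$ and the mediators so that the ground subspace of the heavy Hamiltonian is two-dimensional per logical qubit; the frustration of a triangle is what causes the second-order self-energy correction induced by the remaining light interactions to act on the logical qubits as $-c\,H_{uv}$ for some constant $c>0$, up to $1$-local terms and higher-order error. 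The spurious $1$-local terms have to be cancelled internally, since $\{H\}^+$\textsc{-Hamiltonian} provides no $1$-local terms to subtract; here one uses that each qubit of the singlet is maximally mixed, so that a mediator locked into a singlet contributes no net field, with any residual single-qubit corrections removed by pairing mediators with compensating heavy bonds --- a step to be checked case by case over the family $H=\alpha XX+\beta YY+\gamma ZZ$, including degenerate cases such as the XY interaction ($\gamma=0$). One then invokes the standard perturbative simulation machinery underlying Theorems~\ref{thm:sham} and~\ref{thm:square} (cf.~\cite{cubitt14,Oliveira2005}): with the heavy weights taken polynomially large in the required precision, the gadgets reproduce $\sum_e w_e H_e$ with a promise gap shrinking by at most a $1/\poly$ factor, and, after first subdividing the square lattice so that the gadgets occupy disjoint patches and all degrees stay at most $6$, they may be deployed in parallel; composing these simulations via the usual fact that gadget simulations compose then completes the reduction.

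The main obstacle is the construction of this frustrated gadget, and it is exactly here that the Heisenberg interaction must be excluded: the full rotational symmetry of $XX+YY+ZZ$ forces the qubits locked together by the heavy interactions into a subspace on which the Heisenberg coupling acts as a scalar, so that the gadget produces no nontrivial --- let alone sign-reversed --- effective two-qubit interaction. This obstruction appears intrinsic and is why QMA-completeness of the antiferromagnetic Heisenberg model on the triangular lattice is left open. The remaining work is to carry out the perturbation-theory error analysis uniformly in $\alpha,\beta,\gamma$, away from the Heisenberg point and from the boundaries on which the positivity conditions degenerate.
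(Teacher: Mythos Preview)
Your high-level strategy --- reduce from a known QMA-complete square-lattice problem and use a local ``sign-flip'' gadget, exploiting the odd cycles of the triangular lattice, to turn positive-weight $H$-terms into an effective negative-weight interaction --- is natural, and indeed is close in spirit to what the paper does for the special case $H=XX+YY$. The genuine gap is the claim that such a gadget produces an effective interaction proportional to $-H_{uv}$.

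Concretely: the only second-order gadget available here is the one with a heavy mediator pair $a,b$ in the singlet ground state of $H_{ab}$, and weak couplings $H_{ua}$, $H_{va}$ (or $H_{ub}$, $H_{vb}$) to the same mediator. As computed in Section~\ref{sec:ex}, this yields $-\widetilde{H}_{uv}$ with
\[
\widetilde{H}=\frac{\alpha^2}{\beta+\gamma}XX+\frac{\beta^2}{\alpha+\gamma}YY+\frac{\gamma^2}{\alpha+\beta}ZZ,
\]
and $\widetilde{H}$ is proportional to $H$ \emph{only} when $H$ is the XY interaction or the Heisenberg interaction. For generic $(\alpha,\beta,\gamma)$ the gadget does not flip the sign of $H$; it produces a \emph{different} diagonal interaction. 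So decomposing $w_e=p_e-n_e$ and routing $n_e$ through the gadget does not simulate $\sum_e w_e H_e$, and your reduction from Theorem~\ref{thm:square} (which requires the same $H$ on every edge) does not go through. Your sentence ``a step to be checked case by case'' hides exactly the obstruction.

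This is precisely why the paper takes a longer route. It first establishes the XY case directly (where $\widetilde H=H$ and the parity trick on odd cycles genuinely gives sign freedom), via a rather involved chain: $\{XX,ZZ,X,Z\}$ on a square lattice $\to$ XY on a spatially sparse graph (using the exactly-solved cyclic XY chain, Lemma~\ref{lem:xy}) $\to$ XY on a planar degree-$3$ graph via subdivision/fork/crossing gadgets $\to$ XY on the triangular lattice. Then, for general $H$, rather than trying to flip the sign of $H$, it builds a tessellating gadget (Figure~\ref{fig:triangle}) that simulates \emph{two distinct} effective interactions $H^{(1)}$ and $H^{(2)}$ along two paths, takes a positive combination $-\lambda H^{(1)}+\mu H^{(2)}$, and shows (Lemma~\ref{lem:mu}) that away from the Heisenberg point one can tune $\lambda,\mu\geqslant 0$ so that this combination is a Pauli-rank-$2$ interaction such as $XX+YY$, reducing back to the XY case. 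The exclusion of $XX+YY+ZZ$ enters here because it is the unique fixed point $H^{(1)}\propto H^{(2)}\propto H$, not because of the symmetry argument you give.

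If you want to rescue your outline, the honest statement is that the sign-flip gadget gives you access to $\pm\widetilde H$ (and, iterating, to a whole tower of derived interactions), and the real work is to show that positive combinations of these suffice to simulate something already known to be QMA-complete on the lattice; that is essentially what the paper does.
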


An important special case of Theorem \ref{thm:triangle} is the antiferromagnetic XY model on a triangular lattice, which we show is QMA-complete. Observe that Theorems \ref{thm:positive} and \ref{thm:triangle} overlap, but Theorem \ref{thm:triangle} is not quite a stronger result than Theorem \ref{thm:positive}. We do not expect to be able to show QMA-completeness results for $\{H\}^+$\textsc{-Hamiltonian} for any of the interactions $H$ in Theorem \ref{thm:triangle} when restricted to a square lattice. Indeed, for any bipartite interaction graph, we can perform a local change of basis conjugating by $Z$ on every qubit on one side of the bipartition. This will effectively send $\alpha,\beta \rightarrow -\alpha,-\beta$, thereby converting any of the QMA-complete Hamiltonians from Theorem \ref{thm:positive} into one for which approximating the ground state energy is contained in StoqMA. Together, Theorems \ref{thm:square} and \ref{thm:triangle} thus highlight a way in which the triangular lattice can rigorously be said to be more complex than the square lattice.

\subsection{Proof techniques}

The main proof technique used is to apply perturbative gadgets~\cite{Kempe2004,Oliveira2005,Bravyi2014a,cubitt14} to show reductions from problems that are already known to be QMA-complete. The basic idea is to use a Hamiltonian $H=\Delta H_0 +V$, for some large $\Delta$, such that the low energy sector of $H$ is approximately equal to the ground space of $H_0$, and that within this space the action of $H$ is determined by $V$ to some order in a perturbative expansion in $1/\Delta$. This allows new interactions to be approximately implemented that were not previously available. For Theorem \ref{thm:positive}, a second order perturbation ``sign'' gadget is used to allow a Hamiltonian with positively weighted terms to simulate a Hamiltonian with mixed signs that is known to be QMA-complete. The StoqMA-completeness results are based on perturbative reductions from the transverse Ising model, which was recently shown to be complete for StoqMA by Bravyi and Hastings~\cite{Bravyi2014a}.

The main technical challenge in order to prove QMA-hardness for interactions on a 2D lattice turns out to be proving QMA-hardness of the XY model. This proof consists of two steps. It is known that Hamiltonians made up of XX and ZZ interactions on a square lattice, and arbitrary 1-local interactions, are QMA-complete. The first step of the proof develops a new variant of a reduction used in~\cite{cubitt14}, in order to reduce such Hamiltonians to an XY model Hamiltonian on a spatially sparse graph. A key technical ingredient in the gadget construction is a careful analysis of the exactly solvable antiferromagnetic XY model on a cyclic chain. The second step is to apply variants of gadgets used by Oliveira and Terhal~\cite{Oliveira2005} to reduce the sparse graph to a lattice. Finally, to prove QMA-hardness of other interactions, we define new gadgets for reductions from the XY model which fit onto lattices.

\subsection{Organisation}

We begin, in Section \ref{sec:perturb}, by describing the ideas from perturbation theory and the gadgets that we will use throughout the paper. Section \ref{sec:pws} contains our results on interactions with restricted signs and StoqMA-completeness. Section \ref{sec:restrict} describes the QMA-hardness proofs for 2D lattices, with calculations regarding the XY model on a cyclic chain deferred to Section \ref{sec:XYcyclic}. We conclude with some open problems in Section \ref{sec:outlook}.

\section{Perturbation theory}
\label{sec:perturb}

\subsection{Definition of simulation}

We typically start with a target Hamiltonian $H_{\text{target}}$ acting on an $N$-dimensional Hilbert space $\mathcal{H}$, of a form which is already known to be QMA-complete. The idea is to use a simulator Hamiltonian $H_{\text{sim}}$ which acts on a larger Hilbert space $\mathcal{H}_{\text{sim}}$, such that when restricted to its low energy subspace $\mathcal{L}_N(H_{\text{sim}})$ -- that is, the subspace spanned by the $N$ eigenvectors of $H_{\text{sim}}$ with lowest eigenvalues -- $H_{\text{sim}}$ is approximately the same as the target Hamiltonian $H_{\text{target}}$. We formalise this idea of simulation, using the definitions presented in~\cite{Bravyi2014a}.

We require that there exists an isometry $\widetilde{\mathcal{E}}:\mathcal{H} \rightarrow \mathcal{H}_{\text{sim}}$ such that the image of $\widetilde{\mathcal{E}}$ is exactly $\mathcal{L}_N(H_{\text{sim}})$ and $\widetilde{\mathcal{E}}^{\dagger}H_{\text{sim}}\widetilde{\mathcal{E}}$ approximates $H_{\text{target}}$ in operator norm up to a small error $\epsilon$. This is enough to show that the first $N$ eigenvalues of $H_{\text{sim}}$ approximate the eigenvalues of $H$ up to error $\epsilon$.

However, this gives us no information about the eigenvectors of $H_{\text{sim}}$ and how they relate to the eigenvectors of $H_{\text{target}}$, as the isometry $\widetilde{\mathcal{E}}$ is unknown and may in general be very complicated. It will therefore be useful to have a simple isometry $\mathcal{E}:\mathcal{H}\rightarrow\mathcal{H}_{\text{sim}}$ that approximates $\widetilde{\mathcal{E}}$ in operator norm. This motivates the following definition of simulation as put forward by Bravyi and Hastings~\cite{Bravyi2014a}:

\begin{definition}
Let $H$ be a Hamiltonian acting on a Hilbert space $\mathcal{H}$ of dimension $N$. A Hamiltonian $H_{\operatorname{sim}}$ and an isometry (encoding) $\mathcal{E}:\mathcal{H}\rightarrow\mathcal{H}_{\text{sim}}$ are said to simulate $H$ with error $(\eta,\epsilon)$ if there eists an isometry $\widetilde{\mathcal{E}}:\mathcal{H} \rightarrow \mathcal{H}_{\operatorname{sim}}$ such that
\begin{itemize}
\item[S1.] The image of $\widetilde{\mathcal{E}}$ coincides with the low-energy subspace $\mathcal{L}_N(H_{\text{sim}})$.
\item[S2.] $\|H-\widetilde{\mathcal{E}}^{\dagger}H_{\operatorname{sim}}\widetilde{\mathcal{E}}\| \leqslant \epsilon$.
\item[S3.] $\|\mathcal{E}-\widetilde{\mathcal{E}}\|\leqslant \eta$.
\end{itemize}
\end{definition}

It follows from Weyl's inequality~\cite{Bravyi2014a} that, if $(H_{\operatorname{sim}},\mathcal{E})$ simulates $H$ with error $(\eta,\epsilon)$, the $i$'th smallest eigenvalues of $H_{\operatorname{sim}}$ and $H$ differ by at most $\epsilon$ for all $1 \le i \le N$. In addition, it is shown in~\cite{Bravyi2014a} that simulation makes sense under composition as long as the lowest $N$ eigenvalues are separated from the rest of the spectrum by a relatively large gap.

\begin{lemma}[Bravyi and Hastings~\cite{Bravyi2014a}]
\label{lem:simulate}
Suppose $(H_1,\mathcal{E}_1)$ simulates $H$ with error $(\eta_1,\epsilon_1)$ and $(H_2,\mathcal{E}_2)$ simulates $H_1$ with error $(\eta_2,\epsilon_2)$. Let $\Delta_1$ be the spectral gap separating the $N$ smallest eigenvalues of $H_1$ from the rest of the spectrum and suppose that $\Delta_1 > 2\epsilon_2$ and $\epsilon_1,\epsilon_2 \leqslant \|H\|$. Then $(H,\mathcal{E}_2\mathcal{E}_1)$ simulates $H$ with error $(\eta,\epsilon)$, where $\eta = \eta_1 + \eta_2 + O(\epsilon_2 \Delta_1^{-1})$ and $\epsilon = \epsilon_1 + \epsilon_2 + O(\epsilon_2 \Delta_1^{-1} \|H\|)$.
\end{lemma}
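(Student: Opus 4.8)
The plan is to unpack conditions S1--S3 and to follow the two ideal isometries through the composition. Write $\widetilde{\mathcal{E}}_1:\mathcal{H}\to\mathcal{H}_1$ and $\widetilde{\mathcal{E}}_2:\mathcal{H}_1\to\mathcal{H}_2$ for the ideal isometries supplied by the two given simulations (here $\mathcal{H}_1$, $\mathcal{H}_2$ denote the spaces on which $H_1$, $H_2$ act), and set $N_1:=\dim\mathcal{H}_1$. The naive guess for the composed ideal isometry is $\widetilde{\mathcal{E}}_2\widetilde{\mathcal{E}}_1$, but its image is $\widetilde{\mathcal{E}}_2(\mathcal{L}_N(H_1))$, which in general is only \emph{approximately} equal to $\mathcal{L}_N(H_2)$; essentially all the work is in controlling this discrepancy and then absorbing it into a small unitary rotation $R$ on $\mathcal{H}_2$, so that the ideal isometry we exhibit is $\widetilde{\mathcal{E}}:=R\,\widetilde{\mathcal{E}}_2\widetilde{\mathcal{E}}_1$, with encoding $\mathcal{E}_2\mathcal{E}_1$.

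First I would pin down $\mathcal{L}_N(H_2)$ exactly. Since $\operatorname{ran}\widetilde{\mathcal{E}}_2=\mathcal{L}_{N_1}(H_2)$ is spanned by eigenvectors of $H_2$ it is $H_2$-invariant, so the restriction of $H_2$ to it is unitarily equivalent, via $\widetilde{\mathcal{E}}_2$, to $H_1':=\widetilde{\mathcal{E}}_2^{\dagger}H_2\widetilde{\mathcal{E}}_2$ on $\mathcal{H}_1$, and S2 for the second simulation gives $\|H_1-H_1'\|\le\epsilon_2$. Because $N\le N_1$, the $N$ lowest eigenvectors of $H_2$ already lie in $\operatorname{ran}\widetilde{\mathcal{E}}_2$, so $\mathcal{L}_N(H_2)=\widetilde{\mathcal{E}}_2(\mathcal{L}_N(H_1'))$. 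Now the hypothesis $\Delta_1>2\epsilon_2$ is used: $H_1$ has its lowest $N$ eigenvalues separated from the rest by a gap $\Delta_1$, and $H_1'$ is an $\epsilon_2$-perturbation, so by Weyl's inequality $H_1'$ still has a gap there (of size at least $\Delta_1-2\epsilon_2>0$), whence $\mathcal{L}_N(H_1')$ is well defined, and a standard perturbation bound for spectral projectors (the Davis--Kahan $\sin\Theta$ theorem) gives $\|P-P'\|=O(\epsilon_2\Delta_1^{-1})$, where $P$, $P'$ are the orthogonal projectors onto $\mathcal{L}_N(H_1)$, $\mathcal{L}_N(H_1')$. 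Conjugating by the isometry $\widetilde{\mathcal{E}}_2$ turns this into $\|Q-Q'\|=O(\epsilon_2\Delta_1^{-1})$, where $Q$ projects onto $\mathcal{L}_N(H_2)=\widetilde{\mathcal{E}}_2(\operatorname{ran}P')$ and $Q'$ onto $\operatorname{ran}(\widetilde{\mathcal{E}}_2\widetilde{\mathcal{E}}_1)=\widetilde{\mathcal{E}}_2(\operatorname{ran}P)$. As $Q$, $Q'$ have equal rank and are this close, there is a unitary $R$ with $R\,\operatorname{ran}Q'=\operatorname{ran}Q$ and $\|R-I\|=O(\epsilon_2\Delta_1^{-1})$ (the ``direct rotation'' between nearby equidimensional subspaces). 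Then $\widetilde{\mathcal{E}}:=R\widetilde{\mathcal{E}}_2\widetilde{\mathcal{E}}_1$ is an isometry whose image is exactly $\mathcal{L}_N(H_2)$, so S1 holds.

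Checking S3 is a quick triangle inequality: $\|\mathcal{E}_2\mathcal{E}_1-\widetilde{\mathcal{E}}\|\le\|(\mathcal{E}_2-\widetilde{\mathcal{E}}_2)\mathcal{E}_1\|+\|\widetilde{\mathcal{E}}_2(\mathcal{E}_1-\widetilde{\mathcal{E}}_1)\|+\|(I-R)\widetilde{\mathcal{E}}_2\widetilde{\mathcal{E}}_1\|\le\eta_2+\eta_1+\|I-R\|=\eta_1+\eta_2+O(\epsilon_2\Delta_1^{-1})$, using that the maps involved are (co)isometries. For S2 I would write $\widetilde{\mathcal{E}}^{\dagger}H_2\widetilde{\mathcal{E}}=J^{\dagger}R^{\dagger}H_2RJ$ with $J:=\widetilde{\mathcal{E}}_2\widetilde{\mathcal{E}}_1$, peel off the unrotated part $J^{\dagger}H_2J=\widetilde{\mathcal{E}}_1^{\dagger}H_1'\widetilde{\mathcal{E}}_1$ (which lies within $\epsilon_1+\epsilon_2$ of $H$, by S2 for the two simulations in turn), and estimate the remainder $J^{\dagger}(R^{\dagger}H_2R-H_2)J$. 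Writing $R=I+D$ with $\|D\|=O(\epsilon_2\Delta_1^{-1})$, this remainder is a sum of terms of the shape (bounded operator)$\,\cdot\,D\,\cdot\,H_2\,\cdot\,$(bounded operator), and the crucial observation — this is where $\epsilon_1,\epsilon_2\le\|H\|$ is used — is that $H_2$ only ever appears applied to vectors of $\operatorname{ran}Q+\operatorname{ran}Q'$, a subspace on which $H_2$ has norm $O(\|H\|)$: indeed $H_2\widetilde{\mathcal{E}}_2=\widetilde{\mathcal{E}}_2H_1'$, and the lowest eigenvalues of $H_1'$ (equivalently of $H_2$) track those of $H$ up to $\epsilon_1+\epsilon_2\le2\|H\|$. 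Hence the remainder is $O(\epsilon_2\Delta_1^{-1}\|H\|)$, and $\|H-\widetilde{\mathcal{E}}^{\dagger}H_2\widetilde{\mathcal{E}}\|\le\epsilon_1+\epsilon_2+O(\epsilon_2\Delta_1^{-1}\|H\|)=\epsilon$.

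The main obstacle is precisely this last piece of bookkeeping. A careless estimate of the rotation correction produces a factor $\|H_1\|$ or $\|H_2\|$, either of which can be enormous (in the perturbative applications $H_1$ has norm of order the perturbation scale $\Delta$), whereas the lemma insists on a factor $\|H\|$; getting it right forces one never to bound $\|H_2\|$ directly but always to compress $H_2$ onto the $O(N)$-dimensional subspaces $\operatorname{ran}Q$, $\operatorname{ran}Q'$ on which it is tame, and to check that $\operatorname{ran}Q+\operatorname{ran}Q'$ remains well conditioned — which it is, because $\Delta_1>2\epsilon_2$ keeps $\|Q-Q'\|$ strictly below, say, $1/2$. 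The remaining ingredients — Weyl's inequality, the $\sin\Theta$ bound, and the existence of a small unitary rotation between two nearby subspaces of equal dimension — are classical and can simply be quoted.
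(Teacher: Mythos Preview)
The paper does not actually prove this lemma; it is quoted as Lemma~3 of Bravyi and Hastings~\cite{Bravyi2014a} and used as a black box. So there is no ``paper's own proof'' to compare against here. That said, your argument is essentially the standard one (and is the one used in~\cite{Bravyi2014a}): identify $\mathcal{L}_N(H_2)$ with $\widetilde{\mathcal{E}}_2(\mathcal{L}_N(H_1'))$, use Weyl plus a Davis--Kahan bound to control $\|P-P'\|$, rotate by the direct rotation $R$, and then check S1--S3. Your handling of the delicate point in S2 --- that one must never bound $\|H_2\|$ globally but only on $\operatorname{ran}Q+\operatorname{ran}Q'$, where the eigenvalues are $O(\|H\|)$ by the chain of Weyl inequalities and the hypothesis $\epsilon_1,\epsilon_2\le\|H\|$ --- is exactly right and is indeed where a careless argument would produce the wrong factor. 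The proof is correct.
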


If $\Delta_1 \gg \|H\|$, we have $\eta \approx \eta_1 + \eta_2$, $\epsilon \approx \epsilon_1 + \epsilon_2$. 

\subsection{Perturbative gadgets}

The idea of a perturbative gadget is to have a simulator Hamiltonian consisting of two parts $H_{\text{sim}}=\Delta H_0+V$, for some large parameter $\Delta \gg 1$, where $\|V\|\leqslant \Delta/2$, so that the $\Delta H_0$ term dominates. 

Let $\mathcal{H}_-$ be the $N$ dimensional ground space of $H_0$, and split the simulator space as $\mathcal{H}_{\text{sim}}=\mathcal{H}_- \oplus \mathcal{H}_+$. The encoding map $\mathcal{E}:\mathcal{H}\rightarrow\mathcal{H}_{\text{sim}}$ will always be chosen such that Im$(\mathcal{E})=\mathcal{H}_-$, and will usually be obvious in context. Let $P_{\pm}$ be the projection operators onto $\mathcal{H}_{\pm}$. For any linear operator $O$ acting on $\mathcal{H}_{\text{sim}}$ we define the following subscript notation:
\[O_{--}=P_-OP_- \quad O_+=P_+OP_+ \quad O_{-+}=P_-OP_+ \quad O_{+-}=P_+OP_-.\]
We assume that $(H_0)_{--}=0$ and that all eigenvalues of $(H_0)_{++}$ are greater than 1.

To find the effective Hamiltonian, we use the Schrieffer--Wolff transformation~\cite{bravyi11,Bravyi2014a}. This is a unitary operator $e^S$ on $\mathcal{H}_{\text{sim}}$, where $S$ is an antihermitian operator satisfying $S_{--}=0=S_{++}$. $e^S$ relates the subspaces $\mathcal{L}_N(H_{\text{sim}})$ and $\mathcal{H}_-$, such that $(e^S H_{\text{sim}}e^{-S})_{-+}=0=(e^S H_{\text{sim}}e^{-S})_{+-}$. These equations in fact determine the operator $S$ uniquely. Let $\widetilde{\mathcal{E}}=e^{-S}\mathcal{E}$ so (S1) is satisfied by the definition of $e^{-S}$, and for (S3) $\|\mathcal{E}-\widetilde{\mathcal{E}}\|=\|I-e^{-S}\|=O(\|S\|)$ which it is possible to show is small for large $\Delta$. 

Then define the effective Hamiltonian  $H_{\text{eff}}=(e^S H_{\text{sim}}e^{-S})_{--}$ , so
\[\widetilde{\mathcal{E}}^{\dagger}H_{\text{sim}}\widetilde{\mathcal{E}}=\mathcal{E}^{\dagger}e^SH_{\text{sim}}e^{-S}\mathcal{E}=\mathcal{E}^{\dagger}H_{\text{eff}}\mathcal{E}\]
and so (S2) becomes 
\[\|H_{\text{target}}-\widetilde{\mathcal{E}}^{\dagger}H_{\text{sim}}\widetilde{\mathcal{E}}\|=\|H_{\text{target}}-\mathcal{E}^{\dagger}H_{\text{eff}}\mathcal{E}\| = \|\overline{H}_{\text{target}}-H_{\text{eff}}\|\]
where $\overline{H}_{\text{target}}=\mathcal{E}H_{\text{target}}\mathcal{E}^{\dagger}$ is the logical encoding of the target Hamiltonian $H_{\text{target}}$ in the simulator space. 

The aim is therefore to find a simulator Hamiltonian $H_{\text{sim}}$ such that $H_{\text{eff}}$ approximates $\overline{H}_{\text{target}}$. Calculating $H_{\text{eff}}$ exactly is very difficult, but we can express it as a Taylor series in powers of $V$. Truncating this Taylor series at first, second and third order gives the following three lemmas respectively. For a more detailed explanation of the Schrieffer-Wolf transformation and proofs of all of these results, see~\cite{Bravyi2014a}.
\begin{lemma}[First Order]\label{lem:1st}~\cite{Bravyi2014a} Suppose one can choose $H_0, V$ such that the non-zero eigenvalues of $H_0$ are all greater than or equal to 1 and
\[\|\overline{H}_{\text{target}} - V_{--}\| \leqslant \varepsilon/2. \]
Suppose $\|V\| \leqslant \Lambda$. Then $H_{sim} = \Delta H_0 + V$ simulates
$H_{target}$ with error $(\eta,\epsilon)$, provided that $\Delta \geqslant \Omega(\epsilon^{-1}\Lambda^2+\eta^{-1}\Lambda).$
\end{lemma}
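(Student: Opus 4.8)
The plan is to follow the Schrieffer--Wolff set-up of Section~\ref{sec:perturb} verbatim and check conditions S1--S3 by hand, the only real work being to control the generator $S$ and the first-order truncation error of $H_{\text{eff}}$ as power series in $1/\Delta$. First I would fix the scales: since the non-zero eigenvalues of $H_0$ are $\ge 1$, the operator $\Delta H_0$ is block-diagonal on $\mathcal H_-\oplus\mathcal H_+$ with $(\Delta H_0)_{--}=0$ and every eigenvalue of $(\Delta H_0)_{++}$ at least $\Delta$, so the two blocks are separated by a gap $\ge\Delta$. Since $\|V\|\le\Lambda$ and the hidden constant in $\Omega(\cdot)$ can be taken large enough to force $\Delta\ge 2\Lambda$ (the standing assumption $\|V\|\le\Delta/2$ of the gadget set-up), the Schrieffer--Wolff generator $S$ of~\cite{bravyi11,Bravyi2014a} exists as a norm-convergent series, is block-off-diagonal ($S_{--}=S_{++}=0$), and satisfies the standard estimate $\|S\|=O(\|V\|/\Delta)=O(\Lambda/\Delta)$, which follows from the leading relation $[\Delta H_0,S]=-(V_{-+}+V_{+-})+O(\|V\|^2/\Delta)$ together with the fact that all relevant energy denominators are at least $\Delta$. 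This bound is the quantitative crux; the rest is bookkeeping.

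For S3, since $S$ is antihermitian, $e^{-S}$ is unitary and
\[\|\mathcal E-\widetilde{\mathcal E}\|=\|(I-e^{-S})\mathcal E\|\le\|I-e^{-S}\|\le e^{\|S\|}-1\le 2\|S\|=O(\Lambda/\Delta),\]
which is at most $\eta$ provided $\Delta\ge\Omega(\eta^{-1}\Lambda)$. For S1, the defining feature of the Schrieffer--Wolff rotation is that $e^{S}H_{\text{sim}}e^{-S}$ is block-diagonal; hence $H_{\text{sim}}$ restricted to $e^{-S}\mathcal H_-$ is unitarily equivalent to the $--$ block $(e^SH_{\text{sim}}e^{-S})_{--}=H_{\text{eff}}$, whose eigenvalues are $O(\Lambda)$, while on $e^{-S}\mathcal H_+$ it is equivalent to the $++$ block with eigenvalues $\ge\Delta-O(\Lambda)$. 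For $\Delta$ large these $N$ eigenvalues are exactly the $N$ smallest, so $\mathcal L_N(H_{\text{sim}})=e^{-S}\mathcal H_-=\operatorname{Im}(e^{-S}\mathcal E)=\operatorname{Im}(\widetilde{\mathcal E})$.

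For S2 I would expand $H_{\text{eff}}=(e^SH_{\text{sim}}e^{-S})_{--}$ in powers of $S$. The $\Delta H_0$ part contributes nothing at orders $0$ and $1$ in $S$ (both $(\Delta H_0)_{--}$ and $([S,\Delta H_0])_{--}$ vanish because $\Delta H_0$ is block-diagonal and $S$ block-off-diagonal), and from order $2$ on it is $O(\Delta\|S\|^2)=O(\Lambda^2/\Delta)$; the $V$ part gives $V_{--}$ at order $0$ plus $O(\|V\|\,\|S\|)=O(\Lambda^2/\Delta)$. Hence $\|H_{\text{eff}}-V_{--}\|=O(\Lambda^2/\Delta)$, and with the hypothesis $\|\overline H_{\text{target}}-V_{--}\|\le\varepsilon/2$ the triangle inequality yields $\|\overline H_{\text{target}}-H_{\text{eff}}\|\le \varepsilon/2+O(\Lambda^2/\Delta)\le\epsilon$ once $\Delta\ge\Omega(\epsilon^{-1}\Lambda^2)$ (identifying $\varepsilon$ with $\epsilon$). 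Since $\widetilde{\mathcal E}^{\dagger}H_{\text{sim}}\widetilde{\mathcal E}=\mathcal E^{\dagger}H_{\text{eff}}\mathcal E$, this is precisely S2, and taking the maximum of the two lower bounds on $\Delta$ gives the stated condition $\Delta\ge\Omega(\epsilon^{-1}\Lambda^2+\eta^{-1}\Lambda)$.

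The one genuinely nontrivial ingredient is the estimate $\|S\|=O(\Lambda/\Delta)$ together with norm-convergence of the Schrieffer--Wolff series and control of every truncation remainder at the advertised power of $1/\Delta$; rather than reprove this I would invoke the general perturbative machinery of Bravyi and Hastings~\cite{Bravyi2014a} and simply track how the parameters $\Delta$, $\Lambda$, $\eta$ and $\epsilon$ propagate through S1--S3 as above. An alternative that sidesteps the Schrieffer--Wolff transformation is to bound the self-energy $\Sigma_-(z)=z-(G_{--}(z))^{-1}$ of the resolvent $G(z)=(z-H_{\text{sim}})^{-1}$, show $\|\Sigma_-(z)-\overline H_{\text{target}}\|$ is small for $z$ near the target spectrum, and then appeal to the standard theorem relating the self-energy to the low-lying spectrum and eigenvectors of $H_{\text{sim}}$.
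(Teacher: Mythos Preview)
Your proposal is correct and follows exactly the Schrieffer--Wolff framework the paper sets up in Section~\ref{sec:perturb}; indeed the paper does not prove Lemma~\ref{lem:1st} at all but simply cites~\cite{Bravyi2014a} for it, so your sketch is essentially a reconstruction of that argument. The only comment is that your remark about invoking~\cite{Bravyi2014a} for the bound $\|S\|=O(\Lambda/\Delta)$ and the convergence of the series is precisely what the paper itself does, so there is no gap relative to the paper's treatment.
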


\begin{lemma}[Second Order] \label{lem:2nd}~\cite{Bravyi2014a} Suppose one can choose $H_0, V_{\text{main}}, V_{\text{extra}}$ such that $(H_0)_{++}$ has all eigenvalues greater than or equal to 1, $(V_{\text{extra}})_{+-} = (V_{\text{extra}})_{-+} = 0$, $(V_{\text{main}})_{--} = 0$, and
\[\| \overline{H}_{\text{target}} - (V_{\text{extra}})_{--} + (V_{\text{main}})_{-+}H_{0}^{-1}(V_{\text{main}})_{+-}\| \leqslant \varepsilon/2. \]
Suppose $\|V_{\text{main}}\|, \|V_{\text{extra}}\| \leqslant \Lambda$. Then $H_{sim} = \Delta H_0 + \Delta^{1/2} V_{\text{main}} + V_{\text{extra}}$ simulates
$H_{target}$ with error $(\eta,\epsilon)$, provided that $\Delta \geqslant \Omega(\epsilon^{-2}\Lambda^6+\eta^{-2}\Lambda^2).$
\end{lemma}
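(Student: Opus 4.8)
The plan is to obtain the statement as a power-counting corollary of the general Schrieffer--Wolff (SW) machinery, quoting from \cite{Bravyi2014a,bravyi11} two facts: (a) for $H_{\operatorname{sim}} = \Delta H_0 + V$ with $\|V\| \leqslant \Delta/2$ there is a unique antihermitian $S$ with $S_{--} = S_{++} = 0$ making $e^S H_{\operatorname{sim}} e^{-S}$ block-diagonal across $\mathcal{H}_- \oplus \mathcal{H}_+$; and (b) both $S = \sum_{m \geqslant 1} S^{(m)}$ and $H_{\text{eff}} = (e^S H_{\operatorname{sim}} e^{-S})_{--} = \sum_{m \geqslant 0} H^{(m)}$ are norm-convergent, with $H^{(0)} = \Delta (H_0)_{--}$ and, for $m \geqslant 1$, $H^{(m)}$ assembled from $m$ factors of $V$ and $m-1$ resolvents $((\Delta H_0)_{++})^{-1} = \Delta^{-1}(H_0)_{++}^{-1}$, while $S^{(m)}$ uses $m$ factors of $V$ and $m$ such resolvents. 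Since the nonzero eigenvalues of $H_0$ are $\geqslant 1$ we have $\|(H_0)_{++}^{-1}\| \leqslant 1$, and the combinatorial coefficients satisfy $\|H^{(m)}\| \leqslant c_m \Delta^{-(m-1)} \|V\|^m$ and $\|S^{(m)}\| \leqslant c_m' \Delta^{-m} \|V\|^m$ with $c_m, c_m'$ at most exponential in $m$. Setting $\widetilde{\mathcal{E}} = e^{-S}\mathcal{E}$ makes (S1) automatic, and since $\widetilde{\mathcal{E}}^\dagger H_{\operatorname{sim}} \widetilde{\mathcal{E}} = \mathcal{E}^\dagger H_{\text{eff}} \mathcal{E}$, property (S2) reduces to bounding $\|\overline{H}_{\text{target}} - H_{\text{eff}}\|$ and (S3) to bounding $\|I - e^{-S}\|$.

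The heart of the argument is to substitute $V = \Delta^{1/2} V_{\text{main}} + V_{\text{extra}}$ and track powers of $\Delta$. The order-$0$ term vanishes: $H^{(0)} = \Delta(H_0)_{--} = 0$. For the order-$1$ term, $H^{(1)} = V_{--} = \Delta^{1/2}(V_{\text{main}})_{--} + (V_{\text{extra}})_{--} = (V_{\text{extra}})_{--}$, using $(V_{\text{main}})_{--} = 0$ to kill the otherwise divergent piece. For the order-$2$ term, $H^{(2)} = -V_{-+}((\Delta H_0)_{++})^{-1}V_{+-}$; the hypothesis $(V_{\text{extra}})_{-+} = (V_{\text{extra}})_{+-} = 0$ forces $V_{\mp\pm} = \Delta^{1/2}(V_{\text{main}})_{\mp\pm}$, so the two factors of $\Delta^{1/2}$ exactly cancel the $\Delta^{-1}$ in the resolvent and $H^{(2)} = -(V_{\text{main}})_{-+}H_0^{-1}(V_{\text{main}})_{+-}$ (with $H_0^{-1}$ understood on $\mathcal{H}_+$). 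Hence $H^{(0)} + H^{(1)} + H^{(2)}$ is precisely the operator the assumption places within $\varepsilon/2$ of $\overline{H}_{\text{target}}$. This is the point of the $\Delta^{1/2}$ scaling: it is the unique normalisation under which the genuine second-order term survives the large-$\Delta$ limit while the first-order contribution of $V_{\text{main}}$ is removed by the stated hypothesis.

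It remains to estimate the tail $\sum_{m \geqslant 3} H^{(m)}$ and the operator $S$. Using $\|V\| \leqslant \Delta^{1/2}\Lambda + \Lambda \leqslant 2\Delta^{1/2}\Lambda$ for $\Delta \geqslant 1$, the general-term bound gives $\|H^{(m)}\| \leqslant c_m 2^m \Lambda^m \Delta^{1 - m/2}$, so for every $m \geqslant 3$ this is $O(\Lambda^m \Delta^{-1/2})$ once $\Delta \gtrsim \Lambda^2$ (which also ensures $\|V\| \leqslant \Delta/2$ and convergence of the series); summing the geometric tail, $\|\sum_{m\geqslant 3} H^{(m)}\| = O(\Lambda^3 \Delta^{-1/2})$, which is at most $\varepsilon/2$ provided $\Delta \geqslant \Omega(\varepsilon^{-2}\Lambda^6)$, giving (S2). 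Similarly $\|S^{(m)}\| \leqslant c_m' 2^m \Lambda^m \Delta^{-m/2}$ is dominated by $m = 1$, so $\|S\| = O(\Lambda \Delta^{-1/2})$ and $\|I - e^{-S}\| \leqslant e^{\|S\|} - 1 = O(\Lambda \Delta^{-1/2}) \leqslant \eta$ once $\Delta \geqslant \Omega(\eta^{-2}\Lambda^2)$, giving (S3). Choosing $\Delta \geqslant \Omega(\varepsilon^{-2}\Lambda^6 + \eta^{-2}\Lambda^2)$ satisfies all requirements simultaneously.

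The only real obstacle I anticipate is careful bookkeeping: checking that the hypotheses $(V_{\text{main}})_{--} = 0$ and $(V_{\text{extra}})_{\pm\mp} = 0$ really do eliminate every order-$\leqslant 2$ contribution that would otherwise diverge or survive with the wrong value, and that the SW combinatorial constants $c_m, c_m'$ grow slowly enough for the geometric tail estimates to close in the stated regime. Both are established in \cite{Bravyi2014a}; here they only need to be invoked after the substitution $V = \Delta^{1/2}V_{\text{main}} + V_{\text{extra}}$, so the genuinely new content is the power-counting above.
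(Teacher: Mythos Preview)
Your proposal is correct and follows exactly the approach the paper itself sketches in Section~\ref{sec:perturb}: the paper does not prove Lemma~\ref{lem:2nd} in detail but defers to~\cite{Bravyi2014a}, after outlining precisely the Schrieffer--Wolff machinery and Taylor-series truncation that you spell out. Your power-counting argument---that the $\Delta^{1/2}$ scaling on $V_{\text{main}}$ makes the second-order term $\Delta$-independent while the hypotheses $(V_{\text{main}})_{--}=0$ and $(V_{\text{extra}})_{\pm\mp}=0$ kill the otherwise problematic contributions, and that the $m\geqslant 3$ tail is $O(\Lambda^3\Delta^{-1/2})$---is the substance of the proof in~\cite{Bravyi2014a}, so there is nothing to add.
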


\begin{lemma}[Third Order] \label{lem:3rd}~\cite{Bravyi2014a} Suppose one can choose $H_0, V_{\text{main}}, V_{\text{extra}},\widetilde{V}_{\text{extra}} $ such that $(H_0)_{++}$ has all eigenvalues greater than or equal to 1, $(V_{\text{extra}})_{+-} = (V_{\text{extra}})_{-+} = 0$, $(\widetilde{V}_{\text{extra}})_{+-} = (\widetilde{V}_{\text{extra}})_{-+} = 0$, $(V_{\text{main}})_{--} = 0,$\
\[\|\overline{H}_{\text{target}} - (V_{\text{extra}} )_{--} + (V_{\text{main}})_{-+}H_{0}^{-1}(V_{\text{main}})_{++}H_{0}^{-1}(V_{\text{main}})_{+-}\| \leqslant \epsilon/2\]
and
\[(\widetilde{V}_{\text{extra}})_{--}=(V_{\text{main}})_{-+}H_0^{-1}(V_{\text{main}})_{+-}.\]
Suppose $\|V_{\text{main}}\|, \|V_{\text{extra}}\|,\|\widetilde{V}_{\text{extra}}\|\leqslant \Lambda$. Then $H_{sim} = \Delta H_0 + \Delta^{2/3} V_{\text{main}} + \Delta^{1/3}\widetilde{V}_{\text{extra}}+V_{\text{extra}}$ simulates
$H_{target}$ with error $(\eta,\epsilon)$, provided that $\Delta \geqslant \Omega(\epsilon^{-3}\Lambda^{12}+\eta^{-3}\Lambda^3)$.
\end{lemma}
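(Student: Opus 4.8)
The plan is to deduce the statement from the general Schrieffer--Wolff / self-energy machinery of~\cite{Bravyi2014a} (see also~\cite{bravyi11}), exactly as set up above. Write $\Delta=\delta^{3}$, so that $H_{\text{sim}}=\delta^{3}H_0+\delta^{2}V_{\text{main}}+\delta\,\widetilde{V}_{\text{extra}}+V_{\text{extra}}$, and regard $\delta$ as the large parameter. One block-diagonalises $H_{\text{sim}}$ with respect to the splitting $\mathcal{H}_-\oplus\mathcal{H}_+$ by a unitary $e^{S}$ with $S$ antihermitian and $S_{--}=S_{++}=0$, sets $\widetilde{\mathcal{E}}=e^{-S}\mathcal{E}$ and $H_{\text{eff}}=(e^{S}H_{\text{sim}}e^{-S})_{--}$, and expands $H_{\text{eff}}$ as a series in powers of the perturbation $V=\delta^{2}V_{\text{main}}+\delta\,\widetilde{V}_{\text{extra}}+V_{\text{extra}}$, with the factors of $\Delta^{-1}$ supplied by the unperturbed resolvent. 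The content of the lemma is that the three-tier scaling $\delta^{3},\delta^{2},\delta,1$ has been chosen so that, after re-grouping this series by powers of $\delta$, the $O(\delta^{2})$ and $O(\delta)$ parts of $H_{\text{eff}}$ cancel and the $O(1)$ part is the encoded target.

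Carrying out the re-grouping is the first step. Because $(V_{\text{main}})_{--}=0$ and $\widetilde{V}_{\text{extra}},V_{\text{extra}}$ are block-diagonal --- so that $V_{-+}=\delta^{2}(V_{\text{main}})_{-+}$ and $V_{+-}=\delta^{2}(V_{\text{main}})_{+-}$ --- the expansion gives: the $O(\delta^{2})$ part of $H_{\text{eff}}$ is zero; the $O(\delta)$ part is $\delta\big((\widetilde{V}_{\text{extra}})_{--}-(V_{\text{main}})_{-+}H_0^{-1}(V_{\text{main}})_{+-}\big)$, assembled from the bare $\delta\,\widetilde{V}_{\text{extra}}$ and the leading piece of the second-order term $-\Delta^{-1}V_{-+}H_0^{-1}V_{+-}$, and this vanishes precisely by the side condition $(\widetilde{V}_{\text{extra}})_{--}=(V_{\text{main}})_{-+}H_0^{-1}(V_{\text{main}})_{+-}$; and the $O(1)$ part is $(V_{\text{extra}})_{--}$ together with the leading piece of the third-order term $\Delta^{-2}V_{-+}H_0^{-1}V_{++}H_0^{-1}V_{+-}=(V_{\text{main}})_{-+}H_0^{-1}(V_{\text{main}})_{++}H_0^{-1}(V_{\text{main}})_{+-}$ (with the sign furnished by the expansion), which by hypothesis is $\overline{H}_{\text{target}}$ up to $\epsilon/2$. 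Everything not listed --- fourth- and higher-order terms, the sub-leading pieces obtained from expanding $(z-\delta^{3}H_0)^{-1}$ beyond $-\delta^{-3}H_0^{-1}$, and the terms in which a $++$ block of $\widetilde{V}_{\text{extra}}$ or $V_{\text{extra}}$ is inserted into a virtual-transition string --- is $O(\delta^{-1}\Lambda^{4})$ once $\delta$ exceeds a fixed multiple of $\Lambda$.

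The second step is to make this quantitative in operator norm, following~\cite{Bravyi2014a}. Since $\|V\|\leqslant 3\Lambda\delta^{2}\leqslant\delta^{3}/2$ once $\delta$ exceeds a suitable constant multiple of $\Lambda$, the geometric tail in the resolvent identity for $(z-\Delta H_0-V_{++})^{-1}$ converges, uniformly for $z$ in the relevant window $|z|\leqslant\|H_{\text{target}}\|+\epsilon$, and produces a remainder of order $\delta^{-1}\Lambda^{4}$; hence $\|\overline{H}_{\text{target}}-H_{\text{eff}}\|\leqslant\epsilon$ once $\delta\geqslant\Omega(\epsilon^{-1}\Lambda^{4})$, i.e.\ $\Delta\geqslant\Omega(\epsilon^{-3}\Lambda^{12})$, which is (S2). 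For (S1): the $N$ lowest eigenvalues of $H_{\text{sim}}$ are $O(\|H_{\text{target}}\|)$ while the remaining eigenvalues exceed $\Delta-\|V\|=\Omega(\Delta)$, so they are well separated, and the defining property of the Schrieffer--Wolff rotation then gives $\mathrm{Im}(\widetilde{\mathcal{E}})=\mathcal{L}_N(H_{\text{sim}})$. For (S3): $\|\mathcal{E}-\widetilde{\mathcal{E}}\|=\|I-e^{-S}\|=O(\|S\|)$ and $\|S\|=O(\|V_{-+}\|/\Delta)=O(\delta^{2}\Lambda/\delta^{3})=O(\Lambda/\delta)$, so $\|\mathcal{E}-\widetilde{\mathcal{E}}\|\leqslant\eta$ once $\delta\geqslant\Omega(\eta^{-1}\Lambda)$, i.e.\ $\Delta\geqslant\Omega(\eta^{-3}\Lambda^{3})$. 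Taking $\Delta$ at least the larger of the two thresholds completes the argument.

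I expect the main obstacle to be the bookkeeping in the first step: one must verify that, after re-expanding the Schrieffer--Wolff (equivalently, Feshbach) series in powers of $\delta$, genuinely no unwanted term of order $\delta^{1}$ or larger survives --- the $O(\delta)$ cancellation hinges on the exact identity supplied by the side condition, and one has to confirm that all cross-terms among $V_{\text{main}}$, $\widetilde{V}_{\text{extra}}$ and $V_{\text{extra}}$, together with the $z$-dependent corrections to the resolvent, are truly $O(\delta^{-1})$ rather than $O(1)$. A secondary nuisance is propagating the $\poly(\Lambda)$ factors through every term so as to land on exactly the exponents $12$ and $3$, and handling the $z$-dependence of the self-energy cleanly (restricting to the window above and using the operator-monotonicity estimates of~\cite{Bravyi2014a}). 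Conceptually the lemma is nothing more than third-order degenerate perturbation theory with an auxiliary term $\widetilde{V}_{\text{extra}}$ inserted to annihilate the second-order contribution; the care is entirely in not shedding a power of $\delta$ and in the uniformity of the norm bounds.
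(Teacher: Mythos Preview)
The paper does not actually prove this lemma: it is stated with a citation to~\cite{Bravyi2014a}, and the surrounding text says explicitly ``For a more detailed explanation of the Schrieffer--Wolff transformation and proofs of all of these results, see~\cite{Bravyi2014a}.'' So there is no in-paper proof to compare against.

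That said, your sketch is a correct outline of the argument as it appears in~\cite{Bravyi2014a}: the substitution $\Delta=\delta^{3}$, the observation that $V_{-+}=\delta^{2}(V_{\text{main}})_{-+}$ because the extra terms are block-diagonal, the cancellation at order $\delta$ enforced by the side condition on $(\widetilde{V}_{\text{extra}})_{--}$, and the identification of the $O(1)$ part with $(V_{\text{extra}})_{--}-(V_{\text{main}})_{-+}H_0^{-1}(V_{\text{main}})_{++}H_0^{-1}(V_{\text{main}})_{+-}$ are all right. Your accounting of the remainder as $O(\delta^{-1}\Lambda^{4})$ and of $\|S\|=O(\Lambda/\delta)$ is also the standard one, and it delivers the stated thresholds $\Delta\geqslant\Omega(\epsilon^{-3}\Lambda^{12})$ and $\Delta\geqslant\Omega(\eta^{-3}\Lambda^{3})$. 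The only thing I would flag is that your remark about ``the sign furnished by the expansion'' for the third-order term is slightly vague; in the Schrieffer--Wolff series the third-order piece carries a minus sign (matching the lemma's hypothesis $\overline{H}_{\text{target}}\approx (V_{\text{extra}})_{--}-(V_{\text{main}})_{-+}H_0^{-1}(V_{\text{main}})_{++}H_0^{-1}(V_{\text{main}})_{+-}$), and it would be cleaner to write that down explicitly rather than leave it implicit.
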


Lemmas \ref{lem:simulate}--\ref{lem:3rd} reduce the problem of simulating desired effective interactions to the problem of finding suitable operators $H_0, V_{\text{main}}, V_{\text{extra}},\widetilde{V}_{\text{extra}}$. To prove hardness results we will always take $\epsilon,\eta=O(1/\poly(n))$, making the approximation error small enough that it can be disregarded. We do not attempt to optimise (or even calculate precisely) the size of the weights $\Delta=O(\poly(n))$ that occur in our reductions, but they will always be polynomial in $n$ and easily computable.

\subsection{Parallel use of gadgets}
\label{sec:parallel}
It is often required to implement a poly($n$) number of perturbative gadgets in parallel across different subsets of qubits without increasing the coefficients to super-polynomial size. Here we describe (following ideas of~\cite{Oliveira2005,bravyi11}) how this can be achieved in a quite general setting.

Consider a system of $n$ qubits partitioned into $k+1$ disjoint subsets of qubits, labelled $S_i$ for $i=1,2,\dots k$. We will apply heavy interactions to the first $k$ subsets, in order to simulate an effective Hamiltonian on the remaining subset of qubits $S_{k+1}$.  Let $H_0^{(i)}$ denote a heavy interaction which acts only on $S_i$, $V^{(i)}$ be an interaction term that acts non trivially only on $S_i \cup S_{k+1}$, and $P_{-}^{(i)}$ the projection operator onto the ground space of $H_0$. 

Now consider the Hamiltonian terms for the whole system $H_0=\sum_{i}H_0^{(i)}$ and $V=\sum_{i} V^{(i)}$.
The projection operator onto the ground space of $H_0$ is given by $\prod_{i} P_{-}^{(i)}$. 
For a first order perturbation it is immediate that $V_{--}=\sum_{i} V^{(i)}_{--}$, and so applying Lemma \ref{lem:1st}, we see that $\Delta H_0 +V$ simulates $\sum V_{--}^{(i)}$, as long as $\Delta = \Omega (\epsilon^{-1} \|V\|^2 )$.

Since $H_{\text{eff}}$ acts only on $\mathcal{H}_-$, we need only consider how such an operator acts on the ground space of the overall Hamiltonian $H_0$, which has basis elements that can be expressed as a product state $|\psi^{(1)}\>|\psi^{(2)}\>\dots |\psi^{(k)}\>|\phi\>$ where each $|\psi^{(i)}\>$ is in the ground space of $H_0^{(i)}$.

For second order perturbation terms $V^{(i)}$, we need to calculate $V_{-+}H_0^{-1}V_{+-}$. Since $V^{(i)}$ acts nontrivially only on subsets $i$ and $k+1$, only $|\psi^{(i)}\>$ will be elevated out of the ground space by $V^{(i)}_{+-}$, and so:
  \[V_{-+}H_0^{-1}V_{+-}=\sum_{i,j}V^{(j)}_{-+}H_0^{-1}V^{(i)}_{+-}=\sum_{i}V^{(i)}_{-+}(H^{(i)}_0)^{-1}V^{(i)}_{+-}\]

For third order perturbation terms $V^{(i)}$ satisfying $V^{(i)}_{--}=0$, each $V^{(i)}$ term can only move the $i$th subset of qubits in and out of the ground space $\mathcal{H}_-$, so
\[V_{-+}H_0^{-1}V_{++}H_0^{-1}V_{+-}=\sum_{i,j,k}V^{(i)}_{-+}H_0^{-1}V^{(j)}_{++}H_0^{-1}V^{(k)}_{+-}=\sum_{i}V^{(i)}_{-+}(H^{(i)}_0)^{-1}V^{(i)}_{++}(H^{(i)}_0)^{-1}V^{(i)}_{+-}\]
So in each case, the effective Hamiltonian simulated by $V=\sum V^{(i)}$ and $H_0=\sum H_0^{(i)}$  is the sum of the interactions simulated by each of the $V^{(i)}$ and $H_0^{(i)}$ separately.

\subsection{The basic gadget}
\label{sec:ex}

We now describe a second-order perturbative gadget which will be used throughout this paper. Let $H=\alpha XX+\beta YY+\gamma ZZ$ where $\alpha+\beta>0,$ $\alpha+\gamma>0$ and $\beta+\gamma>0$. Consider a system of $n$ qubits labelled by $i \in \{1,2,\dots n\}$ along with two extra ancilla qubits $a$ and $b$. The heavy interaction term $H_0$ is just the interaction $H$ applied to the ancilla qubits and a constant term to ensure that $(H_0)_{--}=0$. This could then be multiplied by a further constant term to ensure that $(H_0)_{++}$ has all eigenvalues greater than 1, as required for Lemmas \ref{lem:1st}-\ref{lem:3rd}. Then
\begin{equation} \label{eq:h0} H_0=\tfrac{1}{2}\left[ H_{ab}+(\alpha+\beta+\gamma)I\right]
=(\alpha+\beta)|\Psi^{+}\>\<\Psi^{+}| +(\alpha+\gamma)|\Phi^{+}\>\<\Phi^{+}|+(\beta+\gamma)|\Phi^{-}\>\<\Phi^{-}|\end{equation}
where $|\Psi^{\pm}\>, |\Phi^{\pm}\>$ are the maximally entangled states defined by
 $|\Psi^{\pm}\>=\frac{|01\>\pm|10\>}{\sqrt{2}}, \quad |\Phi^{\pm}\>=\frac{|00\>\pm|11\>}{\sqrt{2}}$.

In this diagonal form, we see that the conditions $\alpha+\beta>0,$ $\alpha+\gamma>0$ and $\beta+\gamma>0$ are exactly what is required for $|\Psi^-\>$ to be the unique ground state of $H$. Furthermore it is easy to calculate the inverse of $H$, which will be necessary to determine the second order perturbation:

\[H_0^{-1}=\frac{1}{\alpha+\beta}|\Psi^{+}\>\<\Psi^{+}| +\frac{1}{\alpha+\gamma}|\Phi^{+}\>\<\Phi^{+}|+\frac{1}{\beta+\gamma}|\Phi^{-}\>\<\Phi^{-}|. \]

To define $V$, partition the qubits $\{1,2,\dots n\}$ into two sets $A$ and $B$; and let 
\[V=\sum_{i \in A} \lambda_i H_{ia} +\sum_{j \in B} \mu_j H_{jb}. \]

Note that since $|\Psi^-\>_{ab}$ is maximally entangled, $V_{--}=0$. We also need to calculate $V_{-+}$, which can be done term by term:
\[(H_{ia})_{-+}=\alpha X_i |\Psi^- \>\<\Phi^{-}|_{ab}+i\beta Y_i |\Psi^-\>\<\Phi^+|_{ab} -\gamma Z_i |\Psi^-\>\<\Psi^+|_{ab}=-(H_{ib})_{-+}\]
To calculate $V_{+-}$, simply note that for any Hermitian operator $V$, $V_{+-}=(V_{-+})^{\dagger}$. This is now enough to calculate the second order perturbations. First
\[(H_{ia})_{-+} H_0^{-1} (H_{ia})_{+-}=\left(\frac{\alpha ^2}{\beta+\gamma}+\frac{\beta ^2}{\alpha+\gamma}+\frac{\gamma ^2}{\alpha+\beta}\right)|\Psi^-\>\<\Psi^-|_{ab}=(H_{ib})_{-+} H_0^{-1} (H_{ib})_{+-}\]
These terms just contribute to an overall energy shift in the Hamiltonian. The following terms simulate more interesting interactions:
\[(H_{ia})_{-+} H_0^{-1} (H_{jb})_{+-}=\left(-\frac{\alpha ^2}{\beta+\gamma}X_iX_j-\frac{\beta ^2}{\alpha+\gamma}Y_iY_j-\frac{\gamma ^2}{\alpha+\beta}Z_iZ_j\right)|\Psi^-\>\<\Psi^-|_{ab}\]
\[(H_{ia})_{-+} H_0^{-1} (H_{ja})_{+-}=\left(\frac{\alpha ^2}{\beta+\gamma}X_iX_j+\frac{\beta ^2}{\alpha+\gamma}Y_iY_j+\frac{\gamma ^2}{\alpha+\beta}Z_iZ_j\right)|\Psi^-\>\<\Psi^-|_{ab}\]
\[(H_{ib})_{-+} H_0^{-1} (H_{jb})_{+-}=\left(\frac{\alpha ^2}{\beta+\gamma}X_iX_j+\frac{\beta ^2}{\alpha+\gamma}Y_iY_j+\frac{\gamma ^2}{\alpha+\beta}Z_iZ_j\right)|\Psi^-\>\<\Psi^-|_{ab}\]

We define
\begin{equation} \label{eq:htilde} \widetilde{H}=\frac{\alpha ^2}{\beta+\gamma}XX+\frac{\beta ^2}{\alpha+\gamma}YY+\frac{\gamma ^2}{\alpha+\beta}ZZ \end{equation}
and will use this notation throughout the paper. Lemma \ref{lem:2nd} tells us that for large $\Delta$, the Hamiltonian $\Delta H_0 +\Delta^{1/2}V$ simulates, up to an overall energy shift, the following target Hamiltonian:

\[H_{\text{target}}=\sum_{
i \in A, j \in B } \lambda_i\mu_j \widetilde{H}_{ij}  -
\sum_{i,j \in A} \lambda_i\lambda_j \widetilde{H}_{ij} - \sum_{i,j \in B} \mu_i \mu_j \widetilde{H}_{ij} \] 

The gadget is illustrated in Figure \ref{fig:ex}. We see that a positive interaction is simulated between any two target qubits connected to opposite ancilla qubits and a negative interaction is simulated between any two target qubits connected to the same ancilla qubit.

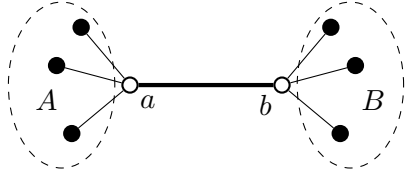
\begin{figure}[t]
\centering
\begin{tikzpicture} 

\draw[dashed] (-0.9,0) ellipse (0.7cm and 1.1cm);
\draw[dashed] (2.9,0) ellipse (0.7cm and 1.1cm);
\node at (-1.1,-0.2) {$A$};
\node at (3.2,-0.2) {$B$};

\node[mqubit] (a) at (0,0) {};
\node[mqubit] (b) at (2,0){};
\draw[heavy]  (a) to (b);

\node[qubit] (i1) at ($(a)+(130:1)$){};
\node[qubit] (i2) at ($(a)+(165:1)$){};
\node[qubit] (i3) at ($(a)+(-140:1)$){};
\draw  (i1) to (a);
\draw  (i2) to (a);
\draw  (i3) to (a);

\node[qubit] (j1) at ($(b)+(50:1)$){};
\node[qubit] (j2) at ($(b)+(15:1)$){};
\node[qubit] (j3) at ($(b)+(-40:1)$){};
\draw  (j1) to (b);
\draw  (j2) to (b);
\draw  (j3) to (b);

\node[anchor=north west] at (0,0) {$a$};
\node[anchor=north east] at (2,0) {$b$};
\end{tikzpicture}

\caption{Interaction graph for the basic gadget. In this figure, as throughout the paper, vertices represent qubits and edges represent interactions. Vertices which are filled in represent qubits between which we would like to produce effective interactions, while vertices not filled in represent ancilla qubits. Thick lines represent heavy interactions, while thin lines represent weak interactions.}
\label{fig:ex}
\end{figure}

\subsubsection{Antisymmetric case}
\label{sec:antisym}

We will also use a similar gadget for antisymmetric interactions $H$. A general 2-local antisymmetric interaction has the form $XZ-ZX$ up to normalisation. We will use the following basis, in which $XZ-ZX$ is diagonal:
\[|\psi_0\>=\frac{|00\>+|01\>-|10\>+|11\>}{2}, \quad |\psi_1\>=\frac{|00\>+|01\>+|10\>-|11\>}{2}\]
\[|\psi_2\>=\frac{-|00\>+|01\>+|10\>+|11\>}{2}, \quad |\psi_3\>=\frac{|00\>-|01\>+|10\>+|11\>}{2}.\]

We consider a Hamiltonian with the same interaction graph as in Figure \ref{fig:ex}. As before, the heavy interaction term $H_0$ acts on a pair of ancilla qubits labelled $a$ and $b$, where
\[H_0=\tfrac{1}{2}(X_aZ_b-Z_aX_b+2I)=|\psi_1\>\<\psi_1|+|\psi_2\>\<\psi_2|+2|\psi_3\>\<\psi_3|.\]
And as before the other terms will be of the form
\[V=\sum_{i \in A} \lambda_i (X_aZ_i-Z_aX_i) +\sum_{j \in B} \mu_j (X_bZ_j-Z_bX_j).\]

 Note that $(X_aZ_i-Z_aX_i)_{--}=0=(X_bZ_i-Z_bX_i)_{--}$ for all $i$, so $V_{--}=0$. To work out the second order terms, we first calculate
\[(X_aZ_i-Z_aX_i)_{+-}=Z_i|\psi_2\>\<\psi_0|_{ab}-X_i|\psi_1\>\<\psi_0|_{ab}\]
\[(X_bZ_i-Z_bX_i)_{+-}=Z_i|\psi_1\>\<\psi_0|_{ab}+X_i|\psi_2\>\<\psi_0|_{ab}\]
So the second order terms are
\[(X_aZ_i-Z_aX_i)_{-+}H_0^{-1}(X_aZ_i-Z_aX_i)_{+-}=2|\psi_0\>\<\psi_0|_{ab}= (X_bZ_i-Z_bX_i)_{-+}H_0^{-1}(X_bZ_j-Z_bX_j)_{+-}\]
which just contribute an overall energy shift, and 
\[(X_aZ_i-Z_aX_i)_{-+}H_0^{-1}(X_aZ_j-Z_aX_j)_{+-}=(X_iX_j+Z_iZ_j)|\psi_0\>\<\psi_0|_{ab}\]
\[(X_aZ_i-Z_aX_i)_{-+}H_0^{-1}(X_bZ_j-Z_bX_j)_{+-}=(Z_iX_j-X_iZ_j)|\psi_0\>\<\psi_0|_{ab}\]
which allows us to simulate an $XX+ZZ$ interaction between qubits connected to the same ancilla qubit, and a $ZX-XZ$ interaction between qubits connected to opposite ancilla qubits. That is, by Lemma 2, the Hamiltonian $H_{\text{sim}}=\Delta H_0 +\Delta^{1/2}V$ simulates the target Hamiltonian
\[H_{\text{target}}=\sum_{
i \in A, j \in B } \lambda_i\mu_j (X_iZ_j-Z_iX_j)  -
\sum_{i,j \in A} \lambda_i\lambda_j (X_iX_j+Z_iZ_j) - \sum_{i,j \in B} \mu_i \mu_j (X_iX_j+Z_iZ_j). \] 

\section{Positive weights}
\label{sec:pws}

We now have all the ingredients we need to prove Theorem \ref{thm:positive}.

\begin{lemma}
\label{lem:positive1}
Given a purely 2-local symmetric interaction $H$, which can be written in the form $\alpha XX+\beta YY + \gamma ZZ$, the problem $\{H\}^+$-\textsc{Hamiltonian} is either:
\begin{description}
\item[i)] QMA-complete, if $\alpha+\beta>0,$ $\alpha+\gamma>0$ and $\beta+\gamma>0$
\item[ii)] in StoqMA, otherwise.
\end{description}
\end{lemma}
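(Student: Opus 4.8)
The plan is to prove the two cases separately: case (ii) is a StoqMA \emph{containment} and is the easier half, while case (i) needs both the trivial QMA containment and a QMA-hardness reduction, which is where the work lies. For case (ii), suppose one of the three pairwise sums is non-positive. A suitable $U \in SU(2)$ induces any permutation of the roles of $XX$, $YY$, $ZZ$ (these are single-qubit Clifford rotations, and each $\sigma_j \otimes \sigma_j$ is insensitive to sign flips of the axes), so after relabelling we may assume $\alpha + \beta \leqslant 0$ and, after swapping the $XX$ and $YY$ slots if necessary, $\alpha \leqslant \beta$. Then, viewing $H = \alpha XX + \beta YY + \gamma ZZ$ as a $4 \times 4$ matrix in the computational basis, its only non-zero off-diagonal entries are $\langle 00|H|11\rangle = \alpha - \beta \leqslant 0$ and $\langle 01|H|10\rangle = \alpha + \beta \leqslant 0$, so $H$ is stoquastic. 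Conjugating every qubit of an arbitrary $\{H\}^+$-\textsc{Hamiltonian} instance $\sum_i \alpha_i H_{e_i}$ (with $\alpha_i \geqslant 0$) by this $U$ leaves the spectrum unchanged and turns it into a sum of non-negative multiples of a stoquastic $2$-local term, hence into a stoquastic Hamiltonian; therefore the problem is in StoqMA. (If no pairwise sum is positive one may take any pair, so this covers every instance of case (ii).)

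For case (i), containment in QMA is immediate since $\{H\}^+$-\textsc{Hamiltonian} is a restriction of \lham. For hardness I would reduce from $\{\widetilde H\}$-\textsc{Hamiltonian}, where $\widetilde H$ is the interaction of \eqref{eq:htilde}, using the basic second-order ``sign gadget'' of Section~\ref{sec:ex}. That gadget, with a heavy $H$-interaction on a pair of ancillas and weak $H$-interactions $\lambda_i H_{ia}$, $\mu_j H_{jb}$ to the target qubits, simulates (up to an energy shift) a $+\widetilde H_{ij}$ interaction between target qubits attached to opposite ancillas and a $-\widetilde H_{ij}$ interaction between target qubits attached to the same ancilla. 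Devoting one such gadget, with its own pair of fresh ancillas, to each edge $(i,j)$ of a target Hamiltonian and running them in parallel as in Section~\ref{sec:parallel} --- assigning $i,j$ to opposite ancillas with $\lambda_i = \mu_j = \sqrt{J_{ij}}$ when $J_{ij} > 0$, and to the same ancilla with $\lambda_i = \lambda_j = \sqrt{|J_{ij}|/2}$ when $J_{ij} < 0$ --- the positively-weighted Hamiltonian $\Delta H_0 + \Delta^{1/2} V$ simulates $\sum_{(i,j)} J_{ij} \widetilde H_{ij}$ for an arbitrary target with real $\poly(n)$-bounded weights $J_{ij}$, up to a known energy shift and error $(\eta,\epsilon) = O(1/\poly(n))$, with all coefficients $\poly(n)$ by Lemma~\ref{lem:2nd}.

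It remains to note that the target problem $\{\widetilde H\}$-\textsc{Hamiltonian}, with arbitrary real coefficients, is QMA-complete. The hypothesis $\alpha+\beta>0$, $\alpha+\gamma>0$, $\beta+\gamma>0$ forces at least two of $\alpha,\beta,\gamma$ to be non-zero, hence at least two of the coefficients $\tfrac{\alpha^2}{\beta+\gamma},\tfrac{\beta^2}{\alpha+\gamma},\tfrac{\gamma^2}{\alpha+\beta}$ of $\widetilde H$ are non-zero; equivalently, the $3 \times 3$ Pauli coefficient matrix of $\widetilde H$ has rank at least $2$. Conjugation by any $U \in SU(2)$ acts on this matrix as $M \mapsto R M R^{\mathsf{T}}$ for some $R \in SO(3)$ and so preserves its rank, whereas $\widetilde H$ being $1$-local, or $U^{\otimes 2}$-diagonalisable, or equal after conjugation to $\alpha' Z^{\otimes 2} + \text{1-local}$, each requires the rank to be at most $1$; thus $\widetilde H$ falls into the last case of Theorem~\ref{thm:sham}, so $\{\widetilde H\}$-\textsc{Hamiltonian} is QMA-complete (this is the fact from~\cite{cubitt14} quoted after Theorem~\ref{thm:positive}), and composing with the reduction above gives QMA-hardness of $\{H\}^+$-\textsc{Hamiltonian}. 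I expect the main obstacle to be the bookkeeping that makes the parallel sign-gadget reduction rigorous, rather than any isolated deep step: checking via the computations of Section~\ref{sec:parallel} that the $\poly(n)$ two-ancilla gadgets acting on disjoint ancilla sets do not interfere and produce exactly $\sum_{(i,j)} J_{ij}\widetilde H_{ij}$ plus harmless constants, and tracking the energy shifts and the $(\eta,\epsilon)$ errors through Lemma~\ref{lem:2nd} so that they stay $O(1/\poly(n))$ while all weights stay $\poly(n)$; the StoqMA half is routine once the basis-change case analysis is written out.
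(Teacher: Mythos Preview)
Your proposal is correct and follows essentially the same approach as the paper's proof: for case~(ii) you permute the Pauli labels so that the non-positive pairwise sum is $\alpha+\beta$ and then read off that the only off-diagonal entries $\alpha\pm\beta$ are non-positive, exactly as the paper does (the paper orders $\alpha\leqslant\beta\leqslant\gamma$ all at once, but the content is identical); for case~(i) you invoke the basic second-order gadget of Section~\ref{sec:ex}, one fresh ancilla pair per target edge applied in parallel, to simulate arbitrary-sign $\widetilde H$ interactions, and then cite the QMA-completeness of $\{\widetilde H\}$-\textsc{Hamiltonian} from~\cite{cubitt14} via the observation that $\widetilde H$ has at least two non-zero Pauli coefficients --- again precisely the paper's argument.
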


\begin{proof}[\textbf{\emph{Proof}}]
First we show that for case ii) the Hamiltonian $H$ is stoquastic, in the correct choice of basis, and hence that the problem $\{H\}^+$-\textsc{Hamiltonian} is in StoqMA. Indeed, by a local change of basis on each qubit, we can relabel $X,Y,$ and $Z$ interchangeably. Therefore without loss of generality we can assume that $\alpha\leqslant\beta\leqslant\gamma$. We have
\[H=\alpha XX+ \beta YY + \gamma ZZ= \left( \begin{array}{cccc}
\gamma & 0 & 0 & \alpha-\beta \\
0 & -\gamma & \alpha+\beta & 0 \\
0 & \alpha+\beta & -\gamma & 0 \\
\alpha-\beta & 0 & 0 & \gamma \\
\end{array}\right). \]
Looking at the matrix elements of $H$, the only off-diagonal elements that could be positive are those equal to $\alpha+\beta$, but since we are not in case i) and $\alpha\leqslant\beta\leqslant\gamma$, we must have that $\alpha+\beta \leqslant 0$. So all off-diagonal elements are non-positive, which is the definition of a stoquastic Hamiltonian.

Now we consider case i). We show the problem is QMA-complete, by showing we can use the basic gadget from Section \ref{sec:ex} to simulate any Hamiltonian which is a sum of terms of the form 
\[\widetilde{H}=\frac{\alpha ^2}{\beta+\gamma}XX+\frac{\beta ^2}{\alpha+\gamma}YY+\frac{\gamma ^2}{\alpha+\beta}ZZ\]
with both positive and negative weights. Observe that, as $\alpha+\beta>0,$ $\alpha+\gamma>0$ and $\beta+\gamma>0$,  $\widetilde{H}$ is a sum of at least 2 Pauli terms. The \lham\ problem was shown in~\cite{cubitt14} to be QMA-complete for any interaction of this form.

For each desired interaction between two of the $n$ target qubits we wish to simulate, we add a pair of ancillary qubits, resulting in a total of at most $n+2n(n-1)/2=n^2$ physical qubits. Suppose for example that we wished to simulate an interaction between target qubits 1 and 2. Then we introduce a pair of ancilla qubits $a_{12}$ and $b_{12}$ and apply a heavy $\Delta H_{a_{12}b_{12}}$ interaction between them.
Then if we wish to simulate a positive interaction we use a second order perturbation term $V=H_{1a_{12}}+H_{b_{12}2}$; or for a negative interaction we use $V=H_{1a_{12}}+H_{a_{12}2}$. As shown in Section \ref{sec:ex}, this will simulate the interactions $\widetilde{H}_{12}$ and $-\widetilde{H}_{12}$ respectively.

These gadgets can then all be applied in parallel, as discussed in Section  \ref{sec:parallel}.
\end{proof}

Of those interactions contained in StoqMA, some of them are trivially contained in P. In particular, this holds when $|00\>$ is the ground state of $H$. In this case the state $|000\dots 0\>$ is the ground state for any Hamiltonian of the form $H_{\text{tot}}=\sum_{ij}\lambda_{ij}H_{ij}$,  with energy $\<00|H|00\>\sum_{ij} \lambda_{ij}$.

Since $|\Phi^{\pm}\>$ and $|\Psi^{\pm}\>$ are always eigenstates with energies as in eqn.\ (\ref{eq:h0}), the state $|00\>$ is a ground state when both $|\Phi^+\>$ and $|\Phi^-\>$ have the smallest energy. That is, when
\[\alpha+\gamma=\beta+\gamma \leqslant \min(0,\alpha+\beta)\]
which is equivalent to
\[\alpha=\beta \quad \text{and} \quad \gamma \leqslant -|\alpha|.\]
Therefore for any interaction $H$ which is, up to normalisation and relabelling $X,Y,Z$, of the form $\alpha (XX+YY)-ZZ$ for $|\alpha|\leqslant 1$, the problem $\{H\}^+$-\textsc{Hamiltonian} is contained in P.

\subsection{Antiferromagnetic TIM}

It remains to prove the StoqMA-completeness part of Theorem \ref{thm:positive}. Before we do so, we prove a related but somewhat simpler result about the Transverse Ising Model (TIM). This model consists of local interaction terms of the form $ZZ$ and $X$:
 \[H=\sum_{i,j} \alpha_{ij} Z_i Z_j +\sum_k \beta_k X_k\] 
It may be assumed that $\beta_k$ is non-negative for all $k$, by conjugating all qubits for which $\beta_k<0$ by $Z$. This leaves the $ZZ$ part of the Hamiltonian unchanged, and flips the sign of the affected $\beta_k$. The problem of finding the ground-state energy of this Hamiltonian was shown to be StoqMA-complete when both positive and negative signs are allowed for $\alpha_{ij}$, by Bravyi and Hastings~\cite{Bravyi2014}. However Bravyi~\cite{Bravyi2014a} also showed that if $\alpha_{ij}\leqslant 0$ for all $i,j$, then there is a polynomial-time probabilistic classical algorithm to find the ground-state energy (and even to approximate the partition function), so the problem $\{-ZZ,X\}^+$-\textsc{Hamiltonian} is in BPP.

We therefore focus on the antiferromagnetic case, where $\alpha_{ij}\geqslant 0$ for all $i,j$.

\begin{theorem}
The problem $\{ZZ,X\}^+$-\textsc{Hamiltonian} is StoqMA-complete.
\end{theorem}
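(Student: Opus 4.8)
The plan is to prove containment in StoqMA and StoqMA-hardness separately. Containment is immediate: a Hamiltonian arising in $\{ZZ,X\}^+$-\textsc{Hamiltonian} has the form $H=\sum_{i,j}\alpha_{ij}Z_iZ_j+\sum_k\beta_kX_k$ with all $\alpha_{ij},\beta_k\geqslant 0$, and conjugating every qubit by $Z$ leaves the (diagonal) $ZZ$ part unchanged while sending $X_k\mapsto -X_k$; in this basis all off-diagonal entries of $H$ are non-positive, so $H$ is stoquastic and the problem lies in StoqMA. For hardness I would reduce from $\{ZZ,X\}$-\textsc{Hamiltonian} with arbitrary-sign $ZZ$ couplings, which is StoqMA-hard by Bravyi and Hastings~\cite{Bravyi2014}; as noted above we may take the $X_k$ coefficients non-negative, so only the $ZZ$ couplings carry either sign. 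Positive $Z_iZ_j$ terms and all $X_k$ terms are already available with the right weights, so it suffices to build a gadget simulating a \emph{negative} $Z_iZ_j$ interaction using only non-negatively weighted $ZZ$ and $X$ terms.

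For this, introduce a mediator ancilla $w$ carrying the heavy field $H_0=\tfrac12(I+X_w)$ (plus a constant to normalise), whose unique ground state is $|-\rangle_w$ and for which $(H_0)_{++}$ has eigenvalue $1$, together with the weak perturbation $V_{\text{main}}=\lambda(Z_iZ_w+Z_jZ_w)$ with $\lambda>0$. Since $Z_w$ swaps $|+\rangle_w\leftrightarrow|-\rangle_w$ we have $(Z_w)_{--}=0$ and $(Z_w)_{-+}=|-\rangle\langle+|_w$, hence $(V_{\text{main}})_{--}=0$ and
\[ (V_{\text{main}})_{-+}H_0^{-1}(V_{\text{main}})_{+-}=\lambda^2(Z_i+Z_j)^2\otimes|-\rangle\langle-|_w=\lambda^2(2I+2Z_iZ_j)\otimes|-\rangle\langle-|_w. \]
By Lemma~\ref{lem:2nd} this term enters the effective Hamiltonian with a minus sign, so the gadget simulates $-2\lambda^2Z_iZ_j$ up to a constant; taking $\lambda=\sqrt{|\alpha_{ij}|/2}$ gives the coupling $\alpha_{ij}Z_iZ_j$. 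The key point is that every weight here is non-negative: $H_0$ uses $X_w$ with weight $\tfrac12$, and $V_{\text{main}}$ uses the two $ZZ$ couplings with the \emph{same} positive weight $\lambda$, which is exactly what makes $(Z_i+Z_j)^2$ rather than $(Z_i-Z_j)^2$ appear and thus produces a negative effective $ZZ$.

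To finish, add a private ancilla $w_{ij}$ and the above gadget for each target term with $\alpha_{ij}<0$, put all the ancilla fields into $\Delta H_0$ and all the mediator couplings into $\Delta^{1/2}V_{\text{main}}$, and let $V_{\text{extra}}$ be the remaining (positive) $Z_iZ_j$ terms together with all the $X_k$ terms. Since $V_{\text{extra}}$ acts trivially on the ancillas, $(V_{\text{extra}})_{-+}=(V_{\text{extra}})_{+-}=0$, and gadgets on distinct ancillas do not interfere, exactly as in Section~\ref{sec:parallel}; Lemma~\ref{lem:2nd} then gives, for a suitable $\Delta=\poly(n)$, that $\Delta H_0+\Delta^{1/2}V_{\text{main}}+V_{\text{extra}}$ simulates the target transverse Ising Hamiltonian up to a known constant, with error $\epsilon,\eta=O(1/\poly(n))$. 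As all simulator weights are non-negative and polynomially bounded, this is a valid reduction to $\{ZZ,X\}^+$-\textsc{Hamiltonian}, completing the hardness proof. I do not expect a genuine obstacle here; the one thing that needs constant care is the sign bookkeeping — ensuring we never require $-X_w$ or oppositely-signed mediator couplings — which is precisely what makes the positivity restriction non-trivial for this model.
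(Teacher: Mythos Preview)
Your proof is correct, but the gadget you use is genuinely different from the paper's. The paper encodes every logical qubit into a \emph{pair} of physical qubits coupled by a heavy positive $ZZ$, so that the logical space is $\operatorname{span}\{|01\rangle,|10\rangle\}$; it then obtains $\pm X^LX^L$ at \emph{first} order by choosing which physical qubit of the pair to couple to (via a positive physical $ZZ$), and $Z^L$ at \emph{second} order from the positive $X$ fields $X_{i_a}+X_{i_b}$. After relabelling $X\leftrightarrow Z$ this reproduces the general TIM.

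By contrast, you leave all positive $ZZ$ and all $X$ terms untouched and insert a single mediator qubit with a heavy $X$ field only for each negative $Z_iZ_j$, recovering $-ZZ$ at second order from the identity $(Z_i+Z_j)^2=2I+2Z_iZ_j$. This is more economical and conceptually direct: no global encoding, ancillas only where the sign actually needs flipping, and the parallel-gadget argument of Section~\ref{sec:parallel} applies verbatim since distinct mediators are disjoint. The paper's encoding-based route, on the other hand, fits the uniform template used elsewhere in the paper (the ``basic gadget'' philosophy of projecting into a two-dimensional ground space and reading off logical operators), which is why it is presented that way there.
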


\begin{proof}[\textbf{\emph{Proof}}] 
Since the full TIM Hamiltonian has been shown to be StoqMA-complete it will suffice to show that we can simulate this Hamiltonian using only positive $\alpha_{ij}$. We do this by encoding logical qubits in the two dimensional ground space of $ZZ$, which is span$\{|01\>,|10\>\}$.
We label logical states and operators with a $L$ superscript, identifying $|0^L\>=|\Psi^-\>$ and $|1^L\>=|\Psi^+\>$, and associate each logical qubit $i$ with physical qubits $i_a$ and $i_b$. We will use operators of the following form to simulate $\pm X^L X^L$ and $Z^L$.

\[H_0=\sum_{i=1}^{n}\tfrac{1}{2}\left(Z_{i_a}Z_{i_b}+ I\right) =\sum_{i=1}^{n}|00\>\<00|_i +|11\>\<11|_i\]

\[V_{\text{extra}}=\sum_{i,j:\lambda_{ij}>0}\lambda_{ij} Z_{i_a}Z_{j_a}+\sum_{i,j:\lambda_{ij}<0}-\lambda_{ij} Z_{i_a}Z_{j_b}\]
\[V_{\text{main}}=\sum_{i}\mu_{i} \left(X_{i_a}+X_{i_b}\right)\]

Calculating the first order perturbations gives:
\[(Z_{i_a}Z_{j_a})_{--}= (|\Psi^+\>\<\Psi^-|_i+|\Psi^-\>\<\Psi^+|_i)\otimes (|\Psi^+\>\<\Psi^-|_j+|\Psi^-\>\<\Psi^+|_j)=X^L_i X^L_j\]
\[(Z_{i_a}Z_{j_b})_{--}= (|\Psi^+\>\<\Psi^-|_i+|\Psi^-\>\<\Psi^+|_i)\otimes (-|\Psi^+\>\<\Psi^-|_j-|\Psi^-\>\<\Psi^+|_j)=- X^L_i X^L_j.\]
This is how it is possible to simulate both $XX$ and $-XX$. Note that $( X_{i_a})_{--}=0=( X_{i_b})_{--}$ for all $i$, so $V_{\text{main}}$ has no first order contribution. To calculate the second order perturbations for $V_{\text{main}}$, we first show
\[\left( X_{i_a}+X_{i_b}\right)_{+-}=(|00\>+|11\>)(\<01|+\<10|)_i=2|\Phi^+\>\<\Psi^+|_i\]
so 
\[\left( X_{i_a}+X_{i_b}\right)_{-+}H_0^{-1} \left( X_{i_a}+X_{i_b}\right)_{+-}= 4|\Psi^+\>\<\Psi^+|_i=2(I-Z^L_i)\]
Therefore, by Lemma \ref{lem:2nd}, the Hamiltonian $H_{\text{sim}}=\Delta H_0+\Delta^{1/2}V_{\text{main}}+V_{\text{extra}}$ simulates the target Hamiltonian:
\[H_{\text{target}}=\sum_{i,j} \lambda_{ij} X_i X_j + \sum_{i} 2\mu_{i}(I-Z_i). \]
Relabelling $X$ and $Z$ gives the general TIM Hamiltonian, which was shown to be StoqMA-complete in~\cite{Bravyi2014}.
\end{proof}

\subsection{The case $H=XX-YY+\gamma ZZ$ for $\gamma >1$}
We finally show that the remaining cases on the boundary of the StoqMA region are also StoqMA-complete, also using a reduction from TIM. The proof is similar to the previous section.

In the case where $\alpha=1=-\beta$ and $\gamma>1$, then both $|\Psi^-\>$ and $|\Psi^+\>$ are ground states for the interaction $H$. Therefore we can take a system of $2n$ qubits labelled by $i_a$ and $i_b$ for $i=1,2,\dots n$, and let 
\[H_0=\sum_{i=1}^{n}\tfrac{1}{2}\left(H_{i_a,i_b}+\gamma I\right) =\sum_{i=1}^{n}(\gamma+1)|\Phi^+\>\<\Phi^+|_i +(\gamma-1)|\Phi^-\>\<\Phi^-|_i\]

\[V_{\text{extra}}=\frac{1}{\gamma}\sum_{i,j:\lambda_{ij}>0}\lambda_{ij} H_{i_a,j_a}+\frac{1}{\gamma}\sum_{i,j:\lambda_{ij}<0}-\lambda_{ij} H_{i_a,j_b}\]
\[V_{\text{main}}=\sum_{i,j}\mu_{ij} \left(H_{i_a,j_a}+H_{i_a,j_b}\right)\]
The encoding $\mathcal{E}$ encodes a logical qubit into each pair of qubits, with $|0^L\>=|\Psi^-\>$ and $|1^L\>=|\Psi^+\>$.

Calculating the perturbations to first order gives, similarly to before,
\[(H_{i_a,j_a})_{--}=\gamma (|\Psi^+\>\<\Psi^-|_i+|\Psi^-\>\<\Psi^+|_i)\otimes (|\Psi^+\>\<\Psi^-|_j+|\Psi^-\>\<\Psi^+|_j)=\gamma X^L_i X^L_j\]
\[(H_{i_a,j_b})_{--}=\gamma (|\Psi^+\>\<\Psi^-|_i+|\Psi^-\>\<\Psi^+|_i)\otimes (-|\Psi^+\>\<\Psi^-|_j-|\Psi^-\>\<\Psi^+|_j)=-\gamma X^L_i X^L_j.\]
Therefore $(V_{\text{extra}})_{--}=\sum_{i,j} \lambda_{ij} X^L_i X^L_j $, and $(V_{\text{main}})_{--}=0$.
To calculate the second order contribution of $V_{\text{main}}$, consider just one term, $V=H_{i_a,j_a}+H_{i_a,j_b}$:
\[V_{+-}=2(|\Phi^+\>|\Phi^-\>-|\Phi^-\>|\Phi^+\>)\<\Psi^-|\<\Psi^+| +2(|\Phi^+\>|\Phi^+\>-|\Phi^-\>|\Phi^-\>)\<\Psi^+|\<\Psi^+| \]

Since $|\Phi^+\>|\Phi^-\>$ and $|\Phi^-\>|\Phi^+\>$ both have energy $2\gamma$ with respect to the Hamiltonian $H_0$; and $|\Phi^+\>|\Phi^+\>$ and $|\Phi^-\>|\Phi^-\>$ have energies $2\gamma+2$ and $2\gamma-2$ respectively, we get:

\[V_{-+}H_0^{-1}V_{+-}= \left(\frac{1}{2\gamma}+\frac{1}{2\gamma}\right)4|\Psi^-\>|\Psi^+\>\<\Psi^-|\<\Psi^+|+\left(\frac{1}{2\gamma+2}+\frac{1}{2\gamma-2}\right)4|\Psi^+\>|\Psi^+\>\<\Psi^+|\<\Psi^+|\]
\[=\frac{4}{\gamma}|0^L 1^L\>\<0^L 1^L|+\frac{4\gamma}{\gamma^2-1}|1^L 1^L\>\<1^L 1^L|=\frac{1}{\gamma(\gamma^2-1)}\left(Z^L-(2\gamma^2-1)I\right)\otimes (Z^L-I)\]

Then by Lemma \ref{lem:2nd}, the total Hamiltonian $H_{\text{sim}}=H_0+\Delta V_{\text{extra}}+\Delta^{1/2} V_{\text{main}}$ simulates the Hamiltonian 
\[H_{\text{target}}=\sum_{i,j} \lambda_{ij} X_i X_j - \mu_{ij}^2 \frac{1}{\gamma(\gamma^2-1)}\left(Z-(2\gamma^2-1)I\right)_i (Z-I)_j\]
and so we can implement any Hamiltonian which consists of interactions $XX$ and $F(\gamma)=-(Z-(2\gamma^2-1)I)(Z-I)$. Simulating a TIM Hamiltonian requires one further round of perturbation theory to simulate $Z$ interactions. We add an extra pair of ancilla qubits with a heavy $F(\gamma)$ interaction applied between them to project them into the ground state $|11\>$. Then we apply $F(\gamma)$ between one of these ancilla qubits and another qubit, to simulate a $Z$ interaction to this qubit.

This completes the proof of Theorem \ref{thm:positive}.

\section{Restricted interaction graphs}
\label{sec:restrict}

We now move on to proving QMA-completeness results for interactions with restricted geometries. First, we show that the XY model is QMA-complete even when the interactions are restricted to the edges of a triangular lattice, and have positive weights. To prove this we first use similar methods to~\cite{cubitt14} to show that the model is QMA-complete on a spatially sparse graph, as defined by Oliveira and Terhal in~\cite{Oliveira2005}. Then we generalise the fork and crossing gadgets used in~\cite{Oliveira2005} to reduce this to a 2D triangular lattice.

\subsection{Spatial sparsity}
In~\cite{cubitt14}, it is observed that the results of~\cite{Oliveira2005} and~\cite{biamonte08}  can be straightforwardly combined to show that a Hamiltonian of $XX$, $ZZ$, $X$ and $Z$ terms is QMA-complete even when the interaction terms are restricted to the edges of a 2D square lattice. We will simulate such a Hamiltonian using a variant of the gadget used in~\cite{cubitt14} to show that the XY model is QMA-complete. Unlike~\cite{cubitt14}, our gadget will fit onto a spatially sparse graph.

\begin{definition}[Oliveira and Terhal~\cite{Oliveira2005}]
A spatially sparse interaction graph $G$ is defined as a graph in
which (i) every vertex participates in $O(1)$ edges, (ii) there is a straight-line drawing in the
plane such that every edge overlaps with $O(1)$ other edges and the length of every edge is $O(1)$.
\end{definition}

Let $H = XX + YY$. In the construction of~\cite{cubitt14}, three physical qubits are used to encode two logical qubits in the ground space of $H_{12}+H_{23}+H_{13}$, which is the asymmetric subspace of three qubits.
 It is then shown that $XI$ and $ZI$ can be simulated by applying further interactions within this subspace, and that applying interactions across triples of physical qubits can result in logical interactions of the form:
 \[X_iX_j (X_{i'}X_{j'}+Y_{i'}Y_{j'})\]
\[Z_iZ_j (X_{i'}X_{j'}+Y_{i'}Y_{j'})\]
 \[I_iI_j (X_{i'}X_{j'}+Y_{i'}Y_{j'})\]
where we have labelled the first and second qubit in each pair of logical qubits as $i$ and $i'$ respectively. We can use the third type of interaction term to project all of the second qubits in each pair into the ground state of a certain Hamiltonian $H_{set}$. Having done this we can use the first two terms to create a Hamiltonian of the form:
\[\sum_{k} (\alpha_k X_k +\beta_k Z_k) + \sum_{i<j} (\gamma_{ij} X_iX_j +\delta_{ij}Z_iZ_j ) \<\Omega|X_{i'}X_{j'}+Y_{i'}Y_{j'}|\Omega\>\]
where $|\Omega\>$ is the ground state of $H_{set}$.

Therefore it is possible to implement any Hamiltonian consisting of terms $\{XX,ZZ,X,Z\}$, as long as we can find a Hamiltonian $H_{set}$ such that:
\begin{enumerate}
\item The ground state $|\Omega\>$ is unique;
\item The spin correlation functions $ \<\Omega|X_{i'}X_{j'}+Y_{i'}Y_{j'}|\Omega\>$ can be calculated in poly($n$) time on a classical computer, so that the $\gamma_{ij}, \delta_{ij}$ coefficients can be suitably adjusted;
\item In order that $\gamma_{ij}, \delta_{ij}$ are $O(\text{poly}(n))$, it is also necessary that the correlation functions are $\Omega(1/\text{poly}(n))$;
\item Similarly we need the spectral gap of $H_{set}$ to be $\Omega(1/\text{poly}(n))$.
\end{enumerate}

In the original proof the XY model on the complete graph was used~\cite{cubitt14}, but this does not satisfy the spatial sparsity constraints. To meet this constraint, we instead consider a 1D model. In this context simultaneously satisfying conditions (3) and (4) is challenging. We require $\Omega(1/\text{poly}(n))$ size correlation functions for qubits at large distance apart on the chain. But, as it is known that for any system of this form with a constant spectral gap the correlation functions will decay exponentially~\cite{Hastings2006}, we require that the spectral gap $\rightarrow 0$. However, by (4) we also require that the gap does not go to zero too quickly.

Fortunately, however, the XY model on a cyclic chain satisfies all of these constraints:

\begin{lemma}
\label{lem:xy}
Fix $N$ even but not a multiple of 4, and let $H = \sum_{i=1}^{N-1} (X_i X_{i+1} + Y_i Y_{i+1}) + X_1 X_N +Y_1 Y_N$. Then $H$ has a nondegenerate ground state $|\Omega\>$ and spectral gap $\Omega(1/N)$. Further, for any pair $i$, $j$ such that $|i-j| = n$ and $n = o(N^{4/7})$, $\<\Omega|X_{i}X_{j}+Y_{i}Y_{j}|\Omega\> = \Omega(n^{-1/2})$. There is an efficient classical algorithm to compute the spectral gap and all the correlation functions.
\end{lemma}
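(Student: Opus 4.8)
The plan is to diagonalise $H$ exactly via the Jordan--Wigner transformation, which maps the cyclic XY model to a quadratic Hamiltonian in free fermions. Writing $a_j$ for the fermionic annihilation operators, the bond term $X_jX_{j+1}+Y_jY_{j+1}$ becomes (up to constants) $2(a_j^\dagger a_{j+1} + a_{j+1}^\dagger a_j)$, and the periodic boundary term $X_1X_N+Y_1Y_N$ picks up a factor $(-1)^{\hat n}$ where $\hat n = \sum_j a_j^\dagger a_j$ is the total fermion number — so $H$ splits into even and odd fermion-parity sectors, with antiperiodic boundary conditions in the even sector and periodic in the odd sector. In each sector $H$ is diagonalised by a Fourier transform: the single-particle energies are $\varepsilon_k = 2\cos k$, with $k$ ranging over the antiperiodic momenta $\{\pm\pi/N, \pm 3\pi/N,\dots\}$ in the even sector and the periodic momenta $\{0, \pm 2\pi/N,\dots,\pi\}$ in the odd sector. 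The ground state is obtained by filling all negative-energy modes (those with $|k|>\pi/2$); the condition that $N$ is even but not a multiple of $4$ is exactly what is needed so that (a) the even sector has no zero mode, (b) the number of filled modes in the even-sector ground state has the correct parity to actually lie in the even sector (and likewise one checks the odd sector), and (c) the two candidate ground states are nondegenerate, with the global minimum unique. I would verify this parity bookkeeping carefully since it is the place sign errors hide.

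Next I would lower-bound the spectral gap. Excitations above the ground state come either from particle-hole excitations within a fixed sector, whose energy is a difference of two $\varepsilon_k$ values and is minimised by exciting across the Fermi point $k=\pi/2$, giving a gap of order $|\varepsilon_{k}-\varepsilon_{k'}|$ for adjacent allowed momenta straddling $\pi/2$, which is $\Theta(1/N)$; or from changing the fermion parity sector, whose energy cost I would show is also $\Omega(1/N)$ by comparing the filled-Fermi-sea energies $\sum 2\cos k$ over the two momentum grids (an Euler--Maclaurin / Riemann-sum estimate shows the two sea energies differ by $O(1/N)$, and the constraint on $N \bmod 4$ keeps this difference bounded below in absolute value by $\Omega(1/N)$ rather than vanishing). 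Hence $\Delta = \Omega(1/N)$.

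For the correlation function $\langle\Omega|X_iX_j+Y_iY_j|\Omega\rangle$ with $|i-j|=n$, Jordan--Wigner turns this (by Wick's theorem for the quadratic ground state) into a two-point fermionic function, in fact essentially $\operatorname{Re}\langle a_i^\dagger a_j\rangle = \frac{1}{N}\sum_{k \text{ filled}} \cos\big(k(i-j)\big)$, a sum of $\cos(kn)$ over the filled momenta. This is a discrete approximation to $\frac{1}{2\pi}\int_{\pi/2}^{3\pi/2}\cos(kn)\,dk = -\frac{\sin(n\pi/2)}{\pi n}$ up to sign conventions; for $n$ odd this leading term is $\Theta(1/n)$, and one must check $n$ even does not kill it (here the two sectors / the oscillation structure, together with the fact that we are asserting $\Omega(n^{-1/2})$ not $\Omega(n^{-1})$, give slack). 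Concretely I would estimate the sum by Euler--Maclaurin or Abel summation: the main term is $\Theta(1/n)$ and the correction from replacing the integral by the sum is $O(n/N + n^2/N^2 \cdot \text{stuff})$-type; the hypothesis $n = o(N^{4/7})$ is precisely the regime where these finite-size corrections are dominated by the $n^{-1/2}$ target bound, so $\langle\Omega|X_iX_j+Y_iY_j|\Omega\rangle = \Omega(n^{-1/2})$. Finally, all of $\varepsilon_k$, the Fermi sea, the gap, and the two-point function are explicit finite sums computable in $\operatorname{poly}(N)$ time classically, giving the algorithmic claim.

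The main obstacle I anticipate is the correlation-function estimate: controlling the discrete sum $\frac1N\sum_{k\text{ filled}}\cos(kn)$ uniformly well enough to guarantee a lower bound of order $n^{-1/2}$ for \emph{all} relevant $n$ (both parities, and up to $n = o(N^{4/7})$) requires a careful finite-size analysis — a naive stationary-phase or integral approximation only controls the average behaviour, and near the Fermi points the summand is not smooth, so the error term must be tracked explicitly to see why the $N^{4/7}$ threshold is the right one. The parity/$N\bmod 4$ bookkeeping for nondegeneracy and the parity-sector gap is fiddly but routine by comparison.
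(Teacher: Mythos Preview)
Your treatment of the diagonalisation, the parity sectors, and the spectral gap is essentially the paper's approach and is fine. The correlation-function argument, however, has a genuine gap.

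The quantity $\langle\Omega|X_iX_j+Y_iY_j|\Omega\rangle$ is \emph{not} a two-point fermion function. Under Jordan--Wigner, $X_iX_j$ (and likewise $Y_iY_j$) carries the string $\exp\bigl(i\pi\sum_{k=i}^{j-1}c_k^\dagger c_k\bigr)$, so for separation $n$ it is a product of $2n$ Majorana operators, not two. Wick's theorem then produces not a single contraction but the determinant of an $n\times n$ Toeplitz matrix $R_{n,N}$ built from the elementary contractions $G_r=\langle B_lA_m\rangle$. The object $\frac{1}{N}\sum_{k\text{ filled}}\cos(kn)$ that you compute is the fermion propagator $\langle c_i^\dagger c_j\rangle$, which does decay like $1/n$; but $1/n$ is \emph{smaller} than $n^{-1/2}$, so a $\Theta(1/n)$ estimate would actually \emph{fail} to establish the required lower bound $\Omega(n^{-1/2})$ (you have the direction of ``slack'' reversed).

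The paper instead analyses the Toeplitz determinant directly. In the $N\to\infty$ limit the symbol has two Fisher--Hartwig jump singularities at $z=\pm i$, and Ehrhardt's proof of the Fisher--Hartwig asymptotics yields $\det R_n\sim E\,n^{-1/2}$ for an explicit constant $E$. The finite-$N$ correction is handled by bounding $\|R_{n,N}-R_n\|$ entrywise (each entry differs by $O(n/N^2)$), pushing this through Weyl's inequality to the eigenvalues, and then into the product $\det R_{n,N}/\det R_n$. That last step needs $\|R_{n,N}-R_n\|\cdot n^{1/2}\cdot n\to 0$, i.e.\ $n^{7/2}/N^2\to 0$, which is precisely the origin of the exponent $4/7$ in the hypothesis $n=o(N^{4/7})$. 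Your Euler--Maclaurin estimate on a single sum cannot reproduce either the $n^{-1/2}$ decay or this threshold; the string operator and the resulting determinant structure are essential.
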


The proof of Lemma \ref{lem:xy} is deferred to Section \ref{sec:XYcyclic}. Most of the claims in the lemma are well-known~\cite{Lieb1961,McCoy68}; however, we were unable to find in the literature a rigorous lower bound on the correlation functions for a finite-length chain which is as tight as we need.

To simulate a Hamiltonian $H$ which consists of $n$ qubits on a square lattice, we first add $\sqrt{n}$ ancilla qubits, which are not involved in any interactions, along one edge of the lattice (see Figure \ref{fig:XYsquare}). Then, for each of the qubits in this extended lattice, we use three physical qubits to encode two logical qubits and apply the Hamiltonian of the XY model on a cyclic chain to the second qubit in each pair with the path of the cycle arranged as in Figure \ref{fig:XYsquare}. Because the ancilla qubits did not interact with any others in the original lattice, the longest distance along this path for which we need to calculate the correlation functions (in order to simulate $XX$ and $ZZ$ interactions) is $2n^{1/2}$. By Lemma \ref{lem:xy}, the correlation functions are $\Omega(n^{-1/2})$.

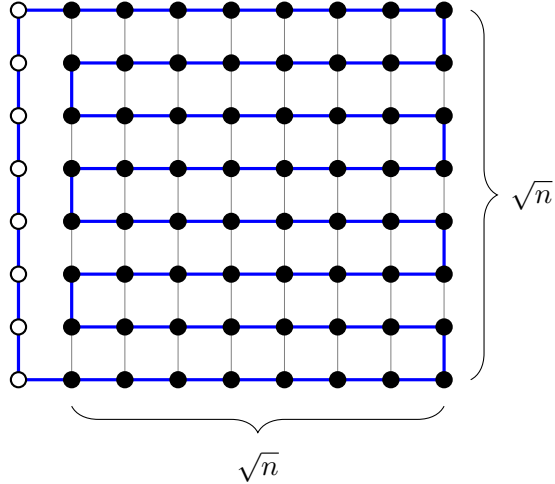
\begin{figure}[t]
\centering
\begin{tikzpicture}[scale=0.7]
\def\n{8}
\draw[step=1cm,gray,very thin] (1,1) grid (\n,\n);
\foreach \y in {1,...,\n}
{\foreach \x in {2,...,\n}
{\draw[blue,very thick] (\x,\y) to (\x-1,\y);}}

\draw[blue,very thick] (0,1) to (1,1);
\draw[blue,very thick] (0,\n) to (1,\n);
\foreach \y in {2,...,\n}
{\draw[blue,very thick] (0,\y) to (0,\y-1);}

\foreach \y in {1,3,...,\n}
{\draw[blue,very thick] (\n,\y) to (\n,\y+1);}
\foreach \y in {3,5,...,\n}
{\draw[blue,very thick] (1,\y) to (1,\y-1);}

\foreach \y in {1,...,\n}
{
\node[mqubit] (0\y) at (0,\y){};
\node[qubit] (1\y) at (1,\y){};
\foreach \x in {2,...,\n}
{
\node[qubit] (\x\y) at (\x,\y){};
}}

\draw [decorate,decoration={brace,amplitude=10pt,mirror}]
(\n+0.5,1) -- (\n+0.5,\n) node [black,midway,xshift=+0.8cm] {$\sqrt{n}$};

\draw [decorate,decoration={brace,amplitude=10pt,mirror}]
(1,0.5) -- (\n,0.5) node [black,midway,yshift=-0.8cm] {$\sqrt{n}$};
\end{tikzpicture}
\caption{Cyclic XY model Hamiltonian overlaid on a 2D square lattice, with $\sqrt{n}$ ancilla qubits (coloured white).}
\label{fig:XYsquare}
\end{figure}

We have now obtained a spatially sparse Hamiltonian which only uses XY interactions and simulates $H$. However, the interactions no longer take place on a square lattice, because each site on the lattice contains 3 physical qubits in a triangle, and there are complicated interactions between neighbouring triangles. To go back to a lattice geometry we need some further gadgetry.

\subsection{Mediator qubit pair gadgets}
\label{sec:medgadg}
In \cite{Oliveira2005}, Oliveira and Terhal develop ``subdivision'', ``fork'' and ``crossing'' gadgets involving a \emph{mediator} qubit. This is an ancilla qubit $a$, which is projected into the state $|0\>$ by a heavy $I-Z$ interaction, such that when 2-local terms are applied, such as $V=H_{1a}+H_{2a}$, second order perturbation theory gives an effective interaction between qubits 1 and 2, even though there is no direct interaction between them. This allows one to simulate a general Hamiltonian on a spatially sparse interaction graph using a Hamiltonian on a planar 2D graph.

However, here we only have access to 2-local terms. The direct analogue of these gadgets is to take a pair of qubits, $a$ and $b$ and project these into the ground state of a 2-local interaction $H_0=H_{ab}$. The second order perturbation theory analysis of this situation is exactly the basic gadget described in Section \ref{sec:ex}; and so the interactions generated will be of the form $\widetilde{H}$ (see eqn.\ (\ref{eq:htilde})). We focus on the XY interaction, where $H=XX+YY$ and so $H=\widetilde{H}$ (this is similar for the Heisenberg interaction $H=XX+YY+ZZ$, so all of these gadgets would work in this case too). For the fork and crossing gadgets, where there are three or more extra qubits, unwanted interaction terms are generated that need to be cancelled out by first order terms.

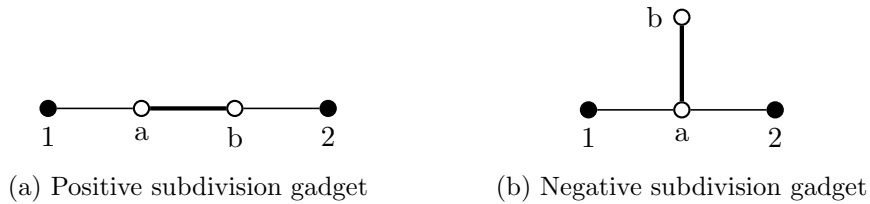
\begin{figure}[b]
\centering
\begin{subfigure}[b]{0.4\textwidth}
\centering
\begin{tikzpicture} 
\node[qubit] (q1) [label=below:1] {};
\node[mqubit] (a) [right=of q1,label=below:a] {};
\node[mqubit] (b) [right=of a,label=below:b]{};
\node[qubit] (q2) [right=of b,label=below:2]{};

\draw [heavy] (a) to (b);
\draw [black, semithick] (q1) to (a);
\draw [black, semithick] (q2) to (b);
\end{tikzpicture}
\caption{Positive subdivision gadget}
\end{subfigure}
\begin{subfigure}[b]{0.4\textwidth}
\centering
\begin{tikzpicture} 
\node[qubit] (q1) [label=below:1] {};
\node[mqubit] (a) [right=of q1,label=below:a] {};
\node[mqubit] (b) [above=of a,label=left:b]{};
\node[qubit] (q2) [right=of a,label=below:2]{};

\draw [heavy] (a) to (b);
\draw [black, semithick] (q1) to (a);
\draw [black, semithick] (q2) to (a);
\end{tikzpicture}
\caption{Negative subdivision gadget}
\end{subfigure}
\caption{Subdivision gadgets}
\label{fig:sub}
\end{figure}

\subsubsection{Subdivision gadgets}
\label{sec:subdivision}
The simplest gadgets are the subdivision gadgets, which are in a sense the most basic gadgets we can make using the XY interaction. The interaction graphs for the gadgets are shown in Figure \ref{fig:sub}. There are two additional qubits, labelled 1 and 2, as well as the mediator qubits, which are either connected to the same or opposite ancilla qubits. 

When connected to opposite ancilla qubits, i.e. $V=H_{1a}+H_{2b}$, the analysis of Section \ref{sec:ex} shows that $\Delta H_0 + \Delta^{1/2}V$ simulates a $X_1X_2+Y_1Y_2$ interaction. 
And when connected to the same ancilla qubit, i.e. $V=H_{1a}+H_{2a}$, the analysis of Section \ref{sec:ex} shows that $\Delta H_0 + \Delta^{1/2}V$ simulates a $-X_1X_2-Y_1Y_2$ interaction. 

The positive subdivision gadget can be applied in series, that is it can be used to simulate any of the interactions $H_{1a}$, $H_{ab}$, and $H_{b2}$, using a new heavy weight $\Delta' \gg \Delta$ to give a Hamiltonian on 6 qubits in a line, with an effective $XX+YY$ interaction between the first and last qubits. This can be repeated at most a constant number of times giving a Hamiltonian on $2k$ qubits on a line, where $k=O(1)$, which simulates an $XX+YY$ interaction between the qubits at the endpoints.

Similarly for the negative gadget, both $H_{1a}$ and $H_{2a}$ can be simulated by the positive subdivision gadget, and this process can be repeated $O(1)$ times. Thus for any $k=O(1)$, there exists a Hamiltonian on $2k$ qubits where $2k-1$ qubits are on a line, and one extra qubit is connected to a qubit at an odd internal position on this line, which simulates a $-XX-YY$ interaction between the qubits at the endpoints of the line.

\subsubsection{Fork gadget}
The fork gadget is used to reduce the degree of a vertex in an interaction graph. For a qubit interacting with two other qubits, we can simulate this using a gadget which only has one incoming edge to this qubit. The interaction graph is given in Figure \ref{fig:fork}, and has $V_{\text{main}}=H_{1a}+H_{2a}+H_{3b}$. This will also generate an unwanted $-H_{12}$, so it is necessary to also include the term $V_{\text{extra}}=H_{12}$, which is represented on the interaction graph by a dotted line. Then, by Lemma \ref{lem:2nd}, the Hamiltonian $H_{\text{sim}} = \Delta H_0 + \Delta^{1/2} V_{\text{main}} + V_{\text{extra}}$ simulates $H_{13}+H_{23}$.

\subsubsection{Crossing gadget}
The crossing gadget is used to simulate a Hamiltonian in which two edges in the interaction graph cross, using a gadget with no crossings, as can be seen in Figure \ref{fig:crossing}. The main second order terms are $V_{\text{main}}=H_{1a}+H_{2a}+H_{3b}+H_{4b}$, and the required correction terms are $V_{\text{extra}}=H_{12}+H_{34}-H_{13}-H_{24}$. Then, by Lemma \ref{lem:2nd}, the Hamiltonian $H_{\text{sim}} = \Delta H_0 + \Delta^{1/2} V_{\text{main}} + V_{\text{extra}}$ simulates $H_{14}+H_{23}$.

\begin{figure}
\centering

\begin{subfigure}[b]{0.4\textwidth}
\begin{tikzpicture} 
\node[qubit] (q1) at (-1,3) [label=below:1] {};
\node[qubit] (q2) at (1,3)[label=below:2]{};
\node[mqubit] (a) at (0,2)[label=left:a] {};
\node[mqubit] (b) at (0,1)[label=left:b]{};
\node[qubit] (q3) at (0,0)[label=below:3]{};

\draw [heavy] (a) to (b);
\draw [black, semithick] (q1) to (a);
\draw [black, semithick] (q2) to (a);
\draw [black, semithick] (q3) to (b);
\draw [dotted] (q1) to (q2);

\node[qubit] (1) at (-5,2) [label=below:1] {};
\node[qubit] (2) at (-3,2)[label=below:2]{};
\node[qubit] (3) at (-4,1)[label=below:3]{};

\draw [black, semithick] (1) to (3);
\draw [black, semithick] (2) to (3);
\draw [dotted] (2) to (3);
\end{tikzpicture}
\caption{Fork gadget}
\label{fig:fork}
\end{subfigure}
\hspace{1cm}
\begin{subfigure}[b]{0.4\textwidth}
\centering
\begin{tikzpicture} 
\node[qubit] (q1) at (-1,3) [label=left:1] {};
\node[qubit] (q2) at (1,3)[label=right:2]{};
\node[mqubit] (a) at (0,2)[label=left:a] {};
\node[mqubit] (b) at (0,1)[label=left:b]{};
\node[qubit] (q3) at (-1,0)[label=below:3]{};
\node[qubit] (q4) at (1,0)[label=below:4]{};

\draw [heavy] (a) to (b);
\draw [black, semithick] (q1) to (a);
\draw [black, semithick] (q2) to (a);
\draw [black, semithick] (q3) to (b);
\draw [black, semithick] (q4) to (b);
\draw [dotted] (q1) to (q2);
\draw [dotted] (q2) to (q4);
\draw [dotted] (q3) to (q4);
\draw [dotted] (q3) to (q1);

\node[qubit] (1) at (-5,3) [label=left:1] {};
\node[qubit] (2) at (-3,3)[label=right:2]{};
\node[qubit] (3) at (-5,0)[label=below:3]{};
\node[qubit] (4) at (-3,0)[label=below:4]{};
\draw [black, semithick] (1) to (4);
\draw [black, semithick] (2) to (3);
\end{tikzpicture}
\caption{Crossing gadget}
\label{fig:crossing}
\end{subfigure}

\caption{Fork and crossing gadgets. In each case the left-hand interaction pattern is simulated using the right-hand gadget.}
\label{fig:forkcross}
\end{figure}
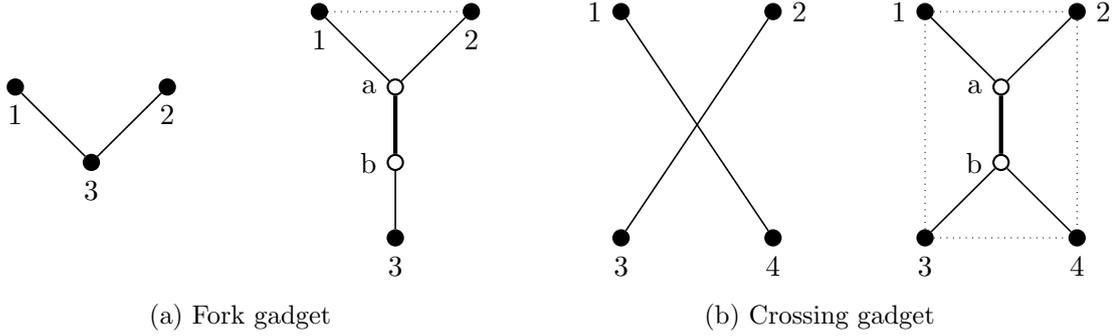

\subsection{XY model on a 2D planar graph of degree 3}
\label{sec:planar}

We can now follow the prescription of Oliveira and Terhal~\cite{Oliveira2005}, using the analogous mediator qubit gadgets described in Section \ref{sec:medgadg}, to simulate a spatially sparse Hamiltonian using a triangular lattice. This follows~\cite{Oliveira2005} closely so we only sketch the steps involved.

First use the subdivision gadget on every single edge in parallel in order to isolate each vertex of high degree. Then, for each vertex of degree $d=O(1)$, pair up adjacent incoming edges and use the fork gadget in parallel on each pair of incoming edges, to reduce the degree of the vertex to $\lceil d/2 \rceil$. Then repeat this process $O(\log d)$ times in series until the vertex has degree at most 3. Each stage of this process can be done in parallel across all vertices of high degree, such that only a constant number of reductive steps are required to end up with an interaction graph of degree 3.

Then we isolate each crossing by applying the subdivision gadget up to $O(\log m)$ times on each edge in series, where $m=O(1)$ is the maximum number of crossings across any edge in the graph. Finally we apply the crossing gadget to every crossing in parallel to be left with a Hamiltonian with a planar interaction graph $G$ of degree at most 3, such that there is a straight-line drawing in the plane where each edge is of length $O(1)$ and the angle between adjacent incoming edges at a vertex is $\Omega(1)$.

\subsection{2D triangular lattice}
We finally simulate the Hamiltonian of the previous section using a Hamiltonian on a 2D triangular lattice with only positive weights.

To do this we take the interaction graph $G$ from the previous section, along with its embedding in the 2D plane such that the graph is planar with maximum degree 3, each edge has length $O(1)$, and the angle between adjacent edges is $\Omega(1)$. Then we place a fine triangular grid on top of this graph and move each vertex of $G$ onto the nearest lattice point, then deform edges of $G$ onto the nearest path through the lattice. Since the angle between any two edges is $\Omega(1)$,  if the spacing in the lattice is sufficiently small (but still $\Omega(1)$), these paths will only intersect in a small $O(1)$ region around each original vertex. The paths can then easily be rerouted in this area.

As discussed in Section \ref{sec:medgadg}, the subdivision gadget can be used in parallel to simulate $+H$ (resp.\ $-H$) when there are an even (resp.\ odd) number of mediator qubits on the path in between the target qubits. On a triangular lattice it is easy to include one extra qubit on the path, for example by taking two edges of a triangle rather than one, and so it is possible to choose the appropriate parity, and simulate either $+H$ or $-H$ as desired. This completes the proof that the antiferromagnetic XY interaction on a triangular lattice is QMA-complete.

Note that this construction will work for essentially any non-bipartite lattice. A graph is non-bipartite if and only if it contains an odd cycle. If the spacing of the lattice is small enough such that there is such an odd cycle between each connected pair of original vertices, and this odd cycle is connected to the rest of the graph at two points, then two paths that pass through these two points (but which go around this cycle in opposite directions) will have lengths of opposite parities, and thus we can make either $+H$ or $-H$ as desired.

\subsection{Other interactions on a triangular lattice}

We now return to considering more general interactions of the form $H = \alpha XX + \beta YY + \gamma ZZ$. In order to prove that these interactions are QMA-complete on a triangular lattice, we will simulate a target Hamiltonian of XY interactions on a triangular lattice using only local gadgets. We require that when each edge of the target interaction graph is replaced by the local gadget, the resulting graph (the interaction graph for the simulator Hamiltonian) is still a triangular lattice. Therefore each local gadget must fit on a triangular lattice, and in order that separate gadgets do not interfere, they must be able to tessellate without overlapping. Figure \ref{fig:tessel} shows that the triangular gadget described below does indeed tessellate in this way.

The idea behind the gadget is to simulate two different interactions $H^{(1)}$ and $H^{(2)}$ along two different paths between two qubits, and then to take a linear combination of these to access a different interaction known to be QMA-complete, such as the XY interaction. Figure \ref{fig:triangle}(b) shows such a gadget that will simulate two different interactions between qubits 1 and 6 (where $\widetilde{H}$ is as in (\ref{eq:htilde}) and $H'$ is defined below). However, both qubits 1 and 6 have two incoming edges, and so this gadget will not tessellate on the lattice. We therefore need to use an equivalent of the fork gadget so that qubits 1 and 6 only have one incoming edge each.

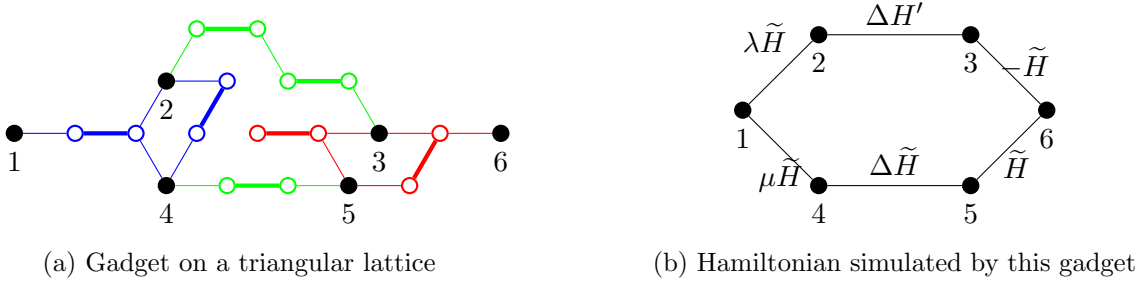
\begin{figure}
\centering
\begin{subfigure}[b]{0.4\textwidth}
\centering
\begin{tikzpicture}[scale=0.8]
\begin{scope}[blue]
\node[qubit,black] (1) at (0,0)[label={[black]below:1}]{};
\node[mqubit] (a1) at ($(1)+(1,0)$){};
\node[mqubit] (a2) at ($(a1)+(1,0)$){};
\node[qubit,black] (2) at ($(a2)+(60:1)$)[label={[black]below:2}]{};
\node[qubit,black] (4) at ($(a2)+(-60:1)$)[label={[black]below:4}]{};
\node[mqubit] (b1) at ($(a2)+(1,0)$){};
\node[mqubit] (b2) at ($(2)+(1,0)$){};

\draw (1) to (a1);
\draw (2) to (a2);
\draw (2) to (b2);
\draw (4) to (a2);
\draw (4) to (b1);
\draw[heavy] (a1) to (a2);
\draw[heavy] (b1) to (b2);

\end{scope}

\begin{scope}[red]
\node[mqubit] (c1) at ($(b1)+(1,0)$){};
\node[mqubit] (c2) at ($(c1)+(1,0)$){};
\node[qubit,black] (3) at ($(c2)+(1,0)$)[label={[black]below:3}]{};
\node[qubit,black]  (5) at ($(c2)+(-60:1)$)[label={[black]below:5}]{};
\node[mqubit] (d1) at ($(5)+(1,0)$){};
\node[mqubit] (d2) at ($(d1)+(60:1)$){};
\node[qubit,black] (6) at ($(d2)+(1,0)$)[label={[black]below:6}]{};

\draw (6) to (d2);
\draw (3) to (d2);
\draw (3) to (c2);
\draw (5) to (d1);
\draw (5) to (c2);
\draw[heavy] (c1) to (c2);
\draw[heavy] (d1) to (d2);
\end{scope}

\begin{scope}[green]
\node[mqubit] (e1) at ($(2)+(60:1)$){};
\node[mqubit] (e2) at ($(e1)+(1,0)$){};
\node[mqubit] (e3) at ($(e2)+(-60:1)$){};
\node[mqubit] (e4) at ($(e3)+(1,0)$){};
\draw (2) to (e1);
\draw (e3) to (e2);
\draw (e4) to (3);
\draw[heavy] (e1) to (e2);
\draw[heavy] (e3) to (e4);

\node[mqubit] (f1) at ($(4)+(1,0)$){};
\node[mqubit] (f2) at ($(f1)+(1,0)$){};
\draw (4) to (f1);
\draw (f2) to (5);
\draw[heavy] (f1) to (f2);
\end{scope}
\end{tikzpicture}
\caption{Gadget on a triangular lattice}
\end{subfigure}
\hspace{2cm}
\begin{subfigure}[b]{0.4\textwidth}
\centering
\begin{tikzpicture}
\node[qubit] (1) at (0,0)[label={[black]below:1}]{};
\node[qubit] (2) at (1,1)[label={[black]below:2}]{};
\node[qubit] (3) at (3,1)[label={[black]below:3}]{};
\node[qubit] (4) at (1,-1)[label={[black]below:4}]{};
\node[qubit] (5) at (3,-1)[label={[black]below:5}]{};
\node[qubit] (6) at (4,0)[label={[black]below:6}]{};
\draw (1) -- (2) node [near start,above=10pt] {$\lambda\widetilde{H}$};
\draw (2) -- (3) node [midway,above] {$\Delta H'$};
\draw (3) -- (6) node [near end,above] {$-\widetilde{H}$};
\draw (1) -- (4) node [midway,below] {$\mu\widetilde{H}$};
\draw (4) -- (5) node [midway,above] {$\Delta \widetilde{H}$};
\draw (5) -- (6) node [near start,right] {$\widetilde{H}$};
\end{tikzpicture}
\caption{Hamiltonian simulated by this gadget}
\end{subfigure}
\caption{Triangular lattice gadget for simulating $XX+YY$}
\label{fig:triangle}
\end{figure}

The interaction graph of the triangular lattice gadget is shown in Figure \ref{fig:triangle}(a), and it will simulate the Hamiltonian shown in Figure \ref{fig:triangle}(b). The red and blue sections of the graph are generalizations of the Fork gadget for interactions of the form $H=\alpha XX+\beta YY+\gamma ZZ$. Each of them consists of two parallel applications of the basic gadget. Consider the fork gadget coloured blue in Figure \ref{fig:triangle}; the mediator qubit pair connected to qubit 1 generates interactions of the form $+\widetilde{H}_{12}$ and $+\widetilde{H}_{14}$, but also an unwanted  interaction term $-\widetilde{H}_{24}$ between qubits 2 and 4. This is cancelled out by simulating a $+\widetilde{H}_{24}$ interaction with the other pair of mediator qubits between qubits 2 and 4. If both of the physical $H$ interactions acting on qubit 2 in this fork gadget have weight $\lambda$, and both interactions acting on qubit 4 have weight $\mu$, then the overall effect of the blue fork gadget is to simulate $\lambda\widetilde{H}_{12} +\mu\widetilde{H}_{14}$.

Similarly the part of the gadget coloured red can be viewed as a kind of fork gadget. One mediator qubit pair simulates $-\widetilde{H}_{36},+\widetilde{H}_{56}$  and $+\widetilde{H}_{35}$ interactions, while the other pair of mediator qubits simulates $-\widetilde{H}_{35}$ resulting in an overall effective Hamiltonian of $\widetilde{H}_{56}-\widetilde{H}_{36}$.

The gadget between qubits 4 and 5 is just the basic gadget and so simulates $\widetilde{H}_{45}$, but the gadget between qubits 2 and 3 is slightly different and needs to be studied separately. Labelling the qubits in between 2 and 3 as $a,b,c,d$, the heavy Hamiltonian for this gadget is $H_0=H_{ab}+H_{cd}+2(\alpha+\beta+\gamma)I$ which has non-degenerate ground state $|\Psi^{-}\>_{ab}|\Psi^-\>_{cd}$. The other terms are $V=H_{2a}+H_{bc}+H_{d3}$, and it is straightforward to show that
\[V_{--}=0, \quad \text{and} \quad V_{-+}H_{0}^{-1}V_{+-} \propto I\]
To calculate the third order contribution, it will help to write out $V$ as nine terms,
\[ V=\alpha X_2X_a+\beta Y_2Y_a+\gamma Z_2Z_a+\alpha X_bX_c+\beta Y_bY_c+\gamma Z_bZ_c+\alpha X_dX_3+\beta Y_dY_3+\gamma Z_dZ_3, \]
and work in the basis $|\Psi^{\pm}\>,|\Phi^{\pm}\>$ for the qubit pairs $(a,b)$ and $(c,d)$. Then each term of $V$ maps basis elements to basis elements, and $H_0$ is diagonal.
\[ V_{-+}H_{0}^{-1}V_{++}H_{0}^{-1}V_{+-}=\frac{\alpha ^3}{(\beta+\gamma)^2}X_2X_3+\frac{\beta ^3}{(\alpha+\gamma)^2}Y_2Y_3+\frac{\gamma ^3}{(\alpha+\beta)^2}Z_2Z_3\]

Therefore, by Lemma \ref{lem:3rd}, the Hamiltonian $\Delta H_0+\Delta^{2/3}V$ simulates the interaction $H'$:

\[H'=\frac{\alpha ^3}{(\beta+\gamma)^2}XX+\frac{\beta ^3}{(\alpha+\gamma)^2}YY+\frac{\gamma ^3}{(\alpha+\beta)^2}ZZ=:\alpha'XX+\beta'YY+\gamma'ZZ\]

Note that the coefficients of $H'$ satisfy $\alpha'+\beta'>0$, $\beta'+\gamma'>0$ and $\alpha'+\gamma'>0$, so $H'$ can be used as the heavy interaction term $H_0$ in a gadget similar to that of Section \ref{sec:ex}.

Therefore, overall the gadget simulates the Hamiltonian shown in Figure \ref{fig:triangle}(b). This Hamiltonian is then used to simulate an interaction $-\lambda H^{(1)}$ (via qubits 2 and 3) and an interaction $+\mu H^{(2)}$ (via qubits 4 and 5), where

\[H^{(1)}=\frac{\widetilde{\alpha}^2}{\beta'+\gamma'} XX+ \frac{\widetilde{\beta}^2}{\alpha'+\gamma'}YY +\frac{\widetilde{\gamma}^2}{\alpha'+\beta'}ZZ\]
\[H^{(2)}=\frac{\widetilde{\alpha}^2}{\widetilde{\beta}+\widetilde{\gamma}} XX+ \frac{\widetilde{\beta}^2}{\widetilde{\alpha}+\widetilde{\gamma}}YY +\frac{\widetilde{\gamma}^2}{\widetilde{\alpha}+\widetilde{\beta}}ZZ.\]
We now show that having access to interactions of the form $-H^{(1)}$ and $H^{(2)}$, with arbitrary positive weights, is sufficient to produce an $XX+YY$ interaction.

\subsubsection{The case $\alpha XX+ \beta YY$}
\label{sec:abtri}
First we consider the case where $H$ has Pauli rank 2, where $\gamma=0$, and so $\alpha,\beta>0$. In this case it is particularly easy to explicitly find $\widetilde{\alpha},\alpha',\widetilde{\beta},\beta'$ in terms of $\alpha$ and $\beta$ to show that
\[ H^{(1)}=\frac{\alpha^6}{\beta^5}XX+\frac{\beta^6}{\alpha^5}YY\quad \text{ and } \quad H^{(2)}=\frac{\alpha^5}{\beta^4}XX+\frac{\beta^5}{\alpha^4}YY. \]
In this case we can directly simulate  the $XX+YY$ interaction with $-\lambda H^{(1)}+\mu H^{(2)}$, by taking 
\[\lambda=\frac{\alpha^9-\beta^9}{\alpha^4\beta^4(\alpha^2-\beta^2)} \quad \text{ and } \quad \mu  =\frac{\alpha^{11}-\beta^{11}}{\alpha^5\beta^5(\alpha^2-\beta^2)}. \]
The exact values that $\lambda$ and $\mu$ take are not too significant; but it is important to note that they are both positive (given that $\alpha,\beta >0$) and are easily computable.

\subsubsection{The case $\alpha XX+\beta YY +\gamma ZZ$}
For more general interactions $H$, where all of $\alpha,\beta,\gamma$ are non-zero, the simulated interactions $H^{(1)}$ and $H^{(2)}$ are guaranteed to be different unless $H$ is the Heisenberg interaction $XX+YY+ZZ$.
In order to normalise $-\lambda H^{(1)}+\mu H^{(2)}$, let $\lambda=1-\mu$, and define $\alpha(\mu),\beta(\mu),\gamma(\mu)$ to be the coefficients of 
\[H(\mu)=-H^{(1)}+\mu(H^{(1)}+H^{(2)}). \]
For $\mu \in [0,1]$, this is the one parameter family of interactions that can be simulated using the gadget. So $H(0)=-H^{(1)}$ which has all coefficients negative and $H(1)=H^{(2)}$ which has all coefficients positive. Therefore there exists $\mu_{\alpha}, \mu_{\beta}, \mu_{\gamma} \in (0,1)$ such that $\alpha(\mu_{\alpha})=0, \beta(\mu_{\beta})=0, \gamma(\mu_{\gamma})=0$. In particular:
\[\mu_{\alpha}=\frac{\alpha^{(1)}}{\alpha^{(1)}+\alpha^{(2)}}, \quad \mu_{\beta}=\frac{\beta^{(1)}}{\beta^{(1)}+\beta^{(2)}}, \quad
\mu_{\gamma}=\frac{\gamma^{(1)}}{\gamma^{(1)}+\gamma^{(2)}}\]

Calculating the coefficients $\alpha(\mu),\beta(\mu),\gamma(\mu)$ exactly in terms of $\alpha,\beta,\gamma$ and $\mu$ yields very messy expressions, but the following lemma gives a very useful relation between $\alpha,\beta,\gamma$ and $\mu_{\alpha},\mu_{\beta},\mu_{\gamma}$.

\begin{lemma}
\label{lem:mu}
If $\gamma=\alpha$, then $\mu_{\gamma}=\mu_{\alpha}$. If $\gamma>\alpha$ and $\gamma \geqslant \beta$, then $\mu_{\gamma}>\mu_{\alpha}$.
\end{lemma}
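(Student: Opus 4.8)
The plan is to reduce both statements to a single polynomial inequality in $\alpha,\beta,\gamma$ and then settle that by an elementary sign analysis. Write $\widetilde\alpha=\alpha^2/(\beta+\gamma)$, $\widetilde\beta$, $\widetilde\gamma$ for the coefficients of $\widetilde H$ and $\alpha'=\widetilde\alpha^2/\alpha=\alpha^3/(\beta+\gamma)^2$, $\beta'$, $\gamma'$ for those of $H'$, so that $\alpha^{(1)}=\widetilde\alpha^2/(\beta'+\gamma')$, $\gamma^{(1)}=\widetilde\gamma^2/(\alpha'+\beta')$, $\alpha^{(2)}=\widetilde\alpha^2/(\widetilde\beta+\widetilde\gamma)$ and $\gamma^{(2)}=\widetilde\gamma^2/(\widetilde\alpha+\widetilde\beta)$. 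All of $\widetilde\alpha,\widetilde\beta,\widetilde\gamma$ and all four denominators $\beta'+\gamma'$, $\alpha'+\beta'$, $\widetilde\beta+\widetilde\gamma$, $\widetilde\alpha+\widetilde\beta$ are positive (the first two because, as already noted, the pairwise coefficient sums of $H'$ are positive), so $\alpha^{(1)},\alpha^{(2)},\gamma^{(1)},\gamma^{(2)}>0$ even though $\alpha$ or $\beta$ may be negative. Hence $\mu_\alpha=(1+r_\alpha)^{-1}$ with $r_\alpha:=\alpha^{(2)}/\alpha^{(1)}=(\beta'+\gamma')/(\widetilde\beta+\widetilde\gamma)$, and likewise $\mu_\gamma=(1+r_\gamma)^{-1}$ with $r_\gamma:=(\alpha'+\beta')/(\widetilde\alpha+\widetilde\beta)$; since $t\mapsto(1+t)^{-1}$ is strictly decreasing, the two claims are respectively that $r_\alpha=r_\gamma$ when $\gamma=\alpha$, and that $r_\alpha>r_\gamma$ when $\gamma>\alpha$ and $\gamma\ge\beta$.

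The first claim is immediate: when $\gamma=\alpha$ one has $\widetilde\gamma=\gamma^2/(\alpha+\beta)=\alpha^2/(\beta+\gamma)=\widetilde\alpha$ and $\gamma'=\gamma^3/(\alpha+\beta)^2=\alpha^3/(\beta+\gamma)^2=\alpha'$, whence $r_\alpha=r_\gamma$. (Equivalently, interchanging the roles of $X$ and $Z$ is a symmetry of the entire gadget construction and swaps $\mu_\alpha$ with $\mu_\gamma$, so they must agree when $\alpha=\gamma$.)

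For the second claim, set $p=\widetilde\alpha/\alpha=\alpha/(\beta+\gamma)$, $q=\beta/(\alpha+\gamma)$, $s=\gamma/(\alpha+\beta)$, so that $\alpha'=p\widetilde\alpha$, $\beta'=q\widetilde\beta$, $\gamma'=s\widetilde\gamma$. Clearing denominators in $r_\alpha>r_\gamma$ and cancelling the common term $q\widetilde\beta^2$, the inequality becomes
\[E:=(q-p)\,\widetilde\alpha\widetilde\beta+(s-p)\,\widetilde\alpha\widetilde\gamma+(s-q)\,\widetilde\beta\widetilde\gamma>0.\]
A short computation factorises the three differences as
\[q-p=\frac{(\beta-\alpha)(\alpha+\beta+\gamma)}{(\alpha+\gamma)(\beta+\gamma)},\qquad s-p=\frac{(\gamma-\alpha)(\alpha+\beta+\gamma)}{(\alpha+\beta)(\beta+\gamma)},\qquad s-q=\frac{(\gamma-\beta)(\alpha+\beta+\gamma)}{(\alpha+\beta)(\alpha+\gamma)},\]
and since $2(\alpha+\beta+\gamma)=(\alpha+\beta)+(\beta+\gamma)+(\alpha+\gamma)>0$ while every pairwise sum is positive, we get $s-p>0$ always, $s-q\ge 0$ (as $\gamma\ge\beta$), and $\operatorname{sign}(q-p)=\operatorname{sign}(\beta-\alpha)$. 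I also record that $\widetilde\gamma>\widetilde\alpha$: from $\gamma>\alpha$ and $\alpha+\gamma>0$ we get $\gamma>|\alpha|$, hence $\gamma^2>\alpha^2$ and $\widetilde\gamma=\gamma^2/(\alpha+\beta)>\alpha^2/(\alpha+\beta)\ge\alpha^2/(\beta+\gamma)=\widetilde\alpha$.

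The crux is then to regroup $E$ in two complementary ways. Using $s-p=(q-p)+(s-q)$ gives $E=(q-p)\,\widetilde\alpha(\widetilde\beta+\widetilde\gamma)+(s-q)\,\widetilde\gamma(\widetilde\alpha+\widetilde\beta)$, in which both terms are non-negative when $\beta\ge\alpha$, and at least one is strictly positive because $\gamma$ is the strict maximum; this disposes of the case $\beta\ge\alpha$. Using instead $s-q=(s-p)-(q-p)$ gives $E=(s-p)\,\widetilde\gamma(\widetilde\alpha+\widetilde\beta)+(q-p)\,\widetilde\beta(\widetilde\alpha-\widetilde\gamma)$, whose first term is strictly positive and whose second term is non-negative when $\beta\le\alpha$ (then $q-p\le 0$ while $\widetilde\alpha-\widetilde\gamma<0$); this disposes of the case $\beta\le\alpha$. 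Since the two cases exhaust all possibilities, $E>0$, which proves the lemma. The main obstacle is spotting the reduction to the single inequality $E>0$, including the bookkeeping that keeps $\alpha^{(1)},\alpha^{(2)},\gamma^{(1)},\gamma^{(2)}$ positive despite possible sign changes in $\alpha,\beta$, together with the two telescoping regroupings of $E$; a direct expansion of $\alpha(\mu)$ and $\gamma(\mu)$ in $\alpha,\beta,\gamma$ produces the messy expressions the text warns of.
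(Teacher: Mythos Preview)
Your argument is correct and reaches the same core inequality as the paper, namely $(\beta'+\gamma')(\widetilde\alpha+\widetilde\beta)>(\alpha'+\beta')(\widetilde\beta+\widetilde\gamma)$; the paper arrives there via $\mu_\gamma-\mu_\alpha$ and the cross-term $\alpha^{(2)}\gamma^{(1)}-\alpha^{(1)}\gamma^{(2)}$, you via the ratios $r_\alpha,r_\gamma$, but these are the same computation. The difference is in the sign analysis. The paper expands everything in $\alpha,\beta,\gamma$ to obtain the symmetric polynomial $\alpha^2\beta^2(\alpha+\beta)^2(\beta-\alpha)+\beta^2\gamma^2(\beta+\gamma)^2(\gamma-\beta)+\alpha^2\gamma^2(\alpha+\gamma)^2(\gamma-\alpha)$ and bounds the middle term below by $\alpha^2\beta^2(\alpha+\beta)^2(\gamma-\beta)$, after which the first two terms telescope. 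You instead keep the expression at the $\widetilde\alpha,\widetilde\beta,\widetilde\gamma$ level through the ratios $p,q,s$, and settle positivity by two complementary regroupings of $E$ according to the sign of $\beta-\alpha$; this is cleaner and sidesteps the heavy expansion, at the small cost of needing the auxiliary fact $\widetilde\gamma>\widetilde\alpha$.

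One minor slip worth tightening: in the case $\beta\ge\alpha$ you assert that at least one summand of $E=(q-p)\widetilde\alpha(\widetilde\beta+\widetilde\gamma)+(s-q)\widetilde\gamma(\widetilde\alpha+\widetilde\beta)$ is strictly positive ``because $\gamma$ is the strict maximum'', but the hypothesis allows $\gamma=\beta$. The conclusion is still fine, since $\gamma=\beta$ together with $\gamma>\alpha$ forces $\beta>\alpha$, hence $q-p>0$ and the first summand is strictly positive; you should say this explicitly.
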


\begin{proof}
The aim is to show that $\mu_{\gamma}-\mu_{\alpha}\geqslant 0$. First simply substitute in the expressions for $\mu_{\gamma}$ and $\mu_{\alpha}$:
\[\mu_{\gamma}-\mu_{\alpha} =
\frac{\gamma^{(1)}}{\gamma^{(1)}+\gamma^{(2)}} -\frac{\alpha^{(1)}}{\alpha^{(1)}+\alpha^{(2)}} =\frac{\alpha^{(2)}\gamma^{(1)}-\alpha^{(1)}\gamma^{(2)}}{(\gamma^{(1)}+\gamma^{(2)})(\alpha^{(1)}+\alpha^{(2)})}\]
Since every term in the denominator is positive, it will suffice to consider just the numerator.
\[\alpha^{(2)}\gamma^{(1)}-\alpha^{(1)}\gamma^{(2)}=\frac{\widetilde{\alpha}^2\widetilde{\gamma}^2}{(\beta'+\gamma')(\widetilde{\beta}+\widetilde{\gamma})(\alpha'+\beta')(\widetilde{\alpha}+\widetilde{\beta})} \left[(\beta'+\gamma')(\widetilde{\alpha}+\widetilde{\beta})-(\widetilde{\beta}+\widetilde{\gamma})(\alpha'+\beta')\right]\]
Again, the factor outside the square brackets is strictly positive, so we can just consider the expression inside the square brackets. The following relation will be useful:
\[\beta'\widetilde{\alpha}-\widetilde{\beta}\alpha'=\frac{\alpha^2\beta^2}{(\alpha+\gamma)(\beta+\gamma)}\left(\frac{\beta}{\alpha+\gamma}-\frac{\alpha}{\beta+\gamma}\right)=\frac{\alpha^2\beta^2(\alpha+\beta+\gamma)(\beta-\alpha)}{(\alpha+\gamma)^2(\beta+\gamma)^2}\]
The expression in the square brackets $(\beta'\widetilde{\alpha}-\widetilde{\beta}\alpha')+(\gamma'\widetilde{\alpha}-\widetilde{\gamma}\alpha')+(\gamma'\widetilde{\alpha}-\widetilde{\gamma}\alpha')$ is then just
\[\frac{\alpha+\beta+\gamma}{(\alpha+\gamma)^2(\beta+\gamma)^2(\alpha+\beta)^2}\left[\alpha^2\beta^2(\alpha+\beta)^2(\beta-\alpha)+\beta^2\gamma^2(\beta+\gamma)^2(\gamma-\beta)+\alpha^2\gamma^2(\alpha+\gamma)^2(\gamma-\alpha)\right]\]
Note that setting $\alpha=\gamma$ at this point would give zero, implying $\mu_{\gamma}=\mu_{\alpha}$, thereby proving the first part of the Lemma. Now considering the case where $\gamma\geqslant\beta$ and $\gamma>\alpha$, we can use the inequality $\beta^2\gamma^2(\beta+\gamma)^2(\gamma-\beta)\geqslant \beta^2\alpha^2(\beta+\alpha)^2(\gamma-\beta)$ to replace the second term and show that the previous line is greater than or equal to
\[\frac{\alpha+\beta+\gamma}{(\alpha+\gamma)^2(\beta+\gamma)^2(\alpha+\beta)^2}\left[\alpha^2\beta^2(\alpha+\beta)^2(\gamma-\alpha)+\alpha^2\gamma^2(\alpha+\gamma)^2(\gamma-\alpha)\right]\]which is strictly positive since $\gamma>\alpha$.
\end{proof}

There are then two cases we need to consider:
\begin{description}
\item[(i)] $\alpha,\beta,\gamma$ have a unique maximum, say $\gamma>\alpha,\beta$. 
\\Then by Lemma \ref{lem:mu}, $\mu_{\gamma}>\mu_{\alpha},\mu_{\beta}$, so $\alpha(\mu_{\gamma})$ and $\beta(\mu_{\gamma})$ are both positive and we can simulate $H(\mu_{\gamma})=\alpha(\mu_{\gamma}) XX +\beta(\mu_{\gamma})YY$ which is QMA-complete on the triangular lattice by Section \ref{sec:abtri}.
\item[(ii)] $\alpha<\beta=\gamma$. 
\\Then by Lemma \ref{lem:mu}, $\mu_{\alpha}<\mu_{\beta}=\mu_{\gamma}$, so we can simulate $H(\mu_{\gamma})=\alpha(\mu_{\gamma}) XX$. So for $\varepsilon >0$ small enough, $\mu=\mu_{\gamma} +\varepsilon$ satisfies $\alpha(\mu)>\beta(\mu),\gamma(\mu)$ and $\alpha(\mu),\beta(\mu),\gamma(\mu)$ all positive. So $H(\mu)$ is of the form we have just shown to be QMA-complete on the triangular lattice in (i).
\end{description}

This completes the proof of Theorem \ref{thm:triangle}.

\begin{reptheorem}{thm:triangle}
Let $H=\alpha XX+\beta YY + \gamma ZZ$ be a 2-qubit interaction such that $\alpha+\beta>0,$ $\alpha+\gamma>0,\beta+\gamma>0$ and $H$ is not proportional to $XX+YY+ZZ$. Then $\{H\}^+$\textsc{-Hamiltonian} is QMA-complete, even if the interactions are restricted to the edges of a 2D triangular lattice.
\end{reptheorem}

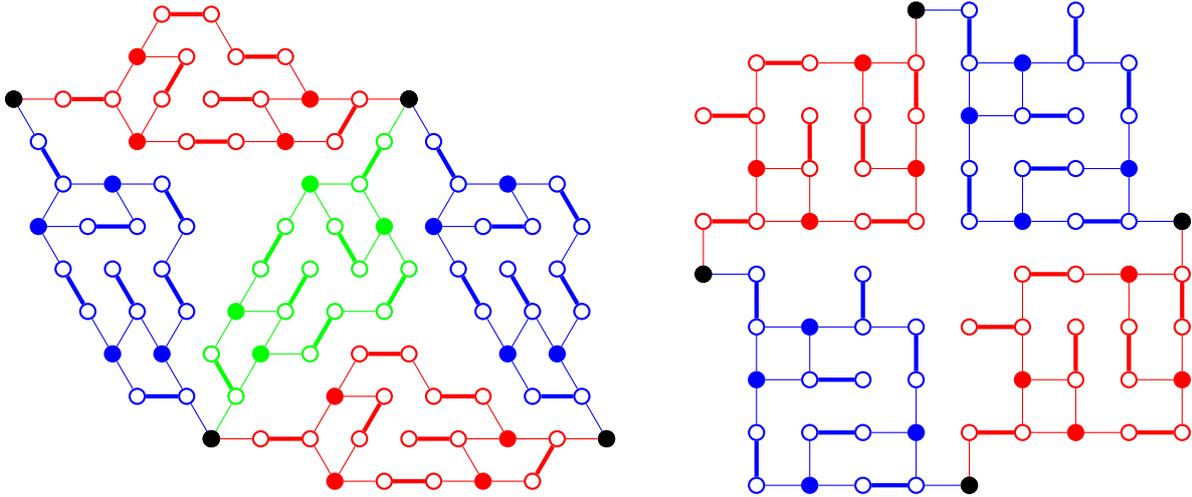
\begin{figure}
\centering
\begin{subfigure}{0.45\textwidth}
\centering
\begin{tikzpicture}[yscale=0.65,xscale=-0.65]

\begin{scope}[red]
\begin{scope}
\node[qubit,black] (1) at (0,0){};
\node[mqubit] (a1) at ($(1)+(1,0)$){};
\node[mqubit] (a2) at ($(a1)+(-60:1)$){};
\node[qubit] (2) at ($(a1)+(1,0)$){};
\node[qubit] (3) at ($(a2)+(1,0)$){};
\node[mqubit] (b1) at ($(2)+(1,0)$){};
\node[mqubit] (b2) at ($(b1)+(1,0)$){};

\draw (1) to (a1);
\draw (2) to (a1);
\draw (2) to (b1);
\draw (3) to (a2);
\draw (3) to (b1);
\draw[heavy] (a1) to (a2);
\draw[heavy] (b1) to (b2);

\end{scope}

\begin{scope}
\node[mqubit] (c2) at ($(b2)+(1,0)$){};
\node[mqubit] (c1) at ($(c2)+(120:1)$){};
\node[mqubit] (d1) at ($(c2)+(1,0)$){};
\node[mqubit] (d2) at ($(d1)+(1,0)$){};
\node[qubit] (4) at ($(c1)+(1,0)$){};
\node[qubit]  (5) at ($(c2)+(-60:1)$){};
\node[qubit,black] (6) at ($(d2)+(1,0)$){};

\draw (6) to (d2);
\draw (4) to (d1);
\draw (4) to (c1);
\draw (5) to (d1);
\draw (5) to (c2);
\draw[heavy] (c1) to (c2);
\draw[heavy] (d1) to (d2);
\end{scope}

\node[mqubit] (e1) at ($(2)+(60:1)$){};
\node[mqubit] (e2) at ($(e1)+(1,0)$){};
\node[mqubit] (e3) at ($(e2)+(60:1)$){};
\node[mqubit] (e4) at ($(e3)+(1,0)$){};
\draw (2) to (e1);
\draw (e3) to (e2);
\draw (e4) to (4);
\draw[heavy] (e1) to (e2);
\draw[heavy] (e3) to (e4);

\node[mqubit] (f1) at ($(3)+(1,0)$){};
\node[mqubit] (f2) at ($(f1)+(1,0)$){};
\draw (3) to (f1);
\draw (f2) to (5);
\draw[heavy] (f1) to (f2);
\end{scope}

\begin{scope}[blue,rotate=60]
\begin{scope}
\node[qubit,black] (1) at (0,0){};
\node[mqubit] (a1) at ($(1)+(1,0)$){};
\node[mqubit] (a2) at ($(a1)+(-60:1)$){};
\node[qubit] (2) at ($(a1)+(1,0)$){};
\node[qubit] (3) at ($(a2)+(1,0)$){};
\node[mqubit] (b1) at ($(2)+(1,0)$){};
\node[mqubit] (b2) at ($(b1)+(1,0)$){};

\draw (1) to (a1);
\draw (2) to (a1);
\draw (2) to (b1);
\draw (3) to (a2);
\draw (3) to (b1);
\draw[heavy] (a1) to (a2);
\draw[heavy] (b1) to (b2);

\end{scope}

\begin{scope}
\node[mqubit] (c2) at ($(b2)+(1,0)$){};
\node[mqubit] (c1) at ($(c2)+(120:1)$){};
\node[mqubit] (d1) at ($(c2)+(1,0)$){};
\node[mqubit] (d2) at ($(d1)+(1,0)$){};
\node[qubit] (4) at ($(c1)+(1,0)$){};
\node[qubit]  (5) at ($(c2)+(-60:1)$){};
\node[qubit,black] (6) at ($(d2)+(1,0)$){};

\draw (6) to (d2);
\draw (4) to (d1);
\draw (4) to (c1);
\draw (5) to (d1);
\draw (5) to (c2);
\draw[heavy] (c1) to (c2);
\draw[heavy] (d1) to (d2);
\end{scope}

\node[mqubit] (e1) at ($(2)+(60:1)$){};
\node[mqubit] (e2) at ($(e1)+(1,0)$){};
\node[mqubit] (e3) at ($(e2)+(60:1)$){};
\node[mqubit] (e4) at ($(e3)+(1,0)$){};
\draw (2) to (e1);
\draw (e3) to (e2);
\draw (e4) to (4);
\draw[heavy] (e1) to (e2);
\draw[heavy] (e3) to (e4);

\node[mqubit] (f1) at ($(3)+(1,0)$){};
\node[mqubit] (f2) at ($(f1)+(1,0)$){};
\draw (3) to (f1);
\draw (f2) to (5);
\draw[heavy] (f1) to (f2);
\end{scope}

\begin{scope}[green,shift={(8,0)},rotate=120]
\begin{scope}
\node[qubit,black] (1) at (0,0){};
\node[mqubit] (a1) at ($(1)+(1,0)$){};
\node[mqubit] (a2) at ($(a1)+(-60:1)$){};
\node[qubit] (2) at ($(a1)+(1,0)$){};
\node[qubit] (3) at ($(a2)+(1,0)$){};
\node[mqubit] (b1) at ($(2)+(1,0)$){};
\node[mqubit] (b2) at ($(b1)+(1,0)$){};

\draw (1) to (a1);
\draw (2) to (a1);
\draw (2) to (b1);
\draw (3) to (a2);
\draw (3) to (b1);
\draw[heavy] (a1) to (a2);
\draw[heavy] (b1) to (b2);

\end{scope}

\begin{scope}
\node[mqubit] (c2) at ($(b2)+(1,0)$){};
\node[mqubit] (c1) at ($(c2)+(120:1)$){};
\node[mqubit] (d1) at ($(c2)+(1,0)$){};
\node[mqubit] (d2) at ($(d1)+(1,0)$){};
\node[qubit] (4) at ($(c1)+(1,0)$){};
\node[qubit]  (5) at ($(c2)+(-60:1)$){};
\node[qubit,black] (6) at ($(d2)+(1,0)$){};

\draw (6) to (d2);
\draw (4) to (d1);
\draw (4) to (c1);
\draw (5) to (d1);
\draw (5) to (c2);
\draw[heavy] (c1) to (c2);
\draw[heavy] (d1) to (d2);
\end{scope}

\node[mqubit] (e1) at ($(2)+(60:1)$){};
\node[mqubit] (e2) at ($(e1)+(1,0)$){};
\node[mqubit] (e3) at ($(e2)+(60:1)$){};
\node[mqubit] (e4) at ($(e3)+(1,0)$){};
\draw (2) to (e1);
\draw (e3) to (e2);
\draw (e4) to (4);
\draw[heavy] (e1) to (e2);
\draw[heavy] (e3) to (e4);

\node[mqubit] (f1) at ($(3)+(1,0)$){};
\node[mqubit] (f2) at ($(f1)+(1,0)$){};
\draw (3) to (f1);
\draw (f2) to (5);
\draw[heavy] (f1) to (f2);
\end{scope}

\begin{scope}[red,shift={(4,4*sqrt(3))}]
\begin{scope}
\node[qubit,black] (1) at (0,0){};
\node[mqubit] (a1) at ($(1)+(1,0)$){};
\node[mqubit] (a2) at ($(a1)+(-60:1)$){};
\node[qubit] (2) at ($(a1)+(1,0)$){};
\node[qubit] (3) at ($(a2)+(1,0)$){};
\node[mqubit] (b1) at ($(2)+(1,0)$){};
\node[mqubit] (b2) at ($(b1)+(1,0)$){};

\draw (1) to (a1);
\draw (2) to (a1);
\draw (2) to (b1);
\draw (3) to (a2);
\draw (3) to (b1);
\draw[heavy] (a1) to (a2);
\draw[heavy] (b1) to (b2);

\end{scope}

\begin{scope}
\node[mqubit] (c2) at ($(b2)+(1,0)$){};
\node[mqubit] (c1) at ($(c2)+(120:1)$){};
\node[mqubit] (d1) at ($(c2)+(1,0)$){};
\node[mqubit] (d2) at ($(d1)+(1,0)$){};
\node[qubit] (4) at ($(c1)+(1,0)$){};
\node[qubit]  (5) at ($(c2)+(-60:1)$){};
\node[qubit,black] (6) at ($(d2)+(1,0)$){};

\draw (6) to (d2);
\draw (4) to (d1);
\draw (4) to (c1);
\draw (5) to (d1);
\draw (5) to (c2);
\draw[heavy] (c1) to (c2);
\draw[heavy] (d1) to (d2);
\end{scope}

\node[mqubit] (e1) at ($(2)+(60:1)$){};
\node[mqubit] (e2) at ($(e1)+(1,0)$){};
\node[mqubit] (e3) at ($(e2)+(60:1)$){};
\node[mqubit] (e4) at ($(e3)+(1,0)$){};
\draw (2) to (e1);
\draw (e3) to (e2);
\draw (e4) to (4);
\draw[heavy] (e1) to (e2);
\draw[heavy] (e3) to (e4);

\node[mqubit] (f1) at ($(3)+(1,0)$){};
\node[mqubit] (f2) at ($(f1)+(1,0)$){};
\draw (3) to (f1);
\draw (f2) to (5);
\draw[heavy] (f1) to (f2);
\end{scope}

\begin{scope}[blue,shift={(8,0)},rotate=60]
\begin{scope}
\node[qubit,black] (1) at (0,0){};
\node[mqubit] (a1) at ($(1)+(1,0)$){};
\node[mqubit] (a2) at ($(a1)+(-60:1)$){};
\node[qubit] (2) at ($(a1)+(1,0)$){};
\node[qubit] (3) at ($(a2)+(1,0)$){};
\node[mqubit] (b1) at ($(2)+(1,0)$){};
\node[mqubit] (b2) at ($(b1)+(1,0)$){};

\draw (1) to (a1);
\draw (2) to (a1);
\draw (2) to (b1);
\draw (3) to (a2);
\draw (3) to (b1);
\draw[heavy] (a1) to (a2);
\draw[heavy] (b1) to (b2);

\end{scope}

\begin{scope}
\node[mqubit] (c2) at ($(b2)+(1,0)$){};
\node[mqubit] (c1) at ($(c2)+(120:1)$){};
\node[mqubit] (d1) at ($(c2)+(1,0)$){};
\node[mqubit] (d2) at ($(d1)+(1,0)$){};
\node[qubit] (4) at ($(c1)+(1,0)$){};
\node[qubit]  (5) at ($(c2)+(-60:1)$){};
\node[qubit,black] (6) at ($(d2)+(1,0)$){};

\draw (6) to (d2);
\draw (4) to (d1);
\draw (4) to (c1);
\draw (5) to (d1);
\draw (5) to (c2);
\draw[heavy] (c1) to (c2);
\draw[heavy] (d1) to (d2);
\end{scope}

\node[mqubit] (e1) at ($(2)+(60:1)$){};
\node[mqubit] (e2) at ($(e1)+(1,0)$){};
\node[mqubit] (e3) at ($(e2)+(60:1)$){};
\node[mqubit] (e4) at ($(e3)+(1,0)$){};
\draw (2) to (e1);
\draw (e3) to (e2);
\draw (e4) to (4);
\draw[heavy] (e1) to (e2);
\draw[heavy] (e3) to (e4);

\node[mqubit] (f1) at ($(3)+(1,0)$){};
\node[mqubit] (f2) at ($(f1)+(1,0)$){};
\draw (3) to (f1);
\draw (f2) to (5);
\draw[heavy] (f1) to (f2);
\end{scope}

\end{tikzpicture}
\end{subfigure}
\hfill
\begin{subfigure}{0.45\textwidth}
\centering
\begin{tikzpicture}[scale=0.7]
\begin{scope}[red,rotate=90,shift={(-1,-4)}]
\begin{scope}[red]
\node[qubit,black] (1) at (1,4){};
\node[qubit] (2) at (3,3){};
\node[qubit] (3) at (2,2){};

\node[mqubit] (a1) at (2,4){};
\node[mqubit] (a2) at (2,3){};
\draw [heavy] (a1) to (a2);
\node[mqubit] (b1) at (3,2){};
\node[mqubit] (b2) at (4,2){};
\draw [heavy] (b1) to (b2);

\draw (1) to (a1);
\draw (2) to (a2);
\draw (3) to (a2);
\draw (2) to (b1);
\draw (3) to (b1);
\end{scope}

\begin{scope}
\node[qubit] (4) at (5,1){};
\node[qubit] (5) at (3,0){};
\node[qubit,black] (6) at (6,0){};

\node[mqubit] (c1) at (3,1){};
\node[mqubit] (c2) at (4,1){};
\draw [heavy] (c1) to (c2);
\node[mqubit] (d1) at (4,0){};
\node[mqubit] (d2) at (5,0){};
\draw [heavy] (d1) to (d2);

\draw (4) to (c2);
\draw (4) to (d2);
\draw (5) to (c1);
\draw (5) to (d1);
\draw (6) to (d2);
\end{scope}

\node[mqubit] (e1) at (4,4){};
\node[mqubit] (e2) at (4,3){};
\node[mqubit] (e3) at (5,3){};
\node[mqubit] (e4) at (5,2){};
\draw [heavy] (e1) to (e2);
\draw [heavy] (e3) to (e4);
\draw (2) to (e2);
\draw (e2) to (e3);
\draw (e4) to (4);

\node[mqubit] (f1) at (2,1){};
\node[mqubit] (f2) at (2,0){};

\draw [heavy] (f1) to (f2);
\draw (3) to (f1);
\draw (f2) to (5);
\end{scope}

\begin{scope}[blue,shift={(-1,-4)}]
\begin{scope}
\node[qubit,black] (1) at (1,4){};
\node[qubit] (2) at (3,3){};
\node[qubit] (3) at (2,2){};

\node[mqubit] (a1) at (2,4){};
\node[mqubit] (a2) at (2,3){};
\draw [heavy] (a1) to (a2);
\node[mqubit] (b1) at (3,2){};
\node[mqubit] (b2) at (4,2){};
\draw [heavy] (b1) to (b2);

\draw (1) to (a1);
\draw (2) to (a2);
\draw (3) to (a2);
\draw (2) to (b1);
\draw (3) to (b1);
\end{scope}

\begin{scope}
\node[qubit] (4) at (5,1){};
\node[qubit] (5) at (3,0){};
\node[qubit,black] (6) at (6,0){};

\node[mqubit] (c1) at (3,1){};
\node[mqubit] (c2) at (4,1){};
\draw [heavy] (c1) to (c2);
\node[mqubit] (d1) at (4,0){};
\node[mqubit] (d2) at (5,0){};
\draw [heavy] (d1) to (d2);

\draw (4) to (c2);
\draw (4) to (d2);
\draw (5) to (c1);
\draw (5) to (d1);
\draw (6) to (d2);
\end{scope}

\node[mqubit] (e1) at (4,4){};
\node[mqubit] (e2) at (4,3){};
\node[mqubit] (e3) at (5,3){};
\node[mqubit] (e4) at (5,2){};
\draw [heavy] (e1) to (e2);
\draw [heavy] (e3) to (e4);
\draw (2) to (e2);
\draw (e2) to (e3);
\draw (e4) to (4);

\node[mqubit] (f1) at (2,1){};
\node[mqubit] (f2) at (2,0){};

\draw [heavy] (f1) to (f2);
\draw (3) to (f1);
\draw (f2) to (5);
\end{scope}

\begin{scope}[red,shift={(5,-4)},rotate=90,shift={(-1,-4)}]
\begin{scope}[red]
\node[qubit,black] (1) at (1,4){};
\node[qubit] (2) at (3,3){};
\node[qubit] (3) at (2,2){};

\node[mqubit] (a1) at (2,4){};
\node[mqubit] (a2) at (2,3){};
\draw [heavy] (a1) to (a2);
\node[mqubit] (b1) at (3,2){};
\node[mqubit] (b2) at (4,2){};
\draw [heavy] (b1) to (b2);

\draw (1) to (a1);
\draw (2) to (a2);
\draw (3) to (a2);
\draw (2) to (b1);
\draw (3) to (b1);
\end{scope}

\begin{scope}
\node[qubit] (4) at (5,1){};
\node[qubit] (5) at (3,0){};
\node[qubit,black] (6) at (6,0){};

\node[mqubit] (c1) at (3,1){};
\node[mqubit] (c2) at (4,1){};
\draw [heavy] (c1) to (c2);
\node[mqubit] (d1) at (4,0){};
\node[mqubit] (d2) at (5,0){};
\draw [heavy] (d1) to (d2);

\draw (4) to (c2);
\draw (4) to (d2);
\draw (5) to (c1);
\draw (5) to (d1);
\draw (6) to (d2);
\end{scope}

\node[mqubit] (e1) at (4,4){};
\node[mqubit] (e2) at (4,3){};
\node[mqubit] (e3) at (5,3){};
\node[mqubit] (e4) at (5,2){};
\draw [heavy] (e1) to (e2);
\draw [heavy] (e3) to (e4);
\draw (2) to (e2);
\draw (e2) to (e3);
\draw (e4) to (4);

\node[mqubit] (f1) at (2,1){};
\node[mqubit] (f2) at (2,0){};

\draw [heavy] (f1) to (f2);
\draw (3) to (f1);
\draw (f2) to (5);
\end{scope}

\begin{scope}[blue,shift={(3,1)}]
\begin{scope}
\node[qubit,black] (1) at (1,4){};
\node[qubit] (2) at (3,3){};
\node[qubit] (3) at (2,2){};

\node[mqubit] (a1) at (2,4){};
\node[mqubit] (a2) at (2,3){};
\draw [heavy] (a1) to (a2);
\node[mqubit] (b1) at (3,2){};
\node[mqubit] (b2) at (4,2){};
\draw [heavy] (b1) to (b2);

\draw (1) to (a1);
\draw (2) to (a2);
\draw (3) to (a2);
\draw (2) to (b1);
\draw (3) to (b1);
\end{scope}

\begin{scope}
\node[qubit] (4) at (5,1){};
\node[qubit] (5) at (3,0){};
\node[qubit,black] (6) at (6,0){};

\node[mqubit] (c1) at (3,1){};
\node[mqubit] (c2) at (4,1){};
\draw [heavy] (c1) to (c2);
\node[mqubit] (d1) at (4,0){};
\node[mqubit] (d2) at (5,0){};
\draw [heavy] (d1) to (d2);

\draw (4) to (c2);
\draw (4) to (d2);
\draw (5) to (c1);
\draw (5) to (d1);
\draw (6) to (d2);
\end{scope}

\node[mqubit] (e1) at (4,4){};
\node[mqubit] (e2) at (4,3){};
\node[mqubit] (e3) at (5,3){};
\node[mqubit] (e4) at (5,2){};
\draw [heavy] (e1) to (e2);
\draw [heavy] (e3) to (e4);
\draw (2) to (e2);
\draw (e2) to (e3);
\draw (e4) to (4);

\node[mqubit] (f1) at (2,1){};
\node[mqubit] (f2) at (2,0){};

\draw [heavy] (f1) to (f2);
\draw (3) to (f1);
\draw (f2) to (5);
\end{scope}

\end{tikzpicture}
\end{subfigure}
\caption{Tessellations of the square and triangular lattice gadgets.}
\label{fig:tessel}
\end{figure}

\subsection{2D square lattice}

A 2D square lattice is a bipartite graph, and for any bipartite graph we can locally change basis for every qubit in one side of the partition by conjugating by $Z$. Restricted to positive weights, this will take any QMA-complete interaction to one contained in StoqMA, by effectively taking $\alpha,\beta \rightarrow -\alpha,-\beta$. Since we do not expect QMA=StoqMA, we will instead study the case where both positive and negative weights are allowed.

To show that the XY interaction is QMA-complete on a 2D square lattice, we proceed as with the triangular lattice. We place a fine square grid over the 2D planar interaction graph of Section \ref{sec:planar}, move vertices to the nearest lattice points and deform edges to paths on the lattice (re-routing in a small region around each vertex if necessary to avoid collisions). We then use the subdivision gadget (Section \ref{sec:subdivision}) to implement effective interactions across the ends of the paths. However, on a bipartite graph such as the square lattice, the parity of the length of any path between two points is fixed. Therefore, if there are an odd number of qubits on the path but we wish to simulate a positive interaction (or vice versa), we need one of the interactions along this path to be negative. This is then enough to show that the XY interaction is QMA-complete on a 2D square lattice with positive and negative weights.

\begin{figure}
\centering
\begin{tikzpicture}[scale=0.8]

\begin{scope}[blue]
\node[qubit,black] (1) at (1,4)[label={[black]left:1}]{};
\node[qubit,black] (2) at (3,3)[label={[black]above:2}]{};
\node[qubit,black] (3) at (2,2)[label={[black]left:3}]{};

\node[mqubit] (a1) at (2,4){};
\node[mqubit] (a2) at (2,3){};
\draw [heavy] (a1) to (a2);
\node[mqubit] (b1) at (3,2){};
\node[mqubit] (b2) at (4,2){};
\draw [heavy] (b1) to (b2);

\draw (1) to (a1);
\draw (2) to (a2);
\draw (3) to (a2);
\draw[dashed] (2) to (b1);
\draw (3) to (b1);
\end{scope}

\begin{scope}[red]
\node[qubit,black] (4) at (5,1)[label={[black]right:4}]{};
\node[qubit,black] (5) at (3,0)[label={[black]below:5}]{};
\node[qubit,black] (6) at (6,0)[label={[black]right:6}]{};

\node[mqubit] (c1) at (3,1){};
\node[mqubit] (c2) at (4,1){};
\draw [heavy] (c1) to (c2);
\node[mqubit] (d1) at (4,0){};
\node[mqubit] (d2) at (5,0){};
\draw [heavy] (d1) to (d2);

\draw[dashed] (4) to (c2);
\draw (4) to (d2);
\draw (5) to (c1);
\draw (5) to (d1);
\draw (6) to (d2);
\end{scope}

\begin{scope}[green]
\node[mqubit] (e1) at (4,4){};
\node[mqubit] (e2) at (4,3){};
\node[mqubit] (e3) at (5,3){};
\node[mqubit] (e4) at (5,2){};
\draw [heavy] (e1) to (e2);
\draw [heavy] (e3) to (e4);
\draw[dashed] (2) to (e2);
\draw (e2) to (e3);
\draw (e4) to (4);

\node[mqubit] (f1) at (2,1){};
\node[mqubit] (f2) at (2,0){};

\draw [heavy] (f1) to (f2);
\draw (3) to (f1);
\draw (f2) to (5);
\end{scope}
\end{tikzpicture}
\caption{Square lattice gadget for simulating $XX+YY$. Dashed lines represent negative interactions.}
\label{fig:square}
\end{figure}

For a more general symmetric interaction $H=\alpha XX+\beta YY +\gamma ZZ$, given that we are allowed both positive and negative weights and can conjugate one side of the partition of the graph by $X,Y$ or $Z$, we can assume without loss of generality that $\alpha,\beta,\gamma$ are all non-negative. Figure \ref{fig:square} shows a gadget that fits on the 2D square lattice and simulates the interaction $-\lambda H^{(1)}+\mu H^{(2)}$ in exactly the same way as the gadget on the 
triangular lattice, by first simulating the six qubit Hamiltonian shown in Figure \ref{fig:triangle}(b). Again the blue and red gadgets act as fork gadgets, but in order for the unwanted $\widetilde{H}_{23}$ and $\widetilde{H}_{56}$ interactions to be correctly cancelled out, the sign of the interactions marked as dashed lines in Figure \ref{fig:square} must be negative. The other dashed line in Figure \ref{fig:square} must also be negative so that the gadget between qubits 2 and 3 correctly simulates $+H'_{23}$. It is also shown in Figure \ref{fig:tessel} that this gadget will tessellate on a square lattice without overlapping.

Therefore, using the prescription set out for the triangular lattice in the previous section, we can use $-\lambda H^{(1)}+\mu H^{(2)}$ to generate $XX+YY$.
We have thus shown that the interaction $\alpha XX + \beta YY +\gamma ZZ$ is QMA-complete on a square lattice with positive and negative weights, as long as two of $\alpha,\beta, \gamma$ are non-zero, and they are not all equal.

\subsubsection{Interactions with non-trivial 1-local part}
We will finally consider the more general case of 2-qubit interactions $H^{\text{full}}$ with a non-trivial 1-local part. It is sufficient to assume that $H^{\text{full}}$ is symmetric or antisymmetric under interchange of the two qubits~\cite{cubitt14}. For a symmetric Hamiltonian we can further assume that
\[H^{\text{full}}=\alpha XX+\beta YY+\gamma ZZ +AI+IA=H+AI+IA\]
where $A$ is a general 1-local term. An antisymmetric interaction has, up to normalization, the normal form $XZ-ZX +AI-IA$.
 
In both symmetric and antisymmetric cases, it is shown in~\cite{cubitt14} that the Hamiltonian 
\[H_0=H^{\text{full}}_{ab}-H^{\text{full}}_{cb}+H^{\text{full}}_{cd}-H^{\text{full}}_{ad}=H_{ab}-H_{cb}+H_{cd}-H_{ad}\] has a unique ground state $|\Omega\>$, and that in this state $|\Omega\>$ the reduced density matrix for each of the qubits $a,b,c,d$ is $I/2$. 

Therefore if we  project into the ground state of the Hamiltonian $H_0$, and apply an extra interaction $V=-H^{\text{full}}_{ed}$ for example, then the first order perturbation term will be
\[V_{--}=\<\Omega|-H^{\text{full}}_{ed}|\Omega\>=-A_{e}.\]
So by Lemma \ref{lem:1st}, we can simulate any multiple of $-A_{e}$, which we could use to cancel out the 1-local part of the interaction $H^{\text{full}}$ acting on $e$.

\begin{figure}
\centering
\begin{tikzpicture}
\foreach \i in {0,...,2}
	{\node[qubit] (\i,0) at (\i,0){};
	\node[qubit] (0,\i+1) at (0,\i+1){};
	\node[qubit] (\i+1,3) at (\i+1,3){};
	\node[qubit] (3,\i) at (3,\i){};}

\node[mqubit] (a) at (1,1){};
\node[mqubit] (b) at (1,2){};
\node[mqubit] (c) at (2,1){};
\node[mqubit] (d) at (2,2){};

\draw [heavy] (a) to (b);
\draw [heavy] (b) to (d);
\draw [heavy] (d) to (c);
\draw [heavy] (c) to (a);

\draw  (0,1) to (a);
\draw  (0,2) to (b);
\draw  (1,0) to (a);
\draw  (2,0) to (c);
\draw  (3,1) to (c);
\draw  (3,2) to (d);
\draw  (1,3) to (b);
\draw  (2,3) to (d);
\end{tikzpicture}
\caption{Square lattice gadget for removing 1-local terms}
\label{fig:1local}
\end{figure}
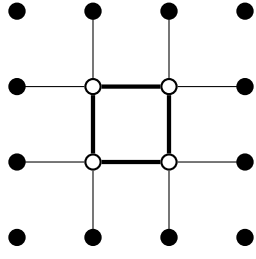

This gadget can be implemented on a square lattice as shown in Figure \ref{fig:1local}, to simulate arbitrary multiples of the 1-local term $A$ on each of the qubits connected to the central square.

These gadgets can be applied in parallel over a square lattice divided into $3 \times 3$ blocks, with additional physical $H^{\text{full}}$ interactions along the outside of each block, to simulate a Hamiltonian of the form depicted in Figure \ref{fig:1locallattice}. All of the physical interactions shown on this interaction graph are of the form $H^{\text{full}}$, but the 1-local terms simulated using these gadgets are chosen such that they exactly cancel out the 1-local part of $H^{\text{full}}$ on all the white qubits.

We label each black qubit with a label $i$, and label the pair of white qubits between $i$ and $j$ as $a_{ij}$ and $a_{ji}$, where $a_{ij}$ is the qubit nearest to $i$. We will construct a Hamiltonian acting on the interaction graph of Figure \ref{fig:1locallattice} that simulates a general Hamiltonian of the form $\sum \lambda_{ij} \widetilde{H}_{ij}$, where interactions take place between adjacent vertices on a square lattice, which was shown to be QMA-complete in the previous section.

\begin{figure}[b]
\centering
\begin{tikzpicture}
\foreach \x in {0,1,2,3}
\foreach \y in {0,1,2}
{
\begin{scope}[scale=0.5,shift={(3*\x,3*\y)}]
\node[qubit] (1) at (0,0){};
\node[qubit] (2) at (0,3){};
\node[qubit] (3) at (3,0){};
\node[qubit] (4) at (3,3){};

\node[mqubit] (12) at (0,1){};
\node[mqubit] (21) at (0,2){};
\draw[heavy] (12) to (21);
\draw (1) to (12);
\draw (2) to (21);

\node[mqubit] (13) at (1,0){};
\node[mqubit] (31) at (2,0){};
\draw[heavy] (13) to (31);
\draw (1) to (13);
\draw (3) to (31);

\node[mqubit] (24) at (1,3){};
\node[mqubit] (42) at (2,3){};
\draw[heavy] (24) to (42);
\draw (2) to (24);
\draw (4) to (42);

\node[mqubit] (34) at (3,1){};
\node[mqubit] (43) at (3,2){};
\draw[heavy] (34) to (43);
\draw (3) to (34);
\draw (4) to (43);

\end{scope}}
\end{tikzpicture}
\caption{Hamiltonian simulated by parallel use of gadget}
\label{fig:1locallattice}
\end{figure}
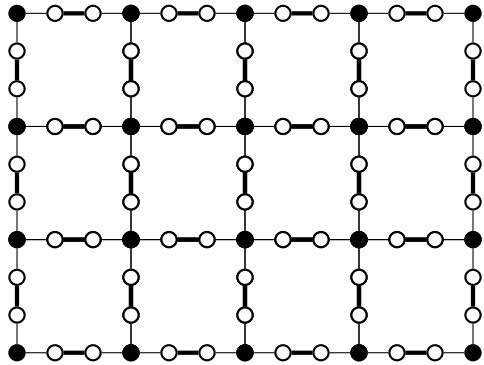

On each pair of white qubits we apply a heavy 2-local $H$ interaction, so that the overall heavily weighted term is
\[H_0=\sum_{(i,j) \in E} \frac{J}{|\lambda_{ij}|} H_{a_{ij} a_{ji}}\]
where $E$ is the set of edges of the square lattice and $J$ is a large weight to be defined.

We choose the weights of the other interactions so that the 1-local terms all cancel out on the black qubits, and then we can simulate $\widetilde{H}$ interactions as usual by applying the basic gadget to the resulting $H$ interactions. 
In order to do this we pick all the other weights (those between one black and one white qubit) to be of the same magnitude $\sqrt{J}$, where $J=O(\text{poly}(n))$ is the max size of the weights in the target Hamiltonian, and let the sign of the interaction between qubits $i$ and $a_{ij}$ be given by $\mu_{ij}=\pm 1$, so overall we have
\[V=\sqrt{J}\sum_{(i,j) \in E} \mu_{ij}(H_{ia_{ij}}+A_{i})+\mu_{ji}(H_{ja_{ji}}+A_{j}).\]
Therefore, ensuring that the 1-local terms cancel out on the $i$th black qubit is equivalent to choosing $\mu_{ij}$ such that $\sum_j \mu_{ij}=0$. In order to simulate the target Hamiltonian $H_{\text{target}}=\sum\lambda_{ij}\widetilde{H}_{ij}$, we need $\mu_{ij}\mu_{ji}=\text{sgn}(\lambda_{ij})$.

Suppose the black qubits along the top row are labelled $1,2,\dots$ from left to right. Pick (arbitrarily) $\mu_{12}=+1$, and then choose inductively all the $\mu_{ij}$ values along the top row according to the following rules:
\[\mu_{i+1,i}=\text{sgn}(\lambda_{i,i+1})\mu_{i,i+1}\quad \text{and} \quad \mu_{i,i+1}=-\mu_{i,i-1}.\] Then do the same for every other row as well, and repeat this procedure for each column. Then only the black qubits on the edge of the lattice still have 1-local terms, and these can be removed simply by adding extra square gadgets to the outside of the lattice. 

Therefore in the symmetric case, where $H=\alpha XX +\beta YY+\gamma ZZ$, the overall Hamiltonian $\Delta H_0+\Delta^{1/2}V$ will simulate the effective Hamiltonian
\[H_{\text{eff}}=-V_{-+}H_0^{-1}V_{+-}=\sum_{(i,j) \in E}\lambda_{ij}\widetilde{H}_{ij}, \]
where $\widetilde{H}$ is defined as in eqn.\ (\ref{eq:htilde}). Thus we can simulate any Hamiltonian consisting of $\widetilde{H}$ terms on a square lattice. In the antisymmetric case, $H=XZ-ZX$, we can choose signs in exactly the same way so that all 1-local terms are cancelled out and we simulate the target Hamiltonian 
\[H_{\text{target}}=\sum_{(i,j)\in E}\lambda_{ij} \left(X_iZ_j-Z_iX_j\right). \]

It remains to show that this Hamiltonian is QMA-complete, even when restricted to a 2D square lattice. We achieve this by simulating $XX+ZZ$ terms on the lattice. When two qubits are connected to the same qubit of a mediator pair using $XZ-ZX$ interactions, as in Figure \ref{fig:antisym1}, we showed in Section \ref{sec:antisym} that an effective $XX+ZZ$ is simulated. However, this gadget does not tessellate on the lattice, as there is not enough space to fit the two mediator qubits between physical qubits. In Section \ref{sec:antisym}, we also showed that two qubits connected to opposite ends of a mediator qubit pair can simulate an effective $XZ-ZX$ interaction, so this is a kind of subdivision gadget. If one of the edges of the gadget in Figure \ref{fig:antisym1} is simulated using this subdivision gadget, we obtain the 6 qubit gadget in Figure \ref{fig:antisym2}. This 6 qubit gadget will tessellate on a 2D square lattice, so we can use it to simulate the XY interaction (after relabelling $Z$ to $Y$) on a 2D square lattice.

\begin{figure}[t]
\centering
\begin{subfigure}{0.4\textwidth}
\centering
\begin{tikzpicture}
\node[qubit] (1) at (0,0){};
\node[mqubit] (2) at (1,0){};
\node[mqubit] (3) at (1,1){};
\node[qubit] (4) at (2,0){};

\draw[heavy] (2) to (3);
\draw (1) to (2);
\draw (2) to (4);
\end{tikzpicture}
\caption{Basic gadget}
\label{fig:antisym1}
\end{subfigure}
\quad
\begin{subfigure}{0.4\textwidth}
\centering
\begin{tikzpicture}
\node[qubit] (1) at (0,0){};
\node[qubit] (2) at (1,0){};
\node[qubit] (3) at (1,1){};
\node[mqubit] (4) at (2,0){};
\node[mqubit] (5) at (3,0){};
\node[qubit] (6) at (4,0){};

\draw[heavy] (4) to (5);
\draw (1) to (2);
\draw (3) to (2);
\draw (2) to (4);
\draw (5) to (6);
\end{tikzpicture}
\caption{6 qubit gadget}
\label{fig:antisym2}
\end{subfigure}
\caption{Gadgets for simulating $XX+ZZ$ using $XZ-ZX$ interactions. The 6 qubit gadget simulates the basic gadget to implement an effective $XX+ZZ$ interaction across the left- and rightmost qubits.}
\end{figure}
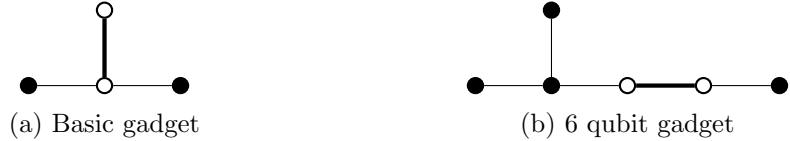

We have finally completed the proof of Theorem \ref{thm:square}.

\begin{reptheorem}{thm:square}
Let $H$ be a 2-local qubit interaction with Pauli rank at least 2, such that the 2-local part of $H$ is not proportional to $XX+YY+ZZ$. Then $\{H\}$\textsc{-Hamiltonian} is QMA-complete, even when the interactions are restricted to a 2D square lattice.
\end{reptheorem}

\section{The XY model on a cyclic chain}
\label{sec:XYcyclic}
This section contains the deferred proof of Lemma \ref{lem:xy}. Recall that in this lemma we consider the XY model on a cyclic quantum spin chain of $N$ qubits, for $N$ even but not a multiple of 4. This is a system of $N$ qubits on a circle with an interaction of the form $XX+YY$ between nearest neighbour qubits, such that the overall Hamiltonian is 
\[H=\frac{1}{4}\left[\sum_{i=1}^{N-1}(X_{i}X_{i+1}+Y_{i}Y_{i+1}) \: + X_1X_N+Y_1Y_N\right]\] 
The overall factor of $\frac{1}{4}$ is included in order to match the original formulation in~\cite{Lieb1961} and the notation used in the rest of literature.
The properties of the system that we are most interested in are the spectral gap (the energy difference between the ground state $|\Omega\>$ and the first excited state) and the spin correlation functions $\rho=\<\Omega|X_iX_j+Y_iY_j|\Omega\>$ (the expectation value of $XX+YY$ in the ground state). In particular we will show that both of these quantities are $\Omega(1/\text{poly}(N))$.

This model has been extensively studied since the seminal paper by Lieb, Schultz and Mattis~\cite{Lieb1961}. For a helpful review of the area see~\cite{Metlitski2004}. However, works in this area usually consider the infinite chain, whereas we will need to take $N$ finite (but growing). In particular, asymptotic expressions for the spin correlation functions of the infinite chain were determined in~\cite{McCoy68}. Other related work studying correlation functions in varying settings (finite temperature, anisotropic interactions, etc.)\ includes~\cite{McCoy71,perk83}. These can be approached using a mapping to the 2D Ising model (see e.g.~\cite{perk77}). Here we will start by considering the infinite chain, and then show that the finite chain provides an asymptotically precise approximation. Following the exposition in~\cite{Metlitski2004}, we provide self-contained proofs of the results that we need.

\subsection{Ground State and Spectral Gap}

As is shown in~\cite{Lieb1961} the Hamiltonian $H$ can be transformed to a system of free fermions with nearest neighbour interactions in the following way:

First let \[a_j=\frac{1}{2}(X_j+iY_j) \quad \text{and} \quad a_j^{\dagger}=\frac{1}{2}(X_j-iY_j) \quad \text{for } j=1,...,n\] so that the Hamiltonian of the system is now:
\[H=\frac{1}{2}\left[\sum_{i=1}^{N-1}(a_{i}^{\dagger}a_{i+1}+a_{i}a_{i+1}^{\dagger}) \: + a_1^{\dagger}a_N+a_1a_N^{\dagger}\right]\]

Note that the $a_i$ do not satisfy canonical commutation relations for a boson or canonical anticommutation relations of a fermion, but instead a mix of the two: $\{a_i,a_i^{\dagger}\}=I,\{a_i,a_i\}=0$ and $[a_i^{\dagger},a_j]=0=[a_i,a_j]$ for $i\neq j$ . So we apply a Jordan-Wigner transformation which takes the Hamiltonian into the form of one for a system of fermions, which can then be solved simply. Let
\[c_i=\exp\left(i\pi\sum_{j=1}^{i-1} a_j^{\dagger}a_j\right)a_i \quad \text{and} \quad c_i^{\dagger}=a_i^{\dagger}\exp\left(-i\pi\sum_{j=1}^{i-1} a_j^{\dagger}a_j\right)\]
which satisfy the canonical fermionic anticommutation relations $\{c_i,c_j^\dagger\}=\delta_{ij}$ and $\{c_i,c_j\}=0$. For more information on these operators see for example~\cite{Metlitski2004}. In terms of the $c_i$'s the Hamiltonian becomes 
\[H=\frac{1}{2}\left[\sum_{i=1}^{N-1}(c_{i}^{\dagger}c_{i+1}+c_{i}c^{\dagger}_{i+1})-(c_{N}^{\dagger}c_{1}+c_{1}^{\dagger}c_{N})\exp(i\pi \sum_{j=1}^{N} c_j^{\dagger}c_j)\right]\]

At this point Lieb, Schultz and Mattis~\cite{Lieb1961} turn to considering the similar Hamiltonian
\[H_- = \frac{1}{2}\left[\sum_{i=1}^{N-1}(c_{i}^{\dagger}c_{i+1}+c_{i}c^{\dagger}_{i+1})+c_{N}^{\dagger}c_{1}+c_{1}^{\dagger}c_{N}\right]\] 
as they are interested only in the leading order asymptotics, and this change only results in an $O(1/N)$ error in the calculation of the eigenvalues. However, we will need to be more careful here.

Let $P=\exp(i\pi \mathcal{N})$ where $\mathcal{N}=\sum_{j=1}^{N}c_j^{\dagger}c_j$ is the fermionic number operator which has eigenvalues $0,1,...,N$, so the eigenvalues of $P$ are $\pm 1$. Note that $P$ commutes with $H$, so we can diagonalise $H$ into block diagonal form and consider its action on the two eigenspaces of $P$ separately. Then

\[H=H_-\frac{1-P}{2}+H_+\frac{1+P}{2} \quad \text{where } H_- \text{ is as above, and }\]
\[H_+=\frac{1}{2}\left[\sum_{i=1}^{N-1}(c_{i}^{\dagger}c_{i+1}+c_{i}c^{\dagger}_{i+1})-(c_{N}^{\dagger}c_{1}+c_{1}^{\dagger}c_{N})\right]\] 

Now we can diagonalise both $H_-$ and $H_+$ separately, by first writing $H_-$ as $\sum c_i^{\dagger}A_{ij}c_{j}$, and finding the unitary matrix $U$ that diagonalises $A$. 

\[H_-=\sum c_i^{\dagger}A_{ij}c_{j} \quad \Rightarrow \quad A_{ij}=\left\lbrace \begin{array}{lc}
(\delta_{2j}+\delta_{jN})/2 &  i=1 \\
(\delta_{i+1 \: j}+\delta_{j\: i-1})/2 &  1<i<N \\
(\delta_{1j}+\delta_{j\: N-1})/2 &  i=N \\
\end{array}
\right.\]
Let $\phi_{kj}$ denote the $j$th coordinate of the $k$th eigenvector of $A$. We can easily verify that with $\phi_{kj}$ as below, $\{\phi_k\}$ is a complete orthonormal set of eigenvectors with corresponding eigenvalues $\Lambda_k= \cos(2\pi k/N)$ :
\[\phi_{kj}=\frac{1}{\sqrt{N}}e^{2\pi i kj/N}, \quad \text{ where }  0\leqslant k \leqslant N-1\]

 Then by making a second change of variables $\eta_k=\sum_{i}\phi^{\star}_{ki}c_j$ (or conversely $c_i=\sum_{k}\phi_{ki}\eta_k$), which are also fermionic operators, the Hamiltonian becomes:
 
 \[H_-=\sum_k \Lambda_k\eta_k^{\dagger}\eta_k\]
 To make the ground state energy clear, we will make one final change of variables 
 \[\xi_k=\left\lbrace \begin{array}{cc}
 \eta_k & \Lambda_k \geqslant 0 \\
 \eta_k^{\dagger} & \Lambda_k < 0 \\
\end{array}  \right. \]
Then the Hamiltonian $H_-$ becomes 
\[H_-= \sum_{k:\Lambda_k \geqslant 0} \Lambda_k \xi_k^{\dagger} \xi_k + \sum_{k:\Lambda_k < 0} \Lambda_k \xi_k \xi_k^{\dagger}=\sum_{k} |\Lambda_k| \xi_k^{\dagger} \xi_k +\left(\sum_{k:\Lambda_k < 0} \Lambda_k\right)I\]
 And so the ground state of $H_-$ is $|\Omega\>$, the unique state that satisfies $\xi_k |\Omega\>=0$ for all $k$, and which has energy equal to 
\[\sum_{k:\Lambda_k<0} \Lambda_k=\sum_{n=\frac{N+2}{4}}^{\frac{3N-2}{4}} \cos \left(\frac{2\pi n}{N}\right)=\Re\left[e^{\frac{2\pi i(N+2)}{4N}}\sum_{n=0}^{\frac{N}{2}-1} e^{\frac{2\pi in}{N}}\right]\]
\[=\Re \left[i e^{\frac{i\pi}{N}}\frac{2}{1-e^{\frac{2\pi i}{N}}}\right]=-\frac{1}{\sin\left(\frac{\pi}{N}\right)}\]

The other eigenstates of $H_-$ will be of the form $\xi_{k_1}^{\dagger}\xi_{k_2}^{\dagger} \dots \xi_{k_l}^{\dagger} |\Omega\>$ with extra energy above the ground energy of $|\Lambda_{k_1}|+|\Lambda_{k_2}| +\dots |\Lambda_{k_l}|$.
Therefore in the case where $N$ is not a multiple of 4, $\Lambda_k \neq 0$ for any $k$, and so $|\Omega\>$ is the unique ground state of $H_-$.

In order for $|\Omega\>$ to also be a true eigenstate of the full Hamiltonian $H$ with the same energy, we will also need $P|\Omega\>=-|\Omega\>$. Note that, for $N$ even $|\{k:\Lambda_k<0\}|=N/2$, so 
\[\mathcal{N}|\Omega\>=\sum_i c_i^{\dagger}c_i|\Omega\>=\sum_k \eta_k^{\dagger}\eta_k|\Omega\>=\sum_{k:\Lambda_k<0} |\Omega\>=\frac{N}{2}|\Omega\>\]
For $N$ even and not a multiple of 4, $N/2$ is odd and so  $P|\Omega\>=\exp(i\pi\mathcal{N})|\Omega\>=-|\Omega\>$ as required.

We now repeat this procedure for the $H_+=\sum c_i^{\dagger}B_{ij}c_j$ part of the Hamiltonian. 
 \[H_+=\sum c_i^{\dagger}B_{ij}c_{j} \quad \Rightarrow \quad B_{ij}=\left\lbrace \begin{array}{lc}
(\delta_{2j}-\delta_{jN})/2 &  i=1 \\
(\delta_{i+1 \: j}+\delta_{j\: i-1})/2 &  1<i<N \\
(-\delta_{1j}+\delta_{j\: N-1})/2 &  i=N \\
\end{array}
\right.\]

Let the eigenvalues and eigenvectors of $B$ be given by $\Lambda_k'$ and $\psi_{kj}$ respectively. They have a very similar form to $A$, only shifted by a factor of $\pi/N$:

\[\Lambda_k'=\cos\left(\frac{(2k+1)\pi}{N}\right) \quad \psi_{kj}=\frac{1}{\sqrt{N}}e^{i(2k+1)\pi j/N}, \quad 0\leqslant k \leqslant N-1\]
In the case where $N$ is even but not divisible by 4, the ground state of $H_+$ is fourfold degenerate and has energy

\[\sum_{k:\Lambda_k' \leqslant 0} \Lambda_k=\sum_{n=\frac{N-2}{4}}^{\frac{3N-2}{4}} \cos \left(\frac{(2n+1)\pi}{N}\right)=\Re\left[e^{i\pi(\frac{2(N-2)}{4}+1)/N}\sum_{n=0}^{\frac{N}{2}-1} e^{\frac{2\pi in}{N}}\right]\]
\[=\Re \left[i \frac{2}{1-e^{\frac{2\pi i}{N}}}\right]=-\frac{1}{\tan\left(\frac{\pi}{N}\right)}\]

Since $-\sin(\pi/N)^{-1}<-\tan(\pi/N)^{-1}$, we have now shown that $|\Omega\>$ is the unique ground state of the whole Hamiltonian $H$.

\subsubsection{Spectral gap}
In the $P=-1$ sector, all states orthogonal to the $|\Omega\>$ have an extra energy above the ground energy of at least $|\cos(\frac{\pi(N+2)}{2N})|=\sin(\frac{\pi}{n})$.

And in the $P=+1$ sector, all states have an extra energy above the ground energy of at least 
\[\frac{1}{\sin\left(\frac{\pi}{N}\right)}-\frac{1}{\tan\left(\frac{\pi}{N}\right)}=\frac{1-\cos\left(\frac{\pi}{N}\right)}{\sin\left(\frac{\pi}{N}\right)}=\frac{2\sin^2\left(\frac{\pi}{2N}\right)}{2\sin\left(\frac{\pi}{2N}\right)\cos\left(\frac{\pi}{2N}\right)}=\tan\left(\frac{\pi}{2N}\right)\]

In particular one of the four eigenstates of $H_+$ that has energy $-\tan(\pi/N)^{-1}$ will have $P=+1$, and so will be a true eigenstate of the total Hamiltonian $H$.
Therefore the spectral gap is exactly $\tan\left(\frac{\pi}{2N}\right)$.

\subsection{Spin correlation functions}
We need to calculate the quantity $\<\Omega|X_iX_j+Y_iY_j|\Omega\>$, but given the symmetry of the problem it will suffice to calculate $\rho_{ij}=\<\Omega|X_iX_j|\Omega\>$. Note that because of the translational invariance of the original Hamiltonian, we expect $\rho_{ij}$ to depend only on $|i-j|$.

For arbitrary $i<j$,
\[X_iX_j=(a_i^{\dagger}+a_i)(a_j^{\dagger}+a_j)=(c_i^{\dagger}+c_i)\exp(i\pi\sum_{k=i}^{j-1}c_k^{\dagger}c_k)(c_j^{\dagger}+c_j)\]
\[=\dots =(c_i^{\dagger}-c_i)\prod_{k=i+1}^{j-1}\left((c_k^{\dagger}+c_k)(c_k^{\dagger}-c_k)\right)(c_j^{\dagger}+c_j)\]
So\[\rho_{ij}=\<\Omega|B_iA_{i+1}B_{i+1}A_{i+2}\dots A_{j-1}B_{j-1}A_{j}|\Omega\>\]
where we have defined the operators $A_i=c_i^{\dagger}+c_i, B_i=c_i^{\dagger}-c_i$, which obey the following commutation rules:
$\{A_i,B_j\}=0,\{A_i,A_j\}=\delta_{ij}, \{B_i,B_j\}=-2\delta_{ij}$.
We can use Wick's Theorem to express this expectation in terms of a sum over all possible contractions.

We have 
\[\<\Omega|B_lA_m|\Omega\>=\<\Omega|(c_l^{\dagger}-c_l)(c_m^{\dagger}+c_m)|\Omega\>=\sum_{k,k'} \<\Omega|(\phi_{kl}^{*}\eta_k^{\dagger}-\phi_{kl}\eta_k)(\phi_{k'm}^{*}\eta_{k'}^{\dagger}+\phi_{k'm}\eta_{k'})|\Omega\>\]
\[=\sum_{k}\phi_{kl}^{*}\phi_{km}\<\Omega|\eta_k^{\dagger}\eta_k|\Omega\>-\phi_{kl}\phi_{km}^{*}\<\Omega|\eta_k\eta_k^{\dagger}|\Omega\>=\frac{1}{N}\left(\sum_{k,\Lambda_k<0}e^{2\pi ik(m-l)/N}-\sum_{k,\Lambda_k\geqslant 0}e^{2\pi ik(l-m)/N}\right)\]
Noting that this expression depends only on $l-m$, we define $G_{l-m}=\<\Omega|B_lA_m|\Omega\>$ in order to match the formulation in~\cite{Lieb1961}. Letting $r=l-m$ and pairing up terms $k$ and $N-k$, we get
\[G_r=\dots=\frac{1}{N}\left[(1-(-1)^r)+2\sum_{k=1}^{\frac{N-2}{4}} \cos(2\pi kr/N)-2\sum_{k=\frac{N+2}{4}}^{\frac{N}{2}-1}\cos(2\pi kr/N)\right]\]
Then we pair up the terms $k$ and $N/2-k$, which cancel out for even $r$, giving $G_r=0$ if $r$ is even. But for odd $r$:
\[\sum_{k=1}^{\frac{N-2}{4}}\cos(2\pi kr/N)=-\sum_{k=\frac{N+2}{4}}^{\frac{N}{2}-1}\cos(2\pi kr/N)=-\frac{1}{2}+\frac{(-1)^{\frac{r+1}{2}}}{2\sin\left(\frac{\pi r}{N}\right)}\]
So
\[G_r=\frac{2(-1)^{\frac{r+1}{2}}}{N\sin\left(\frac{\pi r}{N}\right)} \qquad r \text{ odd}\]

Then from Wick's Theorem (and the calculations $\<A_lA_m\>=\delta_{lm}=\<B_lB_m\>$ which can be found in~\cite{Lieb1961}), we get $\rho_{ij}=\det R_{n,N}$, where $n=i-j$ and $R_{n,N}$ is the $n\times n$ matrix with entries:
\begin{equation}
(R_{n,N})_{lm}=G_{l-m+1}
\label{eqn:RnN}
\end{equation}

In~\cite{Lieb1961} (and the rest of the literature) the limit $N\rightarrow \infty$ is taken earlier as this simplifies the calculation of $G_{l-m}$ (especially in the anisotropic case), and again does not affect the leading order asymptotics. This then gives a matrix $R_n$, with entries:
\begin{equation}(R_n)_{lm}= 
\quad  \left\lbrace \begin{array}{cc}
0 & l-m \text{ odd} \\
(-1)^{\frac{l-m}{2}}\frac{2}{\pi (l-m+1)}& l-m \text{ even}\\
\end{array}\right.
\label{eqn:Rn}
\end{equation}

Given the known limit $\lim_{x\rightarrow 0}x/\sin x=1$ we see that each entry in $R_n$ is the limit as $N\rightarrow \infty$ of the corresponding entry in $R_{n,N}$. Since $\det$ is a polynomial, and hence continuous, function of the entries of a matrix, for fixed $n$
\[\det R_n=\lim_{N\rightarrow \infty} \det R_{n,N}.\]
However we will need a stronger result for the behaviour as $n\rightarrow\infty$ and $N=\text{poly}(n)$. The exact result is given in Lemma \ref{lem:Requiv} below. We will first need to determine the determinant of the matrix $R_n$. This was previously obtained in~\cite{McCoy68}; for completeness, we include an alternative concise proof based on the general theory of Toeplitz matrices and a recent result of Ehrhardt~\cite{Ehrhardt2001} on the Fisher-Hartwig conjecture. 

\subsection{Toeplitz matrices}
A matrix of the following diagonal form is known as a Toeplitz matrix, see for example~\cite{Bottcher1999} for an introduction. 
\[T_n= \left(\begin{array}{ccccc}
t_0 & t_1 & t_2 &\dots & t_n \\
t_{-1}& t_0 &t_1 & \ddots &\vdots\\
t_{-2}& t_{-1} & t_0 &\ddots &\\
\vdots & \ddots &\ddots &t_0 &t_1\\
t_{-n} & \dots&& t_{-1}& t_0\\

\end{array}\right)
\]
These matrices have been especially well studied when there exists a complex-valued function $f$ integrable on the unit circle such that $\{t_k\}$ are the Fourier coefficients of $f$:
\begin{equation}
t_k=\frac{1}{2\pi}\int_{0}^{2\pi}f(e^{i\theta})e^{-ik\theta}d \theta, \quad k \in \mathbb{Z}
\label{eqn:coeff}
\end{equation}
The function $f$ is called the \emph{symbol} of the corresponding Toeplitz matrix $R_n$. We can find such a symbol by calculating the Fourier series with coefficients $t_k$. If $f(z)$ is sufficiently smooth, then the famous Szeg\"o limit theorem gives the asymptotic behaviour of $\det(R_n)$ as $n\rightarrow \infty$. However in our case,  the symbol of $R_n$ is given by the function $g(z)$, which is discontinuous:
\begin{equation} g(z)=\sum_{k=-\infty}^{\infty} t_k z^k=\left\lbrace \begin{array}{cc}
z & \arg(z)\in [0,\frac{\pi}{2})\cup (\frac{3\pi}{2},2\pi)\\
-z & \arg(z)\in (\frac{\pi}{2},\frac{3\pi}{2})\\
0 & z=\pm i\\
\end{array}\right.
\label{eqn:Toef}
\end{equation}
Note that this function $g$ is only defined on the unit circle $S=\{z:|z|=1\}$. In fact the value of $g(z)$ at $z=\pm i$ is unimportant as the coefficients $t_k$ as defined by (\ref{eqn:coeff}) will be the same for functions that agree almost everywhere (i.e. functions that differ only on a set of Lebesgue measure zero). 

The Fisher-Hartwig conjecture is an attempt to generalise the Szeg\"o limit theorem to piecewise continuous symbols, such as $g(z)$; see~\cite{Bottcher1999} and references therein. The Fisher-Hartwig conjecture has previously been used to study correlation functions of the XY model, for example in~\cite{Ovchinnikov2007}.
 A function $f$ is said to have Fisher-Hartwig singularities if it can be written in the following form: 

\begin{equation} f(z)=e^{V(z)}z^{\sum_{j=0}^{m}\beta_j}\prod_{j=0}^m |z-z_j|^{2\alpha_j} g_{z_j,\beta_j}(z)z_j^{-\beta_j} 
\label{eqn:FHsing}
\end{equation}
\[\text{where } z_j=e^{i\theta_j}, \: \theta_j \in [0,2\pi),\: \text{and } g_{z_j,\beta_j}= \left\lbrace \begin{array}{cc}
e^{i\pi \beta_j} & 0<\arg(z)<\theta_j\\
e^{-i \pi \beta_j} & \theta_j<\arg(z)<2\pi\\
\end{array}\right. \]

The Fisher-Hartwig conjecture states that for $V(z)$ sufficiently smooth and $\alpha_j,\beta_j \in \mathbb{C}$, then 
\[\det(R_n)\sim E[f] \exp(nV_0)  n^{\sum_{j=0}^{m}\beta_j} \text{ as } n\rightarrow \infty\]
where $E[f]$ is a constant independent of $n$, and $V_0=\frac{1}{2\pi}\int_{0}^{2\pi} V(e^{i\theta})d\theta$ is the zero-th Fourier coefficient of $V$.

This conjecture has now been proved to be true for a number of different conditions on $\alpha_j,\beta_j$ and $V$; but in particular Ehrhardt~\cite{Ehrhardt2001} showed that this holds under the following conditions:
\begin{description}
\item[i)] $\Re \:\alpha_j > -1/2$ for all $j$
\item[ii)] $\max_{j,k}|\Re \:\beta_j-\Re \:\beta_k|<1$ for all $j,k$
\item[iii)] $\alpha_j \pm \beta_j \neq -1,-2,-3,\dots$ for all $j$
\item[iv)] $V \in C^{\infty}$ ($V$ is infinitely differentiable).
\end{description}
An alternative proof of this result is presented in~\cite{Deift2012}.

The function $g(z)$ can be written in the form of (\ref{eqn:FHsing}) with two Fisher-Hartwig singularities:
\[g(z)=zg_{i,-\frac{1}{2}}(z)g_{-i,-\frac{1}{2}}(z) \]
Note that for $j=0,1$, $\alpha_j=0,\beta_j=-1/2$ and $V\equiv 0$, so all of the four conditions \textbf{(i)-(iv)} are satisfied. In fact the exact constant can also be calculated:
\[\det(R_n)\sim E n^{-\frac{1}{2}}\]
\[\text{where } E= \sqrt{2}G(\tfrac{1}{2})^2G(\tfrac{3}{2})^2=\frac{2^{2/3}\sqrt{e}}{A^6} \approx 0.58835...\]
where $G$ is the Barnes G-function and $A$ is the Glaisher-Kinkelin constant. This agrees with the result previously obtained in~\cite{McCoy68}.

\subsection{Relation between $R_{n,N}$ and $R_n$}
We now prove the following Lemma:

\begin{lemma}
\label{lem:Requiv}
Let $R_{n,N}$ and $R_n$ be the $n\times n$ matrices defined in equations (\ref{eqn:RnN}) and (\ref{eqn:Rn}). Then, if $N=\Omega(n^{\alpha})$ for $\alpha > 7/4$,
\[\det(R_{n,N}) \sim \det(R_n) \text{ as } n \rightarrow \infty. \]
\end{lemma}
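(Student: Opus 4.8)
The plan is to view $R_{n,N}$ as a small perturbation of $R_n$ and to control the determinant by multilinearity in the columns. The single substantive ingredient is the asymptotics $\det R_n \sim E n^{-1/2}$ with $E>0$ obtained above from Ehrhardt's theorem: it guarantees $|\det R_n| \ge \frac12 E n^{-1/2}$ for all large $n$, so that any absolute error which is $o(n^{-1/2})$ is negligible relative to $\det R_n$. Everything else is elementary estimation; the work is in keeping track of powers of $n$ as $N$ grows with $n$.

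First I would set $\Delta_n = R_{n,N} - R_n$. Like $R_n$ and $R_{n,N}$, it vanishes on entries with $l-m$ odd, and for $l-m$ even we put $r = l-m+1$, an odd integer with $1 \le |r| \le n$. Since $N = \Omega(n^{\alpha})$ with $\alpha > 1$ we have $|r|/N \to 0$ uniformly over the matrix, so the Taylor expansion $N\sin(\pi r/N) = \pi r\,(1 + O((r/N)^2))$ gives
\[ \left| \frac{1}{N\sin(\pi r/N)} - \frac{1}{\pi r} \right| = \frac{1}{N}\cdot\frac{|\pi r/N - \sin(\pi r/N)|}{|(\pi r/N)\sin(\pi r/N)|} = O\!\left( \frac{|r|}{N^2} \right), \]
and hence $|(\Delta_n)_{lm}| = O(|l-m+1|/N^2)$. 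Next I would record two column-norm estimates. Writing $l-m = 2s$, the nonzero entries of column $k$ of $R_n$ are $\pm \frac{2}{\pi(2s+1)}$ for distinct odd integers $2s+1$, so
\[ \|(R_n)_{\cdot k}\|_2^2 \le \frac{4}{\pi^2}\sum_{j \text{ odd}} \frac{1}{j^2} = \frac{4}{\pi^2}\cdot\frac{\pi^2}{4} = 1; \]
and $\|(\Delta_n)_{\cdot k}\|_2^2 = \sum_{|r| \le n,\ r \text{ odd}} O(r^2/N^4) = O(n^3/N^4)$, so $\delta := \max_k \|(\Delta_n)_{\cdot k}\|_2 = O(n^{3/2}/N^2)$.

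I would then expand $\det R_{n,N} = \det(R_n + \Delta_n)$ by multilinearity in the $n$ columns. The term taking every column from $R_n$ is $\det R_n$; the remaining terms are indexed by nonempty $S \subseteq \{1,\dots,n\}$, where $S$ records the columns taken from $\Delta_n$, and for the corresponding matrix $M_S$ Hadamard's inequality with the column bounds gives $|\det M_S| \le \prod_{j \notin S}\|(R_n)_{\cdot j}\|_2 \prod_{j \in S}\|(\Delta_n)_{\cdot j}\|_2 \le \delta^{|S|}$. Summing over $|S| = 1, \dots, n$,
\[ |\det R_{n,N} - \det R_n| \le \sum_{k=1}^{n} \binom{n}{k}\delta^k = (1+\delta)^n - 1 \le n\delta\, e^{n\delta}. \]
For $N = \Omega(n^{\alpha})$ with $\alpha > 7/4$ we have $n\delta = O(n^{5/2}/N^2) = O(n^{5/2-2\alpha}) \to 0$, hence $|\det R_{n,N} - \det R_n| = O(n^{5/2-2\alpha})$; dividing by $|\det R_n| \ge \frac12 E n^{-1/2}$ yields $\det R_{n,N}/\det R_n = 1 + O(n^{3-2\alpha}) \to 1$, which is the claim.

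The part I expect to matter most is not any individual estimate but the exponent bookkeeping: because the dimension $n$ grows with $N$, the pointwise convergence of the matrix entries is by itself useless, and one must weigh the accumulated error against the known decay rate $n^{-1/2}$ of $\det R_n$. The hypothesis $\alpha > 7/4$ is comfortably sufficient — indeed the $\ell^2$ Hadamard bound above already works for $\alpha > 3/2$, and using cruder $\ell^1$ column norms one still obtains exactly the $\alpha > 7/4$ threshold — but the essential point is simply that $\det R_n$ decays only polynomially, so a superpolynomially small relative perturbation cannot change the leading asymptotics. An alternative would be to write $\det R_{n,N} = \det R_n \cdot \det(I + R_n^{-1}\Delta_n)$ and bound $\|R_n^{-1}\Delta_n\|$, but that additionally requires controlling the smallest singular value of $R_n$, which the multilinear expansion avoids.
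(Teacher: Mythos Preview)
Your argument is correct and genuinely different from the paper's. The paper symmetrises by the reversal permutation $S_n$, applies Weyl's inequality to the Hermitian matrices $R_nS_n$ and $R_{n,N}S_n$ to get $|\mu_i-\lambda_i|\le\|B\|$, bounds $\|R_n\|\le\|g\|_\infty=1$ via the Toeplitz symbol to force $|\lambda_i|\le 1$, and then uses $\det R_n\sim En^{-1/2}$ to extract the lower bound $|\lambda_i|\ge E'n^{-1/2}$; this yields $(1\pm\|B\|n^{1/2}/E')^n$ bounds on the ratio of determinants, with $\|B\|\le\|B\|_{2}=O(n^2/N^2)$ via Hilbert--Schmidt. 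You instead expand $\det(R_n+\Delta_n)$ by multilinearity in the columns and control each mixed term by Hadamard's inequality, using the elementary computation $\|(R_n)_{\cdot k}\|_2^2\le\frac{4}{\pi^2}\sum_{j\text{ odd}}j^{-2}=1$ in place of the Toeplitz norm bound, and a sharper column-wise estimate $\|(\Delta_n)_{\cdot k}\|_2=O(n^{3/2}/N^2)$. What this buys you: no symmetrisation step, no appeal to Weyl or to Toeplitz operator theory, and in fact a strictly better threshold --- your bound $|\det R_{n,N}-\det R_n|=O(n^{5/2-2\alpha})$ already beats $n^{-1/2}$ once $\alpha>3/2$, whereas the paper's cruder Hilbert--Schmidt step $\|B\|_2=O(n^2/N^2)$ is what produces the $7/4$ exponent. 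The paper's route, on the other hand, gives two-sided multiplicative control of the determinant ratio directly, which is conceptually clean when one already knows the eigenvalues are real.
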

\textbf{Remark.} We have strong numerical evidence to suggest that this result still holds under more relaxed conditions for $N$, perhaps even just $N>n$. However the result as stated here is enough for our purposes.
\begin{proof}[\textbf{\emph{Proof}}]
The proof uses the standard Weyl's inequality for Hermitian matrices to relate the eigenvalues of matrices closely related to $R_n$ and $R_{n,N}$. First introduce the permutation matrix $S_n$ with entries $(S_n)_{ij}=\delta_{i,n+1-j}$. Then $R_nS_n$ and $R_{n,N}S_n$ are both real symmetric matrices. Let $\{\lambda_i\}$ and $\{\mu_i\}$ be the ordered eigenvalues of $R_nS_n$ and $R_{n,N}S_n$ respectively. Then, by Weyl's inequality:
\[\lambda_i-\|B\| \leqslant \mu_i \leqslant \lambda_i+\|B\|\]
where $B=R_{n,N}-R_n$ is the difference between these two matrices. Then divide through by $\lambda_i$ (considering the cases $\lambda_i>0$ and $\lambda_i<0$ separately), to obtain
\[\left(1-\frac{\|B\|}{|\lambda_i|}\right)\leqslant\frac{\mu_i}{\lambda_i}\leqslant\left(1+\frac{\|B\|}{|\lambda_i|}\right).\]

From the general theory of Toeplitz matrices with bounded symbols, we know that $\|R_n\| \leqslant \|g\|_{\infty}=1$, where $g$ is as defined in (\ref{eqn:Toef}). So for all $i$, $|\lambda_i|\leqslant\|R_nS\| = \|R_n\| \leqslant 1$. Also, $\prod_i \lambda_i=\det(R_n)~E/n^{1/2}$ as $n\rightarrow \infty$, so there exists $E'$ such that for sufficiently large $n$, $|\lambda_i|>E'/n^{1/2}$. Then taking the product over $i$:

\begin{equation}
\left(1-E'^{-1}\|B\|n^{1/2}\right)^n\leqslant\prod_{i=1}^n\frac{\mu_i}{\lambda_i}=\frac{\det(R_{n,N})}{\det(R_n)}\leqslant\left(1+E'^{-1}\|B\|n^{1/2}\right)^n
\label{eqn:detineq}
\end{equation}

Note that $B$ is another Toeplitz matrix, with entries $B_{ij}=b_{i-j+1}$, where $b_r=0$ for even $r$, and for odd $r$ :
 \[b_r=(-1)^{\frac{r-1}{2}}\frac{2}{\pi r}\left(\frac{\frac{\pi r}{N}}{\sin(\frac{\pi r}{N})}-1\right) \qquad r \text{ odd}\] 
To deal with the term in brackets, we define $h(x)=x/\sin x -1$. Note that since $h'(0)=0, \: h''(0)=1/3$ and $h''(x)$ is continuous in a region around zero, there exists $C>0$ and $\delta>0$ such that for all $x \in (-\delta,\delta)$:
 \[h(x)=\frac{x}{\sin(x)}-1=\frac{x^2}{6}+O(x^4)<Cx^2.\]

Then for all $i$ and $j$, and $N\geqslant \pi n/\delta$, we can bound $|B_{ij}|\leqslant |b_n|<2C\pi n/N^2$ so we can upper bound the operator norm $\|B\|_{\infty}$, by the Hilbert-Schmidt norm:
\[\|B\|_{\infty}\leqslant \|B\|_{2}=\sqrt{\sum_{i,j=1}^n |B_{ij}|^2}\leqslant\frac{2C\pi}{N^2}\sqrt{\sum_{i,j=1}^n n^2}=\frac{2C\pi n^2}{N^2}.\] 
So if $N=\Omega(n^{\alpha})$, for some $\alpha>7/4$, then there exists a constant $D>0$ such that
\[\left(1+E'^{-1}\|B\|n^{1/2}\right)^n \leqslant \left(1+\frac{Dn^{\frac{7}{2}-2\alpha}}{n}\right)^n \leqslant \exp(Dn^{\frac{7}{2}-2\alpha})\rightarrow 1 \text{ as } n \rightarrow \infty\]
where we have used the inequality $1+x\leqslant e^{x}$.  Similarly, we can use the inequality $1-x\geqslant e^{-2x}$ which is valid for $x \in [0, \epsilon]$ for some $\epsilon>0$. Then for $n$ sufficiently large
\[\left(1-E'^{-1}\|B\|n^{1/2}\right)^n \geqslant \left(1-\frac{Dn^{\frac{7}{2}-2\alpha}}{n}\right)^n \geqslant \exp(-2Dn^{\frac{7}{2}-2\alpha})\rightarrow 1 \text{ as } n \rightarrow \infty\]

Finally, combining these results into equation (\ref{eqn:detineq}), we get
\[\lim_{n\rightarrow \infty} \frac{\det(R_{n,N}) }{ \det(R_n)}=1. \]
\end{proof}

The following lemma summarises the above discussion:

\begin{replemma}{lem:xy}
Fix $N$ even but not a multiple of 4, and let $H = \sum_{i=1}^{N-1} (X_i X_{i+1} + Y_i Y_{i+1}) + X_1 X_N +Y_1 Y_N$. Then $H$ has a nondegenerate ground state $|\Omega\>$ and spectral gap $\Omega(1/N)$. Further, for any pair $i$, $j$ such that $|i-j| = n$ and $n = o(N^{4/7})$, $\<\Omega|X_{i}X_{j}+Y_{i}Y_{j}|\Omega\> = \Omega(n^{-1/2})$. There is an efficient classical algorithm to compute the spectral gap and all the correlation functions.
\end{replemma}

\section{Outlook}
\label{sec:outlook}

Although we have translated some previous results about QMA-completeness of the \lham\ problem closer to truly physically realistic systems, one significant issue remaining is the large weights required for the interactions in the hard instances. This requirement is a basic limitation imposed by our use of perturbation theory for gadgets. However, it would be very interesting if recently developed techniques proving QMA-hardness with lower-weight interactions~\cite{cao14,childs14,Childs2015} could be extended or combined with our results.

There are also some specific open questions left over. First, the natural cases of the general Heisenberg model on a square lattice, and the antiferromagnetic Heisenberg model on a triangular lattice, are still unresolved. Proving these cases QMA-complete could follow from finding an exactly solvable (but nontrivial) special case of the Heisenberg model whose interactions are suitably sparse. Second, there is scope for tightening our classification of interactions of the form $\alpha XX + \beta YY + \gamma ZZ$ with fixed signs. For many of the interactions of this form, we know the interaction is contained within StoqMA, but not whether it is StoqMA-complete, or within some smaller complexity class. A particularly natural example which is currently unknown is the {\em ferromagnetic} XY model, whose interactions are of the form $-XX-YY$. Third, we have not been able to resolve the complexity of every set of 2-qubit interactions with fixed signs. It remains to classify interactions with nonzero 1-local part, and the case of the \spham\ problem where $|\mathcal{S}| \geqslant 2$.

These open questions, along with our results described here, highlight that the \lham\ problem displays a rich and complex structure when restrictions on interaction signs and topologies are considered.

\subsection*{Acknowledgements}

SP was supported by the UK EPSRC. AM was supported by the UK EPSRC under Early Career Fellowship EP/L021005/1 and would like to thank Toby Cubitt for enlightening discussions on the topic of this paper. We would also like to thank Jacques Perk for pointing out references.

\bibliographystyle{plain}
\bibliography{MyLibrary}

\end{document}